\declaretheorem[numberwithin=section]{theorem}
\declaretheorem[sibling=theorem,style=definition]{definition}
\declaretheorem[sibling=theorem]{lemma}
\declaretheorem[sibling=theorem]{corollary}
\declaretheorem[sibling=theorem,style=definition]{remark}
\declaretheorem[sibling=theorem]{proposition}
\declaretheorem[sibling=theorem,style=definition]{example}
\renewcommand{\ALG@name}{Description}
\newcommand{\defeq}{\stackrel{\mathrm{def}}{=}}
\newcommand{\BigO}{\mathcal{O}}
\newcommand{\E}[1]{\mathds{E}\left[{#1}\right]}
\newcommand\Ps@textstyle[2]{\mathbb{P}_{#1}\left[{#2}\right]}
\newcommand\Es@textstyle[2]{\mathbb{E}_{#1}\left[{#2}\right]}
\newcommand\Ps[2]{%
  \mathchoice
  {\underset{{#1}}{\mathbb{P}}\left[{#2}\right]}
  {\Ps@textstyle{#1}{#2}}
  {\Ps@textstyle{#1}{#2}}
  {\Ps@textstyle{#1}{#2}}
}
\newcommand\Es[2]{%
  \mathchoice
  {\underset{{#1}}{\mathbb{E}}\left[{#2}\right]}
  {\Es@textstyle{#1}{#2}}{\Es@textstyle{#1}{#2}}{\Es@textstyle{#1}{#2}}
}
\patchcmd{\hyper@makecurrent}{%
    \ifx\Hy@param\Hy@chapterstring
        \let\Hy@param\Hy@chapapp
    \fi
}{%
    \iftoggle{inappendix}{%
        \@checkappendixparam{chapter}%
        \@checkappendixparam{section}%
        \@checkappendixparam{subsection}%
        \@checkappendixparam{subsubsection}%
        \@checkappendixparam{paragraph}%
        \@checkappendixparam{subparagraph}%
    }{}%
}{}{\errmessage{failed to patch}}
\newcommand*{\@checkappendixparam}[1]{%
    \def\@checkappendixparamtmp{#1}%
    \ifx\Hy@param\@checkappendixparamtmp
        \let\Hy@param\Hy@appendixstring
    \fi
}
\apptocmd{\appendix}{\toggletrue{inappendix}}{}{\errmessage{failed to patch}}
\newcommand{\Menu}{\mathcal{M}}
\newcommand{\D}{\mathcal{D}}
\renewcommand{\H}{\mathcal{H}}
\newcommand{\T}{\mathcal{T}}
\begin{document}
\title{
\vspace*{-3em}
Strategyproofness-Exposing Descriptions of Matching Mechanisms%
\thanks{%
An earlier version of this paper circulated under the name ``Strategyproofness-Exposing Mechanism Descriptions,'' a one-page abstract of which appeared in EC 2023. 
For helpful comments and discussions, the authors thank Itai Ashlagi, Tilman B{\"o}rgers, Eric Budish, Robert Day,
Benjamin Enke, Xavier Gabaix, Nicole Immorlica, Guy Ishai, David Laibson, Jacob Leshno, Irene Lo, Kevin Leyton-Brown, Shengwu Li, Paul Milgrom, Michael Ostrovsky, Assaf Romm, Al Roth, Fedor Sandomirskiy, Ilya Segal,
Ran Shorrer, Pete Troyan, Matt Weinberg, Leeat Yariv, Huacheng Yu, 
and seminar participants at 
HUJI, Cornell, Harvard, Stanford GSB, WALE 2022, Noam Nisan's 60th Birthday Conference, the INFORMS Workshop at EC 2022, MATCH-UP 2022, University of Michigan, INFORMS 2022, the 2022 NBER Market Design Meeting, University of Tokyo, UCI, UCSD, the Bellairs Workshop 2023, EC 2023, the 2023 WZB Matching Market Design conference, Microsoft Research, University of Kentucky, Northwestern, and RPI.
Zo{\"e} Hitzig provided valuable research assistance.
Gonczarowski gratefully acknowledges research support by the National Science Foundation (NSF-BSF grant No.\ 2343922), Harvard FAS Inequality in America Initiative, and Harvard FAS Dean’s Competitive Fund
for Promising Scholarship; part of his work was carried out while at Microsoft Research. 
Heffetz gratefully acknowledges research support by the Israel Science Foundation (grant No.\ 2968/21), US-Israel Binational Science Foundation (NSF-BSF grant No.\ 2023676), and Cornell's Johnson School; part of his work was carried out while visiting Princeton's School of Public and International Affairs. 
Thomas gratefully acknowledges the support of NSF CCF-1955205, a Wallace Memorial Fellowship in Engineering, and a Siebel Scholar award; parts of his work was carried out while at Princeton and Microsoft Research.
} 
}
\author{
\vspace{-0.2in}
  Yannai A. Gonczarowski\thanks{Department of Economics and Department of Computer Science, Harvard University | \emph{E-mail}: \href{mailto:yannai@gonch.name}{yannai@gonch.name}.}
  \and
  Ori Heffetz\thanks{Johnson Graduate School of Management, Cornell University, Bogen Department of Economics and Federmann Center for Rationality, The Hebrew University of Jerusalem, and NBER | \emph{E-mail}: \href{mailto:oh33@cornell.edu}{oh33@cornell.edu}.}
  \and
  Clayton Thomas\thanks{Yale University | \emph{E-mail}: \href{mailto:thomas.clay95@gmail.com}{thomas.clay95@gmail.com}.}
}

\date{October 8, 2025}

\maketitle

\vspace{-2.5em}

\begin{abstract}

A \emph{menu description} exposes strategyproofness by presenting a mechanism to player $i$ in two steps.
Step~(1) uses others' reports to describe $i$'s \emph{menu} of potential outcomes.
Step~(2) uses $i$'s report to select $i$'s favorite outcome from her menu.
We provide novel menu descriptions of the Deferred Acceptance (DA) and Top Trading Cycles (TTC) matching mechanisms.
For TTC, our description additionally yields a proof of the strategyproofness of TTC's traditional description, in a way that we prove is impossible for DA.

\end{abstract}

\thispagestyle{empty}

\clearpage

\setcounter{page}{1}

\section{Introduction}
\label{sec:introduction}

\vspace{-0.07in}

Strategyproof mechanisms are often considered desirable.
Under standard economic assumptions, these mechanisms eliminate the need for players to strategize, since straightforward play is a dominant strategy.\footnote{
  We use the term ``straightforward'' to describe the strategy an agent would play under classic economic assumptions. While often referred to as the ``truthtelling'' strategy, we avoid this morally laden term, since deviations from this strategy should not be thought of as dishonesty.
} 
In practice, however, real participants in strategyproof mechanisms often play theoretically dominated strategies, raising the possibility that they do not perceive the mechanisms as strategyproof (see, e.g., \citealt{HakimovK19, Rees-JonesS23}).

In this paper, we posit that the way mechanisms are \emph{described} can influence the extent to which participants perceive strategyproofness. 
In contrast to other recent works, which have sought to encourage straightforward play by implementing a given choice rule through different interactive mechanisms,\footnote{
  Prior works often consider interactive mechanisms designed to reduce non-straightforward play arising from behavioral factors, such as contingent-reasoning failures \citep{Li17,PyciaT21} and loss aversion \citep{DreyfussHR22,MeisnerW23}.
}
we propose changing only the (ex ante) description of a given static, direct-revelation mechanism.
We propose a general outline---called \emph{menu descriptions}---for describing mechanisms to one player at a time in a way that makes strategyproofness hold via an elementary, one-sentence argument. 
In this sense, menu descriptions {expose strategyproofness}.

Our focus is on matching, and particularly on two canonical mechanisms: Deferred Acceptance (henceforth DA) and Top Trading Cycles (henceforth TTC).
These mechanisms are widely and successfully deployed.
They are typically %
explained to participants using \emph{outcome descriptions}---i.e., detailed and explicit algorithms for calculating the outcome matching.
However, the traditional (outcome) descriptions of these mechanisms do not expose their strategyproofness,
in the sense that proving this property from these descriptions requires technical mathematical arguments.

We present three main results.
The first is a new menu description of DA.
The second is a new menu description of TTC, which furthermore yields a new proof of the strategyproofness of the traditional description of TTC.
The third is an impossibility result showing that such a proof via a menu description cannot work for the traditional description of DA.

As an initial illustration, consider the canonical Serial Dictatorship (henceforth SD) mechanism.
When matching applicants to institutions,\footnote{
  Throughout this paper, the only strategic players are the applicants.
  Institutions are non-strategic, and their preferences over the applicants are by convention called \emph{priorities}.
  \vspace{-0.1in}
}
the traditional description of SD is as follows: In some priority order, say $i=1,\ldots,n$, applicant $i$ is matched to her highest-ranked not-yet-matched institution.
Strategyproofness is exposed by this description:
Applicant $i$ cannot influence the set of not-yet-matched institutions, and straightforward reporting guarantees $i$ her favorite not-yet-matched institution.
Our paper presents new descriptions of DA and TTC that make strategyproofness as evident as in the traditional description of SD.

In \autoref{sec:prelims}, we provide preliminaries.
We study descriptions in terms of the classic notion of a \emph{menu} \citep{Hammond79}---the set of all institutions an applicant might match to, given others' reports.
In particular, a \emph{menu description} for applicant~$i$ has the following two-step outline:
\begin{itemize}[]
  \label{itemize:menu-description}
\vspace{-0.5em}
  \setlength{\itemsep}{0em}
  \item \textbf{Step (1)} uses only the reports of other applicants to describe $i$'s menu. 
  \item \textbf{Step (2)} says that $i$'s match is her highest-ranked institution from her menu.
\end{itemize}
\vspace{-0.5em}
For instance, the traditional description of SD is a menu description.
In contrast with some other mechanisms' traditional descriptions, strategyproofness is exposed by any menu description in the same way as in SD:
In Step~(1), applicant $i$ cannot influence her menu (in SD, the set of not-yet-matched institutions), and in Step~(2), straightforward reporting guarantees $i$ her favorite %
institution from her menu.

\begin{table}[b]
    \caption[Menu DA]{
      Two %
      descriptions of DA (the applicant-optimal stable match)
      }
    \label{fig:example-descriptions-DA}
    {\centering
    \begin{tabular}{|l|l|}
    \hline
        \begin{minipage}{0.25\textwidth}
        \raggedright
        \vspace{0.05in}
        \underline{Traditional Descr.\vphantom{p}:}

        The applicants and institutions will be matched using the \emph{applicant-proposing Deferred~Acceptance} algorithm. 
        \vspace{0.05in}
        \end{minipage}
         &
        \begin{minipage}{0.7\textwidth}
        \vspace{0.05in}
        \underline{Menu Description:}

        We will run \emph{institution-proposing Deferred Acceptance} with all applicants \emph{except you}, to obtain a hypothetical matching.
        Your menu consists of every institution that ranks you higher than its hypothetically matched applicant. 

        You will be matched to the institution that you \emph{ranked highest} out of your menu. 
        \vspace{0.05in}
        \end{minipage}
        \\    \hline
    \end{tabular}
    }
    
     {\footnotesize \textbf{Note:}
     In the menu description, others' hypothetical matches need not be their matches in DA. \par
     \par
     }
\end{table}

In \autoref{sec:parallen-hpda}, we present our first main result: A novel menu description of DA.
Our description---which is summarized in \autoref{fig:example-descriptions-DA}---describes applicant $i$'s menu as all institutions that prefer~$i$ to their outcome in ``flipped-side-proposing'' Deferred Acceptance without $i$.
This directly conveys $i$'s match in DA, while exposing strategyproofness for $i$. 
Prior to our work, it was not clear how to construct DA's menu, except via a trivial brute-force solution (applicable for any strategyproof mechanism) involving running the traditional description many times to separately check whether or not each institution is on $i$'s menu.

In \autoref{sec:ttc-individ-dict}, we present our second main result: A novel menu description of TTC, which furthermore yields a novel proof of the strategyproofness of TTC's traditional description.
Our menu description is contained in an outcome description.
We call any such description a 
\emph{menu-in-outcome} description; 
such descriptions have the following three-step outline:
\begin{enumerate}[]
\vspace{-0.5em}
  \itemsep0em
  \item \textbf{Step (1)} uses only the reports of other applicants to describe $i$'s menu. 
  \item \textbf{Step (2)} says that $i$'s match is her highest-ranked institution from her menu.
  \item \textbf{Step (3)} describes the rest of the matching (for all other applicants).
\end{enumerate}

Our description of TTC, and our resulting proof of the strategyproofness of the traditional description, are as follows.
TTC's traditional description works in terms of ``eliminating trading cycles,'' and it is well known that these cycles can be eliminated in any order. 
Our menu-in-outcome description is a slight tweak of the traditional one: 
It differs only by changing the order in which cycles are eliminated (by eliminating the cycle involving applicant $i$ as late as possible).
Thus, for any applicant $i$, the match of $i$ in the traditional description equals her match in our menu description---which (like all menu descriptions) is strategyproof.
This proves that TTC's traditional description is strategyproof.\footnote{
    \label{footnote:morrilr}
    Following the first appearance of our paper, the survey article of \cite{MorrillR24} adopted our proof of TTC's strategyproofness.
    Regarding the potential real-world adoption of TTC for public school choice, \citeauthor{MorrillR24} write: 
    \begin{quote}
    ``Our experience [\ldots]\ taught us that when we worked with school districts, we should help design not just a mechanism, but also the communication package that explained that mechanism [\ldots]. Perhaps if we had already known of the proof of [Gonczarowski, Heffetz, and Thomas] we could have explained [TTC's strategyproofness] more clearly.''
    \end{quote}
}

In \autoref{sec:mtr-matching}, we present our third main result. We ask: Like in our result for TTC, is there a menu description of DA that yields a proof of the strategyproofness of its traditional description through a slight tweak of it? 
We present an impossibility theorem showing that the answer is \emph{no}.

To establish a formal notion of a ``slight tweak,'' we consider three salient properties of DA's traditional description:
\begin{itemize}   %
  \vspace{-0.5em}
  \itemsep0em
  \item It calculates the entire outcome matching; i.e., it is an outcome description.
  \item It looks at applicants' preferences in favorite-to-least-favorite order; we say that such descriptions are \emph{applicant-proposing}.
  \item It maintains bookkeeping by only tracking and iteratively modifying a single tentative matching. 
  In particular, it uses only a small amount of bookkeeping per applicant; in computer science terms, such descriptions are \emph{linear-memory}.%
  \vspace{-0.5em}
\end{itemize}
We formalize our impossibility theorem by treating the satisfaction of these three properties as a prerequisite for a description to qualify as a slight tweak of DA's traditional description. Indeed, not satisfying the first property means reaching a different final result than the traditional description, and not satisfying the second or third properties means having step-by-step calculations that are very different from those of the traditional one.\footnote{
  For instance, our menu-in-outcome description of TTC (\autoref{sec:ttc-individ-dict}) is, like its traditional description, a linear-memory applicant-proposing outcome description.
  The same is not true for our menu description of DA (\autoref{sec:parallen-hpda}): It is \emph{institution}-proposing, and is \emph{not} an outcome description.
}

Our third main result proves that, in contrast to (SD and) TTC, no slight tweak of DA's traditional description contains a menu description.
Concretely, we prove that no menu-in-outcome description of DA is applicant-proposing and linear-memory.
In fact, we prove a strong impossibility: Any applicant-proposing menu-in-outcome description of DA must use \emph{quadratic} memory, an amount far greater than any description tracking only a single tentative matching.
Thus, our approach above that exposes the strategyproofness of TTC's traditional description cannot work for DA.

\begin{figure}[b!]
    \caption{Illustration of trichotomy for traditional descriptions of SD, TTC, and DA}
    \label{fig:tri-level-distinction}
    \centering
    \begin{minipage}[t]{0.28\textwidth}
        \includegraphics[width=\textwidth]{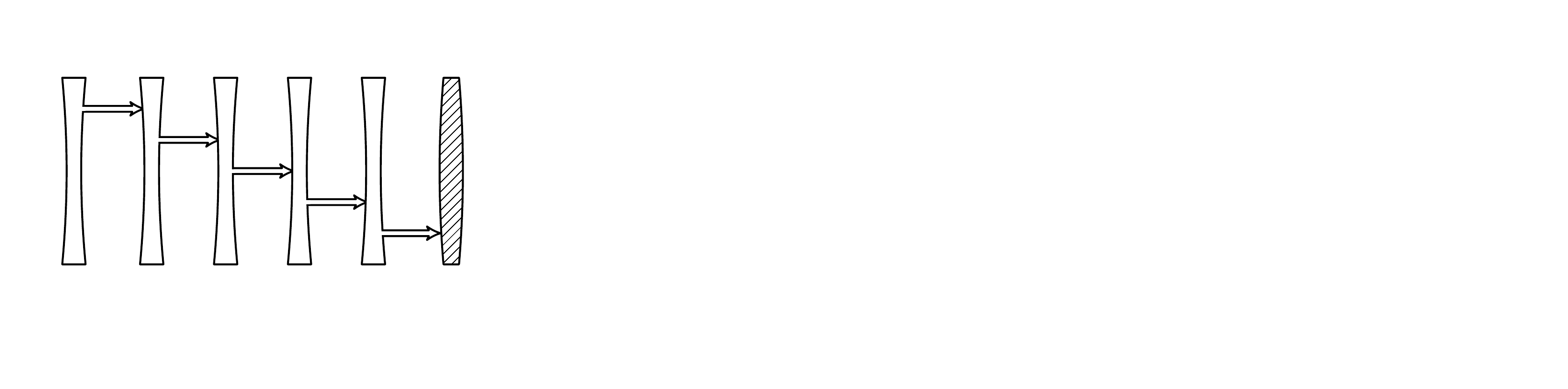}
        \subcaption{SD} %
        \label{fig:tri-SD}
    \end{minipage}
    \qquad
    \begin{minipage}[t]{0.28\textwidth}
        \includegraphics[width=\textwidth]{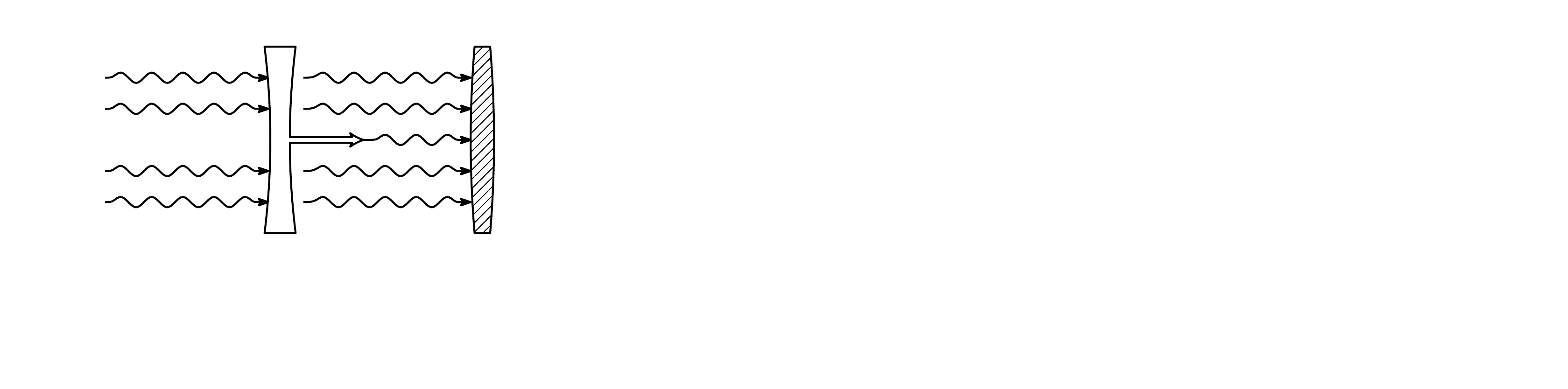}
        \subcaption{TTC} %
        \label{fig:tri-TTC}
    \end{minipage}
    \qquad
    \begin{minipage}[t]{0.28\textwidth}
        \includegraphics[width=\textwidth]{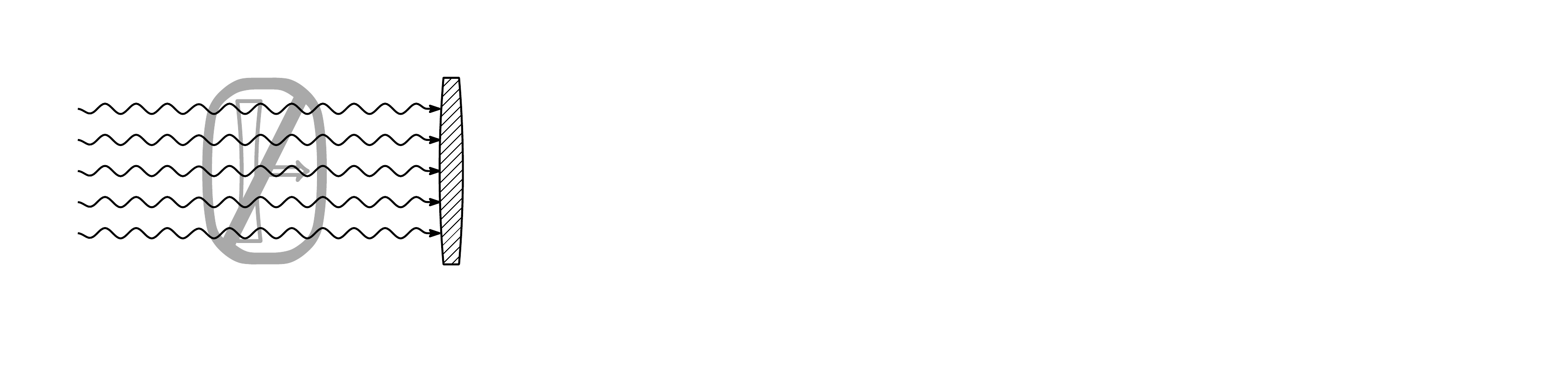}
        \subcaption{DA} %
        \label{fig:tri-DA}
    \end{minipage}
   
   \begin{minipage}[t]{\textwidth}
   { \footnotesize \textbf{Notes:}
   Each figure depicts an applicant-proposing and linear-memory outcome description, which progresses from left to right.
   The outcome matching is depicted as a shaded shape. 
   Step~(2) of a menu description (choice from a menu, which exposes strategyproofness for the chooser) is depicted as a white shape with an arrow.
   Other calculations are depicted as wavy arrows.
   The gray diagram in Panel (c) depicts the impossibility: DA cannot be described in finer detail as in Panels (a) and~(b).
   \par
   }
   \end{minipage}
\end{figure}

Our results reveal a stark trichotomy among SD, TTC, and DA.
The traditional description of SD already has each applicant choosing from her menu (see \autoref{fig:tri-SD} for an illustration), and thus exposes strategyproofness.
For the traditional description of TTC, this is not the case, but for each applicant there is a slight tweak in which she chooses from her menu (see \autoref{fig:tri-TTC}), which thus exposes strategyproofness.
For the traditional description of DA, a comparable result is impossible, in the strong sense discussed above (see \autoref{fig:tri-DA}).

\autoref{table:more-decomposed-results-table} summarizes our three main results discussed above.
In \autoref{sec:related-work}, we review related work, including our empirical companion paper investigating our menu description of DA, and our work exploring menu descriptions in non-matching settings such as auctions and voting.
We conclude in \autoref{sec:conclusion}, where we also discuss potential practical concerns.

\begin{table}[h!]
  \caption{Summary of main results}
  \label{table:more-decomposed-results-table}
  \vspace{-0.2in}
  \begingroup
  \begin{center}
  \center
  {\footnotesize
  \begin{tabular}{lllll}
  \toprule
  Main Result
  & Summary 
  & Relevant Formal Properties  
  \\  \midrule
  \makecell[l]{
    Positive result 
    \\ for \textbf{DA} 
    (Sec.\ \ref{sec:parallen-hpda}) }
  & \makecell[l]{
    \autoref{alg:Phpm} is a novel description of DA
    \\ (to one applicant at a time)
    \\ that exposes strategyproofness. }
  & \makecell[l]{
    \autoref{alg:Phpm} is a
    \\ menu description of DA.}
\\[2.5em]
  \makecell[l]{
    Positive result
    \\ for \textbf{TTC}
    (Sec.\ \ref{sec:ttc-individ-dict}) }
  & \makecell[l]{
    \autoref{alg:individ-dict-ttc} satisfies the above for TTC, \\ and additionally 
    yields a novel  
    \\ proof of the strategyproofness of
    \\ the traditional description of TTC. }
  & \makecell[l]{
    \autoref{alg:individ-dict-ttc} of TTC is a
    \\ linear-memory,
    \\ applicant-proposing 
    \\ menu-in-outcome description. } 
\\[2.5em]
  \makecell[l]{
    Negative result 
    \\ for \textbf{DA} 
    (Sec.\ \ref{sec:mtr-matching}) }
  & \makecell[l]{
    \autoref{thrm:MTR-DA-LB} shows it is impossible
    \\ to prove the strategyproofness of
    \\ the traditional description of DA
    \\ in the way we do for TTC (in Sec.\ \ref{sec:ttc-individ-dict}). }
  & \makecell[l]{
    \autoref{thrm:MTR-DA-LB} shows that for DA, 
    \\ any applicant-proposing 
    \\ menu-in-outcome description 
    \\ requires \emph{quadratic} memory.} 
    \\ \bottomrule
  \end{tabular}
  }
  \end{center}
\endgroup
\end{table}

\section{Preliminaries}
\label{sec:prelims}

\subsection{Mechanisms}
\label{sec:prelims-mechanisms}

This paper studies (static, direct-revelation) matching mechanisms.
This environment consists of $n$ applicants $\{1,\ldots,n\}$ to be matched to %
institutions.
Applicant $i$ has a strict ordinal \emph{preference} $\succ_i$ over institutions, also called her \emph{type}.
Let $\T_i$ denote the set of types of applicant $i$, and let $A$ denote the set of matchings.\footnote{
  Applicants may go unmatched, and their preference lists may be partial (indicating that they prefers to going unmatched over institutions not on her preference list).
  We also let $h_1 \succ_d h_2$ indicate that applicant $d$ prefers $h_1$ to $h_2$;
  $h_1 \succeq_d h_2$ indicate $h_1\succ_d h_2$ or $h_1 = h_2$;
  $\emptyset \succ_d h$ indicate that $d$ does not rank $h$;
  $\mu(d)$ denote the match of $d$ in $\mu$;
  $\mu(d)=\emptyset$ denote $d$ going unmatched;
  $\T_{-i}$ denote the set $\T_1\times\ldots\times\T_{i-1}\times\T_{i+1}\ldots\T_n$, 
  and for $\succ_i\in\T_i$ and $\succ_{-i}\in\T_{-i}$, we write $(\succ_i, \succ_{-i})$ for the naturally corresponding element of $\T_1\times\ldots\times\T_n$.
  }
We focus on one-to-one matching for concreteness (though our results, particularly for DA, generalize substantially; see \autoref{sec:parallen-hpda}).

The applicants \emph{report} their types to a mechanism, which determines the outcome matching. 
Formally, a \emph{mechanism}
is any mapping $f : \T_1\times\ldots\times\T_n \to A$ from the reported types of all applicants to a matching.
We focus on \emph{strategyproof} mechanisms. 
This means that for every applicant $i$, every $\succ_i, \succ_{i}'\in \T_i$, and every $\succ_{-i}\in\T_{-i}$, we have
$f_i(\succ_i, \succ_{-i}) \succeq_i f_i(\succ_i', \succ_{-i})$,
where $f_i(\succ_1,\ldots,\succ_n)$ denotes $i$'s match in $f(\succ_1,\ldots,\succ_n)$.

We study the canonical strategyproof mechanisms SD, TTC, and DA.
These mechanisms are defined with respect to \emph{priorities} of the institutions over the applicants.
Following much of the matching literature,
we assume the institutions are non-strategic; 
hence, we treat the priorities as predetermined.
SD uses a single priority order $\succ$ over all applicants; TTC and DA use a profile of priority orders $\{ \succ_h \}_h$, one for each institution $h$.

\begin{definition}[SD]
  \label{def:SD}
  For a given priority order $\succ$, Serial Dictatorship (SD) is defined as follows.
  Given the reports, applicants are considered in order of highest-to-lowest priority,  and each applicant is permanently matched to her highest-ranked not-yet-matched institution.
\end{definition}

\begin{definition}[TTC]
  \label{def:TTC}
  For a given profile of institutions' priorities $\{\succ_h\}_h$,
  Top Trading Cycles (TTC) is defined as follows.
  Given the reports, repeat the following until everyone is matched (or has exhausted their preference lists):
  Every remaining (i.e., currently unmatched) applicant ``points'' to her favorite remaining institution, and every remaining institution points to its highest priority remaining applicant.
  There must be some cycle in this directed graph (since there is only a finite number of vertices).
  Choose any such cycle and ``eliminate'' that cycle by permanently matching each applicant in the cycle to the institution she is pointing to (and removing all matched agents from consideration for later cycles).
\end{definition}

\begin{definition}[DA]
  \label{def:DA}
  For a given profile of institutions' priorities $\{\succ_h\}_h$,
  Deferred Acceptance (DA) is defined as follows.
  Given the reports, repeat the following until every applicant is matched (or has exhausted her preference list): 
  A currently unmatched applicant is chosen to ``{propose}'' to her favorite institution that has not yet ``{rejected}'' her.
  The institution then rejects every proposal except for its highest priority applicant who has proposed to it thus far.
  Rejected applicants become (currently) unmatched, while that highest priority applicant is tentatively matched to the institution. 
  At the end, the tentative allocation becomes final.
\end{definition}

Note that DA, by convention, refers to the above \emph{applicant} proposing version of the mechanism. The sides can also be flipped, which results in the \emph{institution} proposing variant of DA. When confusion might arise, we use APDA for applicant-proposing DA and IPDA for institution-proposing DA.

TTC and DA are the two canonical matching mechanisms that are priority-based and strategyproof.
TTC is Pareto-efficient for the applicants, and DA is stable.
Note that the outcomes of TTC is independent of the order in which cycles are eliminated (see \autoref{claim:TTC-order-independent}) and that DA is independent of the order of proposals (see \autoref{claimDpdaUniqueOutcome}).

\subsection{Descriptions}
\label{sec:prelims-descriptions}

This paper studies ex ante descriptions of matching mechanisms, i.e., descriptions that are given before any concrete inputs are known.
When matching markets are described in detail to participants, this is typically done by specifying a set of explicit, precise, step-by-step instructions for calculating the result, i.e., by specifying an algorithm.\footnote{
    Of course, the way a description/algorithm is actually conveyed to participants can vary.
    One common real-world approach to relaying matching algorithms is an illustrative video using an example (see, e.g., \autoref{fig:applicant-linear-graphics:trad-DA} in \autoref{sec:simple-descriptions-matching} for such a video for DA).
    In our paper, we abstract over exactly how
    the algorithm is relayed. }
Thus, we formally define a \emph{description} to be any algorithm that uses as input the preferences of the applicants and the priorities of the institutions, and calculates some result (e.g., an outcome matching).\footnote{
    Algorithms can be defined in full mathematical detail in various ways; any such definition suffices for our purposes.
    In the Supplemental \autoref{sec:model-descriptions}, we present such a definition from first principles.
}

For any mechanism $f$, an \emph{outcome descriptions} of $f$ is an algorithms that, using input ${\succ_1,\ldots,\succ_n}$ (and the priorities of institutions), outputs the outcome matching $f(\succ_1,\ldots,\allowbreak{\succ_n)}$. 
For instance, the descriptions in Definitions~\ref{def:SD} through~\ref{def:DA} are outcome descriptions for SD, TTC, and DA, respectively.
We refer to each of these outcome descriptions as the \emph{traditional description} of the corresponding mechanism.

Beyond outcome descriptions, we study two other description outlines: menu descriptions and menu-in-outcome descriptions; these are defined in Sections~\ref{sec:def-menu} and~\ref{sec:prelims-individ-dict},  respectively.

\subsection{Menus and Menu Descriptions}
\label{sec:def-menu}
\label{sec:def-menu-descr}

The starting point of our framework for changing mechanism descriptions is the following characterization of strategyproofness in terms of applicants' \emph{menus}.\footnote{
  \autoref{def:menu} has been considered under many different names in many different contexts (e.g., \emph{sets that decentralize the mechanism} in \cite{Hammond79}; \emph{option sets} in \cite{BarberaSZ91}; \emph{proper budget sets} in \cite{LeshnoL21}; \emph{feasible sets} in \cite{KatuscakK2020}; and likely others).
  We follow the ``economics and computation'' literature (e.g., \citealp{HartN17,Dobzinski16b,BabaioffGN17}) in calling these sets ``menus.''
  This notion is distinct from many other definitions of menus \citep[e.g., those of][and many others]{MackenzieZ20, BoH20}.
}

\begin{definition}[Menu]
  \label{def:menu}
  For any matching mechanism $f$, applicant $i$, and $\succ_{-i}\in\T_{-i}$, the \emph{menu} $\Menu_{\succ_{-i}}$ of $i$ with respect to $\succ_{-i}$ is the set of all institutions $h$ for which there exists some $\succ_i \in \T_i$
  such that $f_i(\succ_i, \succ_{-i}) = h$.
  That is,
  \[ \Menu_{\succ_{-i}} = 
    \left\{\ f_i(\succ_i, \succ_{-i}) \ \middle|\ \succ_i \in \T_i\  \right\} . \]
\end{definition}

\begin{theorem}[\citealp{Hammond79}]
  \label{thrm:TaxationPrinciple}
  A matching mechanism $f$ is strategyproof if and only if
  each applicant $i$ always receives her favorite
  institution from her menu.
  That is, for every $\succ_{-i}\in\T_{-i}$ and
  $\succ_i \in \T_i$, it holds that $f_i(\succ_i, \succ_{-i}) \succeq_i h$ for all $h \in
  \Menu_{\succ_{-i}}$.
\end{theorem}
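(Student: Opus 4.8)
The plan is to prove the two directions separately, as both follow almost immediately once the definition of the menu is unwound. The one piece of bookkeeping I would set up first is the observation that the preference relation $\succeq_i^{t_i}$, which is a priori defined on outcomes, descends to a well-defined relation on the set $A_i$ of $i$-outcomes: this is precisely because the $i$-outcomes are defined as the indifference classes of player $i$, so any two outcomes in the same class $x \in A_i$ are ranked identically against any fixed outcome. In particular, the expression $f(t_i, t_{-i}) \succeq_i^{t_i} x$ for $x \in A_i$ is meaningful, and it agrees with $f(t_i, t_{-i}) \succeq_i^{t_i} f(t_i', t_{-i})$ whenever $[f(t_i', t_{-i})]_i = x$. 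The second thing I would flag up front is that, by the very definition of $\Menu_{t_{-i}}$, the received $i$-outcome $[f(t_i, t_{-i})]_i$ always lies in $\Menu_{t_{-i}}$ (take the report to be $t_i$ itself); hence the stated condition says exactly that the received outcome is a maximal (``favorite'') element of the menu, and this maximum is attained.

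For the forward direction, I would assume $f$ is strategyproof, fix arbitrary $t_{-i} \in \T_{-i}$, $t_i \in \T_i$, and an arbitrary menu element $x \in \Menu_{t_{-i}}$. By \autoref{def:menu} there exists $t_i' \in \T_i$ with $[f(t_i', t_{-i})]_i = x$. Strategyproofness applied to the pair $(t_i, t_i')$ gives $f(t_i, t_{-i}) \succeq_i^{t_i} f(t_i', t_{-i})$, and since this preference depends only on the $i$-outcome of the right-hand side, I conclude $f(t_i, t_{-i}) \succeq_i^{t_i} x$, as required.

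For the reverse direction, I would assume the menu condition and fix arbitrary $t_i, t_i' \in \T_i$ and $t_{-i} \in \T_{-i}$; the goal is the strategyproofness inequality $f(t_i, t_{-i}) \succeq_i^{t_i} f(t_i', t_{-i})$. Setting $x = [f(t_i', t_{-i})]_i$, the definition of the menu again guarantees $x \in \Menu_{t_{-i}}$, so the menu condition applied to this $x$ yields $f(t_i, t_{-i}) \succeq_i^{t_i} x = [f(t_i', t_{-i})]_i$, which is the desired inequality. I do not expect any genuine obstacle here: the content is entirely definitional, and the only place requiring care is the equivalence-class formalism — namely checking that $\succeq_i^{t_i}$ is well-defined on $A_i$ and reading ``favorite'' as ``maximal element,'' with the attainment of the maximum following from the fact that $[f(t_i,t_{-i})]_i$ itself is always a member of the menu.
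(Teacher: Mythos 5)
Your proposal is correct and follows essentially the same route as the paper's proof: both directions are obtained by unwinding \autoref{def:menu} and matching each menu element $x$ with a witnessing report $t_i'$, exactly as the paper does. The extra care you take in checking that $\succeq_i^{t_i}$ descends to $i$-outcomes and that $[f(t_i,t_{-i})]_i$ itself lies on the menu is left implicit in the paper but is the same argument.
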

\begin{proof}
  Suppose $f$ is strategyproof and fix $\succ_{-i} \in \T_{-i}$.
  For every $\succ_i \in \T_{i}$, it holds by definition that for every $h = f_i(\succ_i',\succ_{-i}) \in \Menu_{\succ_{-i}}$, we have $f_i(\succ_i,\succ_{-i}) \succeq_i h$. 
  On the other hand, if applicant $i$ always receives her favorite institution from her menu, then she always prefers reporting $\succ_i$ at least as much as any $\succ_i'$, so $f$ is strategyproof.
\end{proof}

We use menus to describe mechanisms while exposing their strategyproofness:
\begin{definition}[Menu Description]
A \emph{menu description} of mechanism $f$ for applicant $i$ is a description %
with the following outline: 
\begin{enumerate}[]
    \item \textbf{Step\,(1)} uses only $\succ_{-i}\in\T_{-i}$ to calculate the menu $\Menu_{\succ_{-i}}$ of applicant $i$. %
    \item \textbf{Step\,(2)} uses $\succ_i\in\T_i$ to match applicant $i$ to her favorite institution in $\Menu_{\succ_{-i}}$.
\end{enumerate}
Formally, a menu description for $i$ is thus an algorithm that initially receives only $\succ_{-i}$ as input and calculates $\Menu_{\succ_{-i}}$ as an intermediate result, then additionally receives $\succ_i$ as input and uses it to calculate $i$'s favorite choice from $\Menu_{\succ_{-i}}$ as the final result.
\end{definition}

Note that while Step (1)---the calculation of $\Menu_{\succ_{-i}}$---varies between different mechanisms and menu descriptions, Step (2)---$i$'s choice from $\Menu_{\succ_{-i}}$---is essentially the same across all menu descriptions.

The central premise of our paper is that menu descriptions are one way to expose strategyproofness.
This is because the strategyproofness of any menu description can immediately be seen via a simple, one-sentence argument: First, applicant $i$'s report cannot affect her menu, and second, straightforward reporting (``truthtelling'') gets applicant $i$ her favorite institution from the menu.\footnote{
  \label{footnote:unique}
  There is also a precise sense in which menu descriptions are the \emph{only} ones for which the above argument for strategyproofness goes through. 
  In particular, suppose a description calculates the match of applicant $i$ in some mechanism $f$, and has the following outline:
    \begin{enumerate}[]
      \itemsep0em
      \item \textbf{Step\,(1)} uses $\succ_{-i}$ to calculate a set $S$ of institutions.
      \item \textbf{Step\,(2)} uses $\succ_i$ to match $i$ to her top-ranked institution in $S$.
    \end{enumerate}
  Then, it is not hard to show that the set $S$ must be $i$'s menu.
}

\subsection{Menu-in-Outcome Descriptions}
\label{sec:prelims-individ-dict}

Beyond menu descriptions, outcome descriptions that contain menu descriptions also play a key role in our results. 
We call these \emph{menu-in-outcome descriptions}.

\begin{definition}[Menu-in-Outcome Description]
A \emph{menu-in-outcome description} of mechanism $f$ for applicant $i$ is an outcome description of $f$ that contains a menu description for applicant $i$.
Equivalently, it is a description with the following outline:
\begin{enumerate}[]
\label{en:individ-dict-definition}
    \item \textbf{Step\,(1)} uses only $\succ_{-i}\in\T_{-i}$ to calculate the menu $\Menu_{\succ_{-i}}$ of applicant $i$. %
    \item \textbf{Step\,(2)} uses $\succ_i\in\T_i$ to match applicant $i$ to her favorite institution from $\Menu_{\succ_{-i}}$.
    \item \textbf{Step\,(3)} uses both $\succ_i$ and $\succ_{-i}$ to calculate the full outcome matching $f(\succ_i,\succ_{-i})$.
\end{enumerate}
Formally, a menu-in-outcome description for $i$ is thus an algorithm that initially receives $\succ_{-i}$ as input and calculates $\Menu_{\succ_{-i}}$, then additionally receives $\succ_i$ as input and calculates $i$'s favorite choice from $\Menu_{\succ_{-i}}$, and finally proceeds to calculate the entire outcome matching $f(\succ_i, \succ_{-i})$ as the final result.
\end{definition}

For example, consider SD (\autoref{def:SD}).
This mechanism is easily seen to be strategyproof, directly from its traditional description (and even for many students encountering it for the first time).
This is reflected by the fact that applicants are matched in SD via menu descriptions.
In particular, when applicants are prioritized $1 \succ 2 \succ \ldots \succ n$, the traditional description 
of SD can be divided into three steps for any applicant $i$: %
\begin{enumerate}[(1)]
  \itemsep0em
  \item Each applicant $j < i$ is matched, in order, to her top-ranked remaining institution.
  \item Applicant $i$ is matched to her top-ranked remaining institution.
  \item Each applicant $j > i$ is matched, in order, to her top-ranked remaining institution.
\end{enumerate}
Steps~(1) and~(2) form menu description, but this menu description is contained within the (traditional) outcome description, and thus Steps~(1) through~(3) form a menu-in-outcome description.

\subsection{Uses of Menu Descriptions}
\label{sec:prelims-brute-force-menu}

The positive results of our paper present new menu descriptions of DA and TTC.
Before giving these results, we note that \emph{every} strategyproof mechanism has a menu description, given by an argument from \citet{Hammond79}.\footnote{
    This menu description was also identified by \citet{KatuscakK2020}.
}
\begin{example}[A ``brute force'' menu description]
  \label{ex:generic-mtr-matching}
  Consider any strategyproof matching mechanism $f$, and let $D$ be an outcome description of $f$.
  For each institution $h$, let $\{h\}$ denote the preference list that ranks only $h$ as acceptable.
  Then, consider the following description for applicant $i$:
  \begin{enumerate}[(1)]
    \item %
    Start with $M=\emptyset$.
    For each institution $h$ separately, evaluate $D$ on $( \{h\}, \succ_{-i} )$; if this matches $i$ to $h$, then add $h$ to $M$.
    \item Match $i$ to her highest-ranked institution in $M$.
  \end{enumerate}
    By strategyproofness, $h$ is included in $M$ in Step~(1) if and only if $h$ is on the menu. Thus, the above provides a menu description of $f$. 
\end{example}

We explore two uses of menu descriptions:
as alternative standalone descriptions for participants,
and as a tool to help illustrate the strategyproofness of a traditional description.
A description such as \autoref{ex:generic-mtr-matching} has drawbacks in both of these uses.

First, as a standalone description, \autoref{ex:generic-mtr-matching} might---compared to the traditional descriptions presented in \autoref{sec:prelims-mechanisms}---be considered cumbersome or convoluted, since it repeats $D$ many independent times. 
Given this, we look for more-direct new menu descriptions (such as our menu description of DA in \autoref{sec:parallen-hpda}).

Second, since there is no clear relation between the outcomes of \autoref{ex:generic-mtr-matching} and those of the description $D$ (absent prior knowledge that $D$ is strategyproof), it is unclear how \autoref{ex:generic-mtr-matching} might aid in conveying the strategyproofness of $D$. 
Given this, we look for menu descriptions that are closely related to the corresponding traditional description (such as our menu-in-outcome description of TTC in \autoref{sec:ttc-individ-dict}).

\section{A Menu Description of DA}
\label{sec:parallen-hpda}

In this section, we present our first main result: A novel menu description of DA.
This is \autoref{alg:Phpm} (which rephrases \autoref{fig:example-descriptions-DA} in the introduction).

\begin{algorithm}[ht]
\caption{A menu description of (\emph{applicant}-proposing) DA for applicant $i$} 
  \label{alg:Phpm}

\begin{enumerate}[(1)]
  \item Run \emph{institution}-proposing DA
      with applicant $i$ removed from the market,
      to get a matching $\mu_{-i}$.
      Let $M$ be the set of institutions $h$ such that
      $i \succ_h \mu_{-i}(h)$.
  \item Match $i$ to $i$'s highest-ranked institution in $M$.
\end{enumerate}
\end{algorithm}

Step~(1) of \autoref{alg:Phpm} begins with a modified version of the traditional description of DA (from \autoref{def:DA}).
Then, it calculates $i$'s menu as an (arguably) intuitive set of institutions: those that prefer $i$ to their match in this modified version of DA.
We speculate that many real market participants would find such a description understandable. 
(In fact, our empirical companion paper \citet{GHIT} gives evidence that many lab participants can learn this description---see \autoref{sec:related-work}.)

Crucially, \autoref{alg:Phpm} uses the \emph{institution}-proposing DA algorithm to describe DA (traditionally described via the \emph{applicant}-proposing DA algorithm).
To give intuition for why the proposing side is flipped, we show via an example that switching the proposing sides in Step~(1) of \autoref{alg:Phpm} would not suffice.
\begin{example}
Consider a market with three applicants $i,d_1,d_2$ and two institutions $h_1,h_2$. Applicants have preferences $d_1 : h_1 \succ h_2$ and $d_2 : h_2 \succ h_1$, and institutions have priorities $h_1 : d_2 \succ i \succ d_1$ and $h_2 : d_1 \succ i \succ d_2$. Running \emph{applicant}-proposing DA on these preferences without $i$ gives matching $\{(d_1,h_1), (d_2, h_2)\}$, and both $h_1$ and $h_2$ prefer $i$ to their match. However, neither $h_1$ nor $h_2$ are on $i$'s menu, since having $i$ propose to any $h_i \in \{h_1,h_2\}$ (after running applicant-proposing DA without $i$) causes a ``rejection cycle'' that results in $h_i$ rejecting $i$. 
Intuitively, institution-proposing DA fixes this issue by outputting a matching that has no potential ``applicant-proposing rejection cycles.''\footnote{
 This intuition regarding ``applicant-proposing rejection cycles'' is related to the concept of an institution-improving rotation as in \citet{GusfieldI89}.
}
Specifically, institution-proposing DA gives matching $\{(d_1,h_2), (d_2, h_1)\}$, corresponding to $i$'s menu in this example being $\emptyset$.
\end{example}

Formally, the following theorem establishes the correctness of \autoref{alg:Phpm}:
\begin{theorem}[restate=restateThrmPhpmCorrectness, name=]
  \label{thrm:phpm-correctness}
  \autoref{alg:Phpm} is a menu description of DA.
\end{theorem}

\begin{proof}
Fix institutions' priorities, an applicant $i$, and preferences $\succ_{-i}$ of applicants other than $i$.
Let $\{h\}$ denote the preference list of $i$ that reports only institution $h$ as acceptable, and let $\emptyset$ denote the preference list of $i$ that reports \emph{no} institution as acceptable.
For clarity, denote applicant-proposing DA by $\mathit{APDA}(\cdot) = \mathit{DA}(\cdot)$ and denote institution-proposing DA by $\mathit{IPDA}(\cdot)$.

Since $\mathit{APDA}$ is strategyproof (\autoref{thrm:sp-traditional}), it suffices to prove that the set of institutions calculated in Step~(1) of \autoref{alg:Phpm} is applicant $i$'s menu.
Now, for any institution $h$, we observe the following chain of equivalences:
\begingroup
\allowdisplaybreaks
\begin{align*}
& 
\text{$h$ is in the menu of $i$ in $\mathit{APDA}$ (with respect to $\succ_{-i}$)} 
\\ & \qquad \Longleftrightarrow
  \text{\big(By strategyproofness of $\mathit{APDA}$; \autoref{thrm:sp-traditional}\big)} 
\\ &
\text{$i$ is matched to $h$ by $\mathit{APDA}(\{h\}, \succ_{-i})$}
\\ & \qquad \Longleftrightarrow
  \text{\big(By the Lone Wolf / Rural Hospitals Theorem; \autoref{thrmRural1to1}\big)}
\\ &
\text{$i$ is matched to $h$ by $\mathit{IPDA}(\{h\}, \succ_{-i})$}
\\ & \qquad \Longleftrightarrow
  \text{\big($\mathit{IPDA}(\{h\}, \succ_{-i})$ and $\mathit{IPDA}(\emptyset, \succ_{-i})$ coincide until $h$ proposes to $i$\big)}
\\ &
\text{$h$ proposes to $i$ in $\mathit{IPDA}(\emptyset, \succ_{-i})$}
\\ & \qquad \Longleftrightarrow
\text{\big($\mathit{IPDA}(\emptyset, \succ_{-i})$ and $\mathit{IPDA}(\succ_{-i})$ produce the same matching;
}
\\ & \qquad \phantom{\Longleftrightarrow} \ \ \
  \text{in $\mathit{IPDA}$, $h$ proposes in highest-to-lowest priority order\big)}
\\ &
\text{$i$ has higher priority at $h$ than $h$'s match in $\mathit{IPDA}(\succ_{-i})$.}\qedhere
\end{align*}
\endgroup

\end{proof}

In addition to giving a arguably-appealing alternative description of DA, \autoref{thrm:phpm-correctness} provides a characterization of the menu in DA that is useful for reasoning about DA's properties.
We briefly highlight two applications.
First, one can immediately see from \autoref{alg:Phpm} that, if one applicant's priorities increase at some set of institutions, then (all other things being equal) the match of that applicant in DA can only improve \citep{BalinskiS99}.
Second, a short argument using \autoref{alg:Phpm}, which we provide in \autoref{rem:menu-corollaries}, shows that in a market with $n\!+\!1$ applicants, $n$ institutions, and uniformly random full length preference lists, applicants receive in DA roughly their $n / \log(n)$th choice in expectation---rather lower than in the case with $n$ applicants, where they receive their $\log(n)$th choice---re-proving results from \cite{AshlagiKL17, CaiT22}.

\autoref{alg:Phpm} generalizes to a broader class of stable matching markets.
In fact, in \autoref{rem:contracts-menu}, we observe that the same argument as in the above proof shows that a natural generalization of \autoref{alg:Phpm}  characterizes the menu of DA in many-to-one markets, and even in a general class of markets with contracts, namely, those considered by \cite{HatfieldM05}.

Finally, we remark that \autoref{alg:Phpm} can facilitate a proof from first-principles of the strategyproofness of (traditionally described) DA (without relying on this fact as in the proof above).
We give such a proof in \autoref{sec:missing-proofs}.
While we view this proof as theoretically appealing, and perhaps useful for classroom instruction, we believe this approach remains far too mathematically involved to convey the strategyproofness of DA's traditional description to real-world participants. 
In contrast, if a clearinghouse directly adopts \autoref{alg:Phpm} as a way to describe participants' matches in explicit detail, then strategyproofness is exposed via a simple one-sentence argument.

\section{A Menu-in-Outcome Description of TTC}
\label{sec:ttc-individ-dict}

In this section, we present our second main result: A novel menu description of TTC. 
In fact, we present a menu-in-outcome description that yields a novel proof that the traditional description of TTC is strategyproof.
This is \autoref{alg:individ-dict-ttc}.

\begin{algorithm}[ht]
\caption{A menu-in-outcome description of TTC for applicant $i$} 
  \label{alg:individ-dict-ttc}

\begin{enumerate}[(1)]
  \item Using $\succ_{-i}$, iteratively eliminate as many cycles not involving applicant $i$ as possible.
  Let $M$ denote the set of remaining institutions. 
  \item Using $\succ_i$, match $i$ to her highest-ranked institution in $M$. Call this institution $h$.
  \item Using $(\succ_i, \succ_{-i})$, eliminate the cycle created when $i$ points to $h$,
  then continue to eliminate cycles until all applicants match (or exhaust their preference lists).
\end{enumerate}
\end{algorithm}

\autoref{alg:individ-dict-ttc} modifies the traditional description of TTC (only) by delaying matching applicant $i$ as long as possible.\footnote{
    This can also be thought of as running TTC, with a twist: During the first stage, applicant~$i$ does not point to any institutions. This stage lasts until no cycles exist, after which $i$ points as normal (and immediately matches to the institution she points to).} 
This accurately describes the full outcome matching since, as is well known, TTC is independent of the order in which cycles are chosen to be eliminated and matched. Formally:

\begin{theorem}
  \label{thrm:MTR-TTC-UB}
  \autoref{alg:individ-dict-ttc} is a menu-in-outcome description  of of TTC. 
\end{theorem}
\begin{proof}
  Recall that the traditional description of TTC is independent of the order in which cycles are eliminated (\autoref{claim:TTC-order-independent}).
  Now, consider modifying this traditional description by delaying matching cycles involving applicant $i$ as long as possible,
  and consider the pointing graph of TTC once all remaining cycles involve applicant~$i$.
  Observe that eliminating a cycle now requires matching $i$ to her highest-ranked not-yet-matched institution; see \autoref{fig:ttc-mtr-ub} for an illustration.
  Thus, \autoref{alg:individ-dict-ttc} differs from the traditional description of TTC only in the order in which cycles are eliminated, and hence calculates the TTC outcome matching.

\begin{figure}[hbt]
  \begin{minipage}[c]{0.42\textwidth}
    \vspace{-0.1in}
  \includegraphics[width=\textwidth]{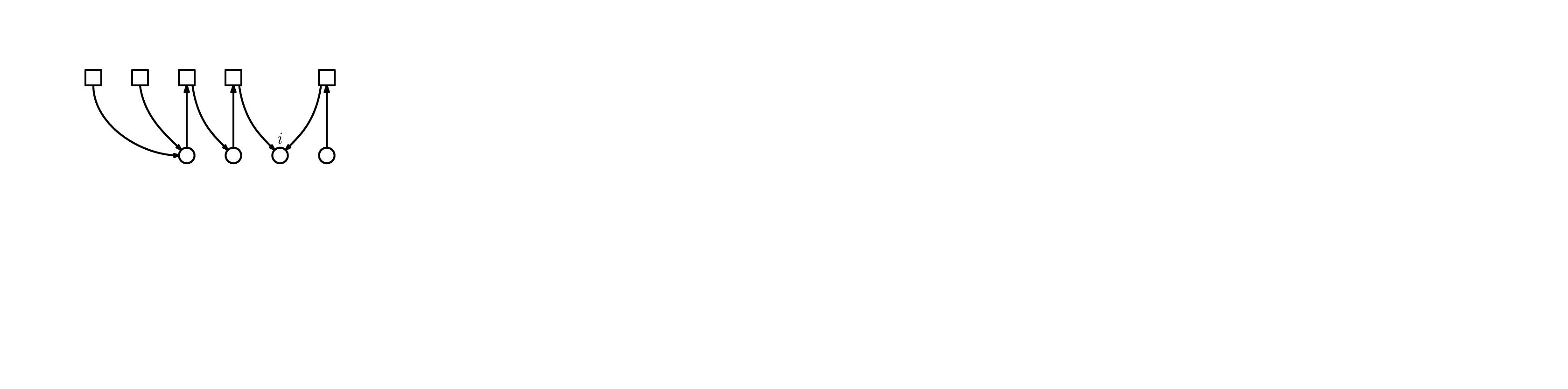}
  \end{minipage}
  \quad
  \begin{minipage}[c]{0.54\textwidth}

  \caption[TTC UB]{
    Menu calculation in \autoref{alg:individ-dict-ttc}

    \vspace{0.05in}

    {\footnotesize \textbf{Notes:}
    Circles represent applicants; squares represent institutions; each institution (resp.\ each applicant except $i$) %
    points to her favorite remaining applicant (resp.\ institution).
    Cycles not involving $i$ were already eliminated, so wherever $i$ points will form a cycle.
    \par }
  }
  \label{fig:ttc-mtr-ub}
    \end{minipage}

\end{figure}

  By construction, Step~(1) of \autoref{alg:individ-dict-ttc} does not use $\succ_i$ to calculate the set $M$.
  Thus, since $i$ is matched to her highest-ranked institution in $M$ in Step~(2), it follows that TTC is strategyproof.
  Moreover, since $i$ can match to any institution in $M$ (and only to institutions in $M$), it follows that $M$ equals $i$'s menu.\footnote{
    By the observation discussed in \autoref{footnote:unique}, the fact that $M$ equals $i$'s menu also follows from the fact that \autoref{alg:individ-dict-ttc} calculates the outcome matching of TTC as $i$'s top pick from $M$ (which is independent of $i$'s report).
  }
  Hence, \autoref{alg:individ-dict-ttc} is a menu-in-outcome description of TTC.
\end{proof}

In addition to constructing a new menu description of TTC, \autoref{thrm:MTR-TTC-UB} yields an (arguably simple and intuitive) proof that the traditional description of TTC is strategyproof.
In particular, \autoref{thrm:MTR-TTC-UB} demonstrates---given (only) the fact that TTC is independent of the order in which cycles are eliminated---that in the traditional description of TTC, any applicant $i$ is matched according to a menu description.
Hence, TTC is strategyproof.

The above simple proof is enabled by two properties of \autoref{alg:individ-dict-ttc}.
First, it contains a menu description.
Second, it only slightly tweaks TTC's traditional, outcome description (in a way that clearly maintains the same result).
Crucially, a description cannot satisfy these two properties without being an outcome description that contains a menu description, i.e., a menu-in-outcome description.

In sum, our description of TTC, and the simple argument for the strategyproofness it provides, give promising new ways to explain TTC's strategyproofness, both in the classroom and for real-world market participants.\footnote{See \autoref{footnote:morrilr}.}

\section{An Impossibility Result for Menu-in-Outcome Descriptions of DA}
\label{sec:mtr-matching}
\label{sec:applicant-linear-impossibility}

In this section, we present our third main result:
an impossibility theorem showing that---in contrast to what our menu-in-outcome description of TTC (\autoref{sec:ttc-individ-dict}) achieves---no menu description of DA yields a simple proof of the strategyproofness of DA's traditional description.

Concretely, we show that no slight tweak of DA's traditional description contains a menu description.
Here and throughout the paper, by ``slight tweak,'' we mean that the same result is reached, and that the step-by-step calculation is similar enough for this fact to be evident. 
We formalize this notion of slight tweaks in \autoref{sec:simple-descriptions-matching}.
We then present our impossibility theorem in \autoref{sec:DA-Impossibility}, showing that no such slight tweak of DA's traditional description contains a menu description
(and hence showing that slight tweaks cannot expose the strategyproofness of DA's traditional description in a way analogous to TTC in \autoref{sec:ttc-individ-dict}).

\subsection{Properties of Slight Tweaks of Traditional Descriptions}
\label{sec:simple-descriptions-matching}

We now identify two salient properties of the step-by-step calculations used in the traditional description of DA (and of SD, and of TTC):
\begin{itemize}
    \item First, the description only considers the preferences of each applicant once, in a specific, natural order---from favorite to least favorite.
    We call this property \emph{applicant-proposing}.
    \item Second, the description requires only a small amount of bookkeeping, namely, that required to track a single tentative matching. 
    We consider a flexible generalization of this property: that the description uses only a small (nearly constant, formalized below) amount of bookkeeping per applicant.
    Following standard terminology from computer science, we call this property \emph{linear-memory}.
\end{itemize}

\begin{figure}[b!]
    \caption{An illustration of the traditional description of DA through an example}
      \label{fig:applicant-linear-graphics}
      \label{fig:applicant-linear-graphics:trad-DA}
    
    \begin{center}
    \centering
        \includegraphics[width=0.75\textwidth]{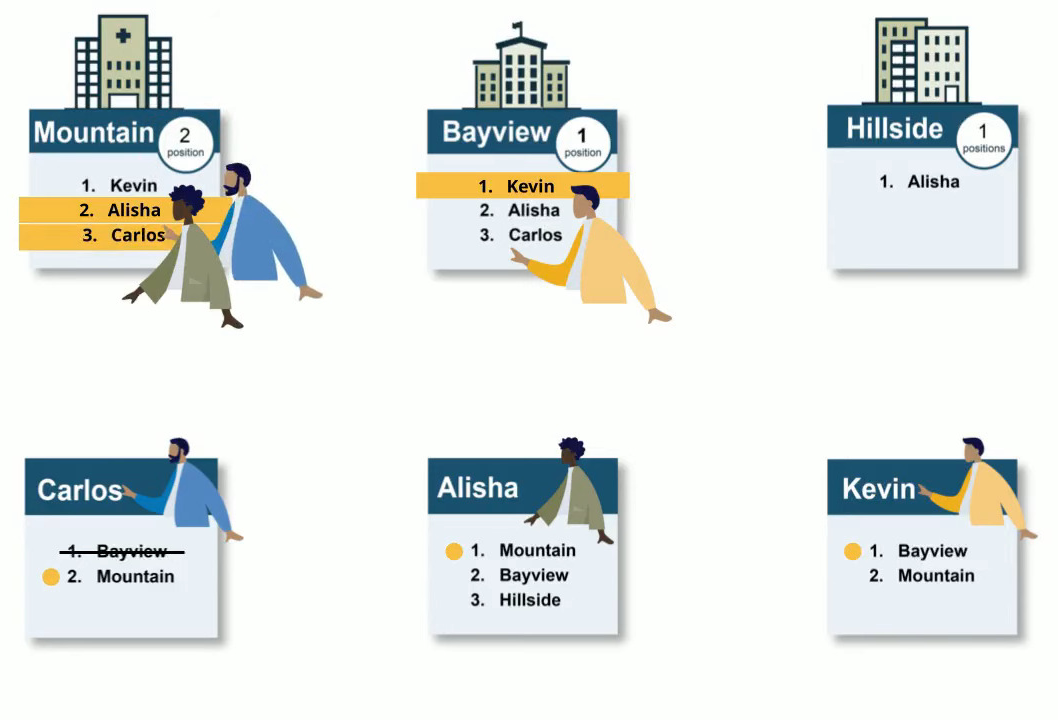}
    \end{center}
    
           \vspace{-0.3in}
      {\footnotesize \textbf{Note:} 
        Screenshot taken from \url{https://youtu.be/kVTwXNawpbk} \citep{MatchingExplained20}, a video produced by National Matching Services (the company providing matching software to the NRMP).
      \par}      
\end{figure}

Before formally defining these two properties, 
\autoref{fig:applicant-linear-graphics:trad-DA} illustrates how they are used to describe DA in one of its most celebrated practical applications: matching medical doctors to residencies in the US National Resident Matching Program (NRMP). 
The figure shows a screenshot of a video from the NRMP that relays the traditional description of DA by applying it to a small example.
The explanation in the video is aided by two visual elements: sequentially crossing off institutions from applicants' lists as the description progresses, and keeping track of a ``current tentative matching'' illustrated by the yellow-highlighted names.
In order for these two simple visual elements to illustrate the description, it is necessary that the description is applicant-proposing and linear-memory.
First, the applicant-proposing property is necessary for the video to sequentially cross off institutions from applicants' lists as the description progresses.
Second, the linear-memory property is necessary for the yellow highlighting in the video to capture all required bookkeeping.

\begin{definition}[Applicant-proposing and Linear-memory Descriptions]\hspace{-0.4em}\footnote{
  As discussed in \autoref{sec:prelims-descriptions}, we formally define descriptions to be algorithms.
  For completeness, the Supplemental \autoref{sec:model-descriptions} gives a self-contained mathematical definition of algorithms sufficient for our purposes.
}\leavevmode

  \label{def:preference-read-once}
  \label{def:preference-simple}
  \label{def:preference-linear}
  \begin{itemize}
    \item A description is \emph{applicant-proposing} if it reads applicants' preferences by querying a single applicant at a time, such that the $j$\textsuperscript{th} query to applicant $d$ returns the $j$\textsuperscript{th} institution on $d$'s preference list.
    (The priorities of the institutions, on the other hand, can be accessed by the description in any way.)
  
    Formally, an applicant-proposing description is thus a procedure that maintains some internal \emph{state} that is iteratively updated while querying applicants' preference lists (one applicant at a time), with the following property.
    For any possible inputs and for any applicant $d$, suppose the algorithm queries $d$'s preference list sequentially in states $s_1,s_2,\ldots,s_k$ as it runs, and for each $j=1,\ldots,k$, let $s_j'$ denote the updated state that the algorithm reaches immediately after querying $d$'s preferences in $s_j$.
    Then, $s_j'$ depends only on ($s_j$ and) the $j$\textsuperscript{th} institution on $d$'s preference list (which is considered to be the ``empty institution'' if $d$'s list contains fewer than $j$ institutions).

    \item The \emph{memory requirement} of a description is the number of bits required to represent the state of the description.
    Intuitively, this is the amount of extra bookkeeping or ``scratch paper'' required by the description.
    Formally, %
    it is the logarithm in base~$2$ of the number of possible internal states of the algorithm.

    In a matching environment with $n$ applicants and $n$ institutions, we say a description is \emph{linear-memory} if its memory requirement is  at most $\widetilde\BigO(n)$.\footnote{
        The standard computer-science notation $\widetilde\BigO(n)$ means $\BigO(n \log^\alpha n)$ for some constant
        $\alpha$. That is, for large enough $n$, memory is upper-bounded by $cn \log^\alpha n$ for some constants $c,\alpha$ that do not depend on $n$. 
        Using $\widetilde\BigO(n)$ memory means using only nearly constant bookkeeping per applicant.
        }
  \end{itemize}
\end{definition}

We note that linear memory is the smallest possible memory requirement for outcome descriptions (as well as for menu descriptions) of matching mechanisms.
Indeed, $\widetilde\BigO(n)$ is exactly (up to the precise logarithmic factors) the number of bits of memory required to describe a single matching (or a single applicant's menu).\footnote{
  To see this formally, note that there are $n! = 2^{O(n \log n)}$ distinct matchings involving $n$ applicants and $n$ institutions (and exactly $2^n$ possible menus). 
  Intuitively, this means that the number of letters it takes to write down a single matching with $n$ applicants and $n$ institutions (or a single menu, i.e., a subset of the $n$ institutions) is roughly proportional to $n$.
}

The applicant-proposing and linear-memory properties capture salient properties of the traditional descriptions of many matching mechanisms. As discussed above, this includes DA, but also includes SD, and TTC.\footnote{
  These properties are additionally satisfied by the popular non-strategyproof Boston mechanism
  (see, e.g., \citealp{AbdulkadirougluCYY11}).
} In particular:
\begin{itemize}
  \item In the traditional description of SD, the linear memory stores a set $S$ of not-yet-matched institutions.
  The applicant-proposing property enables the description to choose an applicant's highest-ranked institution in $S$ by reading the applicant's preference list until the first institution in $S$ is found.
  \item In the traditional description of TTC, the linear memory stores the set $S$ of not-yet-matched institutions, and a pointing graph in which some applicants point to their top-ranked institution in $S$.
  The applicant-proposing property enables the description to update an applicant's pointing edge by reading her list further down to the highest-ranked institution remaining in $S$.
\end{itemize}

Even beyond permitting these diverse traditional descriptions, the applicant-proposing and linear-memory properties
are quite flexible.
The linear-memory requirement allows for arbitrary calculations or data structures, so long as a small amount of bookkeeping per-applicant is used.
Additionally, applicant-proposing linear-memory descriptions permit many variations to the order in which applicant preferences are used by the description;
for instance, the description could query and remember one institution from \emph{each} applicant's
preference list---or could query and remember one applicant's \emph{entire} preference list.\footnote{
  We also note that if no memory requirement is considered, then \emph{every} algorithm can be implemented as an applicant-proposing description, by reading all applicants' preference lists and storing them fully in the bookkeeping, and then finally running any algorithm on these preference lists.
  See also the discussion regarding \autoref{fig:applicant-linear-graphics:memory} below.
}

Given the above, any description retaining sufficiently similar step-by-step calculations to the traditional description of DA (or SD or TTC) must, at the very least, maintain the applicant-proposing and linear-memory properties.
Slight tweaks of the traditional description of DA should retain similar step-by-step calculations, and should also calculate the same result as the traditional description, that is, be \emph{outcome descriptions}.

Overall, we thus take the view that all slight tweaks of the traditional description of DA should share three formal properties: applicant-proposing, linear-memory, and being an outcome description.\footnote{
  Note that we do not view \emph{every} description satisfying these properties as a slight tweak of a traditional one.
  Instead, we take (only) the stance that these are necessary conditions that all slight tweaks satisfy.
}
For an example for TTC, our menu-in-outcome description is a slight tweak of the traditional description; as a consequence, we have:

\begin{proposition}
\label{cor:formal-sd-ttc-id}
  \autoref{alg:individ-dict-ttc} is an applicant-proposing linear-memory menu-in-outcome descriptions of TTC. 
\end{proposition}

\subsection{Impossibility Theorem}
\label{sec:DA-Impossibility}

We now present our main impossibility result. 
Using the properties discussed in \autoref{sec:simple-descriptions-matching}---applicant-proposing, linear-memory, and being an outcome description---we prove that no slight tweak of the traditional description of DA contains a menu description.

\begin{theorem}[restate=restateThrmMtrDaLb, name=]
  \label{thrm:MTR-DA-LB}
  DA has no applicant-proposing, linear-memory, menu-in-outcome description. %
  In fact, with $n$ applicants and $n$ institutions, any applicant-proposing menu-in-outcome description of DA requires $\Omega(n^2)$ memory.\footnote{
    The standard computer-science notation $\Omega(n^2)$ means that, for large enough $n$, memory is lower-bounded by $cn^2$ for some constant $c$ that does not depend on $n$.}
\end{theorem}

We prove \autoref{thrm:MTR-DA-LB} in \autoref{sec:proof-of-MTR-DA-LB} below.
The theorem shows a precise sense in which slight tweaks of DA's traditional description cannot expose its strategyproofness via a menu description. 
This is in sharp contrast to TTC, which (in the same sense) has a  slight tweak that exposes strategyproofness as shown in \autoref{sec:ttc-individ-dict}, and in contrast to SD, whose traditional description already exposes strategyproofness.

\autoref{thrm:MTR-DA-LB} is a strong impossibility result.
Namely, we show that applicant-proposing menu-in-outcome description of DA require \emph{quadratic} memory---$\Omega(n^2)$ bits.
This nearly matches the memory requirement of simply memorizing all applicants' preferences---$\widetilde\BigO(n^2)$ bits.\footnote{
  To see this formally, observe that there are $(n!)^n = 2^{O(n^2 \log(n))}$ possible preference profiles for all applicants. 
  Intuitively, this means that the number of letters it takes to write down $n$ applicants' preferences over all $n$ institutions is roughly proportional to $n^2$.
}
If an applicant-proposing description memorizes all applicants' preferences,  then it can calculate \emph{any} desired result (formally, by querying each applicant's entire preference list in order, with a separate state of the algorithm's memory for each possible preference profile, and returning a separate desired result for each such state).
This shows that quadratic memory is the highest possible amount of memory that an algorithm might require. 
Thus, where applicant-proposing menu-in-outcome description of (SD and) TTC use memory as \emph{low} as possible (linear, see \autoref{sec:simple-descriptions-matching}), for DA the memory requirement is as \emph{high} as possible (quadratic).
See \autoref{fig:applicant-linear-graphics:memory} for an illustration of the qualitative gap between these two memory requirements.

\begin{figure}[htb]
    \caption{Linear versus quadratic memory}
    \label{fig:applicant-linear-graphics:memory}
    \begin{center}
    \centering
        \includegraphics[width=0.9\textwidth]{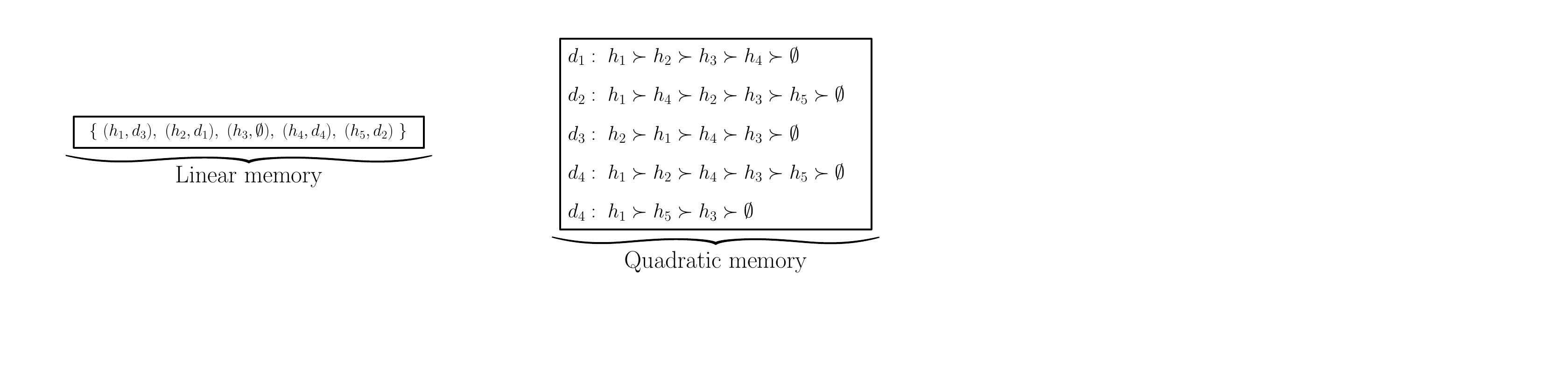}
    \end{center}
\end{figure}

\autoref{thrm:MTR-DA-LB} is also tight in the following sense.
The theorem shows that descriptions of DA cannot simultaneously satisfy four criteria: being an outcome description, containing a menu description, being applicant-proposing, and using linear-memory.
The impossibility holds only when all four of these criteria are assumed.
We establish this as follows.
First, DA's traditional description is an applicant-proposing, linear-memory outcome description.
Second, DA has an applicant-proposing \emph{quadratic memory} menu-in-outcome description, since (as discussed above) quadratic-memory is as high as possible.
Third and fourth, we show in 
the Supplemental \autoref{sec:main-body-delicate-DA-algs} 
that DA has an applicant-proposing linear-memory menu description, and a linear-memory menu-in-outcome description that is not applicant-proposing.\footnote{
    \autoref{sec:DPDA-Menu} constructs an applicant-proposing linear-memory menu description of DA, and \autoref{sec:DPDA-AKL-Alg} constructs an \emph{institution}-proposing linear-memory menu-in-outcome description of DA.
}
Hence, \autoref{thrm:MTR-DA-LB} captures the complexity of DA in our framework very precisely.

All told, our results establish a stark trichotomy---mentioned in the introduction---between SD, TTC, and DA.
The traditional description of SD is already a menu description, simultaneously for all applicants, exposing its strategyproofness easily.\footnote{
  SD has an (S)OSP implementation \citep{Li17,PyciaT21} for a similar reason.
}
The traditional description of TTC does not expose strategyproofness; however, once this description is slightly tweaked and specialized to each individual applicant, strategyproofness is exposed easily.\footnote{
  One can show that if a mechanism is not OSP-implementable---as is the case for TTC \citep{Li17}---then any description of the mechanism \emph{must} be specialized to a given applicant $i$ in order to contain a menu description for $i$.
  In \autoref{rem:ttc-order-specialized} we give a short direct proof that TTC's order requires such specialization. 
}
For DA, in contrast with both other mechanisms, no small tweak of the traditional description suffices to expose strategyproofness using a menu, in the robust and strong sense provided by \autoref{thrm:MTR-DA-LB}.

\section{Related work}
\label{sec:related-work}

Our paper is most directly inspired by the contemporary ``strategic simplicity'' program in mechanism design theory, which largely considers different dynamic implementations of mechanisms. 
A cornerstone of this literature is \citet{Li17}, which introduces obviously strategyproof (OSP) mechanisms as a way to expose strategyproofness.
Unfortunately,  TTC \citep{Li17} and DA \citep{AshlagiG18} do not have OSP mechanisms (except in rare special cases of institutions' priorities; see \citealp{Troyan19, MandalR21, Thomas21}).\footnote{
A different line of work also considers notions of strategic simplicity that are weaker than strategyproofness \citep{BorgersL19,Fernandez20,TroyanM20,Chen21,Mennle21}.
}

In contrast to the above literature, we consider different ex ante descriptions of (static, direct-revelation) mechanisms. 
\citet{BreitmoserS19} provide empirical evidence that framing a static auction as an OSP (ascending-clock) auction can be effective towards conveying strategyproofness.
Since DA and TTC do not have OSP implementations, they cannot be framed in this way. 
Nonetheless, by relaying the match of only a single applicant at a time, menu descriptions frame the mechanism in a way that is OSP for that applicant (and in fact \emph{strongly} OSP; \citealt{PyciaT21}).

The experimental paper of \citet{KatuscakK2020} also suggests describing matching mechanisms to participants via menu descriptions, but does not investigate any menu description beyond that of \autoref{ex:generic-mtr-matching}, which essentially calculates the menu by iterating over all possible reports and running the traditional mechanism description each time.

We are not aware of any prior characterizations of the menu in DA. 
Our characterization builds on a large literature developing techniques for reasoning about stable matchings.\footnote{
  In particular, our proof of \autoref{thrm:phpm-correctness} in \autoref{sec:missing-proofs} analyzes DA by incrementally modifying preference lists. 
  Similar techniques appear in \cite{GaleS85, TeoStableRevisited01, ImmorlicaM05, HatfieldM05, Gonczarowski14, AshlagiKL17, CaiT22}, for example.
  Our proof of \autoref{thrm:phpm-correctness} in \autoref{sec:parallen-hpda} uses the strategyproofness of DA; to our knowledge, this is a fairly novel technique.
  Certain other properties of DA \cite[e.g., in][]{BlumRR97,Adachi00} and of unit-demand auctions  \citep[e.g., in][]{GulS00, AlaeiKM16}, despite not being studied with relation to menus, bear some technical similarity to the menu calculation in \autoref{alg:Phpm}. However, the proofs seem unrelated.
}
The menu in DA is different than other commonly considered definitions in the theory of stable matching, such as applicant $i$'s set of stable partners \citep{GaleS62} or her budget set of institutions $h$ where she is above $h$'s cutoff \citep{Segal07, AzevedoL16, Luflade17, AzevedoB19, ImmorlicaLLL20}.
In particular, in finite matching markets, these other commonly-considered sets depend on applicant $i$'s report, and hence do not equal $i$'s menu.
We provide explicit examples and more discussion in \autoref{rem:budget-set-vs-menu} and \autoref{rem:stable-set-vs-menu}.

Proposition 2 in \citet{LeshnoL21} characterizes the menu in TTC in a different way from our \autoref{alg:individ-dict-ttc}. 
Their characterization does not give a menu-in-outcome description for TTC, and hence cannot be used in the same way as \autoref{alg:individ-dict-ttc} to derive a simple proof of the strategyproofness of TTC's traditional description.

Our paper is also loosely inspired by the literature within computer science studying menus. These works largely focus on single-player selling mechanisms 
\citep[e.g.,][]{HartN13,DaskalakisDT17,BabaioffGN17,SaxenaSW18,Gonczarowski18}.\footnote{
  \cite{BranzeiP15, GolowichL21} study the computational complexity of checking whether a mechanism, given its extensive- or normal-form representation, is strategyproof.
}
Papers considering menus in multi-player mechanisms include  \citet{Dobzinski16b} and \cite{DobzinskiRV22},
who use menus as a tool for 
bounding communication complexity.
We do not know of any prior algorithmic work on menus of matching mechanisms, nor of any prior work that analyzes different ways to {describe} multi-player mechanisms in terms of menus.

The present paper is part of our broader research agenda. 
In an earlier working paper version %
of the present article,\footnote{
For this earlier version, see \url{https://arxiv.org/abs/2209.13148v2}.
} we consider more general environments, study a basic extension of our theory for auctions, and conduct an experiment for a second-price auction and median voting.
The theoretical computer science paper \cite{GonczarowskiT24} investigates a number of complexity questions related to our three main results.

Most relevantly, the empirical companion paper \citet*{GHIT} investigates participants' responses to different descriptions of DA, including the traditional one and \autoref{alg:Phpm} (our menu description).
We find evidence that, while \autoref{alg:Phpm} is more complex for participants to understand than the traditional one, many participants can understand \autoref{alg:Phpm} and calculate its outcomes.
Interestingly, while levels of strategyproofness-understanding are similar under both descriptions of DA, 
we see very high levels of strategyproofness-understanding under a less-complex, stripped-down menu description that omits the details of how the menu is calculated. 
This stripped-down menu description---which relays \emph{only} strategyproofness---yields levels of strategyproofness-understanding well above a zero-information treatment benchmark, and even higher than a description relaying strategyproofness that is inspired by textbook definitions of strategyproofness.
For real-world descriptions of DA, this may suggest complementing \autoref{alg:Phpm} with a stripped-down summary focusing on the properties important for strategyproofness.\footnote{
  \citet{KatuscakK2020} show the promise of a description (in their case, the description in \autoref{ex:generic-mtr-matching} for TTC) complemented with a stripped-down summary.}

\section{Discussion}
\label{sec:conclusion}
Strategyproofness has long been proposed as a way to make mechanisms fair by leveling the playing field for players who do not strategize well \citep{PathakS08}. 
We warmly embrace this agenda. 
However, we observe that if participants do not all \emph{understand} that the mechanism is strategyproof, then disparities may remain. 
Menu descriptions may improve this understanding.
They relay ex ante how participants' matches will be calculated while ensuring that strategyproofness follows via a simple argument,
offering an alternative to status-quo tactics such as appeals to authority, asserting that the mechanism is strategyproof.\footnote{
  One common prior approach taken by clearinghouses is to encourage straightforward reporting without explaining strategyproofness.
  For example, \citet{DreyfussHR22} notes that an informative video by the National Resident Matching Program (NRMP) was formerly introduced with the text:
\begin{displayquote}
Research on the algorithm was the basis for awarding the 2012 Nobel Prize in Economic Sciences. To make the matching algorithm work best for you, create your rank order list in order of your true preferences, not how you think you will match.
\end{displayquote} \vspace{0em}
}

While menu descriptions expose strategyproofness, they may obscure other properties of the mechanism.
For example, since \autoref{alg:Phpm} (our menu description of DA) relays each applicant's match separately,
it is unclear why this description always produces a feasible (one-to-one) matching.%
\footnote{
  While traditional mechanism descriptions require participants to trust the description (as noted in, e.g., \citealp{AkbarpourL20}), 
  the fact that menu descriptions obscure feasibility may influence some participants' levels of trust. 
  While our work focuses on understanding, trust may be an interesting direction for future theoretical or empirical work.
}
In contrast, in DA's traditional description, feasibility of the outcome matching is clear, but strategyproofness is not exposed. 
\autoref{alg:individ-dict-ttc} (our menu-in-outcome description of TTC) might be used to simultaneously expose strategyproofness and make feasibility clear. 
Future empirical work may present TTC to lab participants using our \autoref{alg:individ-dict-ttc}---or use this description to explain the strategyproofness of TTC's traditional description (as advocated by \citealp{MorrillR24} for real-world participants)---and measure participants' understanding of both strategyproofness and feasibility.

In this paper and its experimental companion \citep*{GHIT}, 
we suggest that some principled alternative framings of mechanisms (namely, menu descriptions) might better convey their properties (namely, strategyproofness), and we analyze such framings theoretically and empirically.
We view this general premise---of reasoning about different descriptions (of the same mechanism) that expose different properties---as being of potential broader use.
Future theoretical work might consider other properties one may wish to expose (e.g., fairness or optimality) 
and study opportunities and tradeoffs for exposing these properties using different descriptions in a variety of mechanisms and settings.

\bibliography{Bib}

\begin{thebibliography}{67}
\providecommand{\natexlab}[1]{#1}
\providecommand{\url}[1]{\texttt{#1}}
\expandafter\ifx\csname urlstyle\endcsname\relax
  \providecommand{\doi}[1]{doi: #1}\else
  \providecommand{\doi}{doi: \begingroup \urlstyle{rm}\Url}\fi

\bibitem[Abdulkadiro{\u{g}}lu et~al.(2011)Abdulkadiro{\u{g}}lu, Che, and
  Yasuda]{AbdulkadirougluCYY11}
A.~Abdulkadiro{\u{g}}lu, Y.-K. Che, and Y.~Yasuda.
\newblock Resolving conflicting preferences in school choice: The ``{B}oston
  mechanism'' reconsidered.
\newblock \emph{American Economic Review}, 101\penalty0 (1):\penalty0 399--410,
  2011.

\bibitem[Adachi(2000)]{Adachi00}
H.~Adachi.
\newblock On a characterization of stable matchings.
\newblock \emph{Economics Letters}, 68\penalty0 (1):\penalty0 43--49, 2000.

\bibitem[Akbarpour and Li(2020)]{AkbarpourL20}
M.~Akbarpour and S.~Li.
\newblock Credible auctions: A trilemma.
\newblock \emph{Econometrica}, 88\penalty0 (2):\penalty0 425--467, 2020.
\newblock Abstract (``Credible mechanisms'') in Proceedings of the 19th ACM
  Conference on Economics and Computation (EC 2018).

\bibitem[Alaei et~al.(2016)Alaei, Jain, and Malekian]{AlaeiKM16}
S.~Alaei, K.~Jain, and A.~Malekian.
\newblock Competitive equilibria in two-sided matching markets with general
  utility functions.
\newblock \emph{Operations Research}, 64\penalty0 (3):\penalty0 638--645, 2016.

\bibitem[Ashlagi and Gonczarowski(2018)]{AshlagiG18}
I.~Ashlagi and Y.~A. Gonczarowski.
\newblock Stable matching mechanisms are not obviously strategy-proof.
\newblock \emph{Journal of Economic Theory}, 177:\penalty0 405--425, 2018.

\bibitem[Ashlagi et~al.(2017)Ashlagi, Kanoria, and Leshno]{AshlagiKL17}
I.~Ashlagi, Y.~Kanoria, and J.~D. Leshno.
\newblock Unbalanced random matching markets: The stark effect of competition.
\newblock \emph{Journal of Political Economy}, 125\penalty0 (1):\penalty0 69 --
  98, 2017.
\newblock Abstract in Proceedings of the 14th ACM Conference on Electronic
  Commerce (EC 2013).

\bibitem[Azevedo and Budish(2019)]{AzevedoB19}
E.~M. Azevedo and E.~Budish.
\newblock Strategy-proofness in the large.
\newblock \emph{The Review of Economic Studies}, 86\penalty0 (1):\penalty0
  81--116, 2019.

\bibitem[Azevedo and Leshno(2016)]{AzevedoL16}
E.~M. Azevedo and J.~D. Leshno.
\newblock A supply and demand framework for two-sided matching markets.
\newblock \emph{Journal of Political Economy}, 124\penalty0 (5):\penalty0
  1235--1268, 2016.

\bibitem[Babaioff et~al.(2022)Babaioff, Gonczarowski, and Nisan]{BabaioffGN17}
M.~Babaioff, Y.~A. Gonczarowski, and N.~Nisan.
\newblock The menu-size complexity of revenue approximation.
\newblock \emph{Games and Economic Behavior}, 134:\penalty0 281--307, 2022.
\newblock Extended abstract in Proceedings of the 49th Annual {ACM} {SIGACT}
  Symposium on Theory of Computing (STOC 2017).

\bibitem[Balinski and S{\"o}nmez(1999)]{BalinskiS99}
M.~Balinski and T.~S{\"o}nmez.
\newblock A tale of two mechanisms: student placement.
\newblock \emph{Journal of Economic theory}, 84\penalty0 (1):\penalty0 73--94,
  1999.

\bibitem[Barber{\`a} et~al.(1991)Barber{\`a}, Sonnenschein, and
  Zhou]{BarberaSZ91}
S.~Barber{\`a}, H.~Sonnenschein, and L.~Zhou.
\newblock Voting by committees.
\newblock \emph{Econometrica: Journal of the Econometric Society}, pages
  595--609, 1991.

\bibitem[Blum et~al.(1997)Blum, Roth, and Rothblum]{BlumRR97}
Y.~Blum, A.~E. Roth, and U.~G. Rothblum.
\newblock Vacancy chains and equilibration in senior-level labor markets.
\newblock \emph{Journal of Economic theory}, 76\penalty0 (2):\penalty0
  362--411, 1997.

\bibitem[B{\'o} and Hakimov(2023)]{BoH20}
I.~B{\'o} and R.~Hakimov.
\newblock Pick-an-object mechanisms.
\newblock \emph{Management Science}, 2023.

\bibitem[B{\"o}rgers and Li(2019)]{BorgersL19}
T.~B{\"o}rgers and J.~Li.
\newblock Strategically simple mechanisms.
\newblock \emph{Econometrica}, 87\penalty0 (6):\penalty0 2003--2035, 2019.

\bibitem[Br\^{a}nzei and Procaccia(2015)]{BranzeiP15}
S.~Br\^{a}nzei and A.~D. Procaccia.
\newblock Verifiably truthful mechanisms.
\newblock In \emph{Proceedings of the 6th Conference on Innovations in
  Theoretical Computer Science (ITCS)}, page 297–306, 2015.

\bibitem[Breitmoser and Schweighofer-Kodritsch(2022)]{BreitmoserS19}
Y.~Breitmoser and S.~Schweighofer-Kodritsch.
\newblock Obviousness around the clock.
\newblock \emph{Experimental Economics}, 25:\penalty0 483--513, 2022.

\bibitem[Cai and Thomas(2019)]{CaiT19}
L.~Cai and C.~Thomas.
\newblock Representing all stable matchings by walking a maximal chain.
\newblock Mimeo, 2019.

\bibitem[Cai and Thomas(2022)]{CaiT22}
L.~Cai and C.~Thomas.
\newblock The short-side advantage in random matching markets.
\newblock In \emph{Proceedings of the 5th SIAM Symposium on Simplicity in
  Algorithms (SOSA)}, pages 257--267, 2022.

\bibitem[Carroll(2014)]{Carroll14}
G.~Carroll.
\newblock A general equivalence theorem for allocation of indivisible objects.
\newblock \emph{Journal of Mathematical Economics}, 51:\penalty0 163--177,
  2014.

\bibitem[Chen and M{\"o}ller(2024)]{Chen21}
Y.~Chen and M.~M{\"o}ller.
\newblock Regret-free truth-telling in school choice with consent.
\newblock \emph{Theoretical Economics}, 19\penalty0 (2), May 2024.

\bibitem[Daskalakis et~al.(2017)Daskalakis, Deckelbaum, and
  Tzamos]{DaskalakisDT17}
C.~Daskalakis, A.~Deckelbaum, and C.~Tzamos.
\newblock Strong duality for a multiple-good monopolist.
\newblock \emph{Econometrica}, 85\penalty0 (3):\penalty0 735--767, 2017.

\bibitem[Dobzinski(2016)]{Dobzinski16b}
S.~Dobzinski.
\newblock Computational efficiency requires simple taxation.
\newblock In \emph{Proceedings of the 57th Annual IEEE Symposium on Foundations
  of Computer Science (FOCS)}, 2016.

\bibitem[Dobzinski et~al.(2022)Dobzinski, Ron, and Vondr{\'a}k]{DobzinskiRV22}
S.~Dobzinski, S.~Ron, and J.~Vondr{\'a}k.
\newblock On the hardness of dominant strategy mechanism design.
\newblock In \emph{Proceedings of the 54th Annual ACM SIGACT Symposium on
  Theory of Computing (STOC)}, pages 690--703, 2022.

\bibitem[Dreyfuss et~al.(2022)Dreyfuss, Heffetz, and Rabin]{DreyfussHR22}
B.~Dreyfuss, O.~Heffetz, and M.~Rabin.
\newblock Expectations-based loss aversion may help explain seemingly dominated
  choices in strategy-proof mechanisms.
\newblock \emph{American Economic Journal: Microeconomics}, 14\penalty0
  (4):\penalty0 515--555, 2022.

\bibitem[Dubins and Freedman(1981)]{DubinsMachiavelliGaleShapley81}
E.~L. Dubins and A.~D. Freedman.
\newblock Machiavelli and the {G}ale-{S}hapley algorithm.
\newblock \emph{American Mathematical Monthly}, 88:\penalty0 485--494, 1981.

\bibitem[Fernandez(2020)]{Fernandez20}
M.~A. Fernandez.
\newblock Deferred acceptance and regret-free truth-telling.
\newblock Mimeo, 2020.

\bibitem[Gale and Shapley(1962)]{GaleS62}
D.~Gale and L.~S. Shapley.
\newblock College admissions and the stability of marriage.
\newblock \emph{American Mathematical Monthly}, 69:\penalty0 9--14, 1962.

\bibitem[Gale and Sotomayor(1985)]{GaleS85}
D.~Gale and M.~Sotomayor.
\newblock Some remarks on the stable matching problem.
\newblock \emph{Discrete Applied Mathematics}, 11\penalty0 (3):\penalty0
  223--232, 1985.

\bibitem[Golowich and Li(2022)]{GolowichL21}
L.~Golowich and S.~Li.
\newblock On the computational properties of obviously strategy-proof
  mechanisms.
\newblock Mimeo, 2022.

\bibitem[Gonczarowski(2014)]{Gonczarowski14}
Y.~A. Gonczarowski.
\newblock Manipulation of stable matchings using minimal blacklists.
\newblock In \emph{Proceedings of the 15th ACM Conference on Economics and
  Computation (EC 2014)}, page 449, 2014.

\bibitem[Gonczarowski(2018)]{Gonczarowski18}
Y.~A. Gonczarowski.
\newblock Bounding the menu-size of approximately optimal auctions via
  optimal-transport duality.
\newblock In \emph{Proceedings of the 50th Annual {ACM} {SIGACT} Symposium on
  Theory of Computing (STOC)}, pages 123--131, 2018.

\bibitem[Gonczarowski and Thomas(2024)]{GonczarowskiT24}
Y.~A. Gonczarowski and C.~Thomas.
\newblock Structural complexities of matching mechanisms.
\newblock In \emph{Proceedings of the 56th Annual ACM Symposium on Theory of
  Computing (STOC)}, pages 455--466, 2024.

\bibitem[Gonczarowski et~al.(2024)Gonczarowski, Heffetz, Ishai, and
  Thomas]{GHIT}
Y.~A. Gonczarowski, O.~Heffetz, G.~Ishai, and C.~Thomas.
\newblock Describing deferred acceptance and strategyproofness to participants:
  Experimental analysis.
\newblock \emph{NBER Working Paper 33020}, 2024.

\bibitem[Gul and Stacchetti(2000)]{GulS00}
F.~Gul and E.~Stacchetti.
\newblock The english auction with differentiated commodities.
\newblock \emph{Journal of Economic theory}, 92\penalty0 (1):\penalty0 66--95,
  2000.

\bibitem[Gusfield and Irving(1989)]{GusfieldI89}
D.~Gusfield and R.~W. Irving.
\newblock \emph{The stable marriage problem: structure and algorithms}.
\newblock MIT Press, 1989.

\bibitem[Hakimov and K{\"u}bler(2021)]{HakimovK19}
R.~Hakimov and D.~K{\"u}bler.
\newblock Experiments on centralized school choice and college admissions: A
  survey.
\newblock \emph{Experimental Economics}, 24:\penalty0 434--488, 2021.

\bibitem[Hammond(1979)]{Hammond79}
P.~J. Hammond.
\newblock Straightforward individual incentive compatibility in large
  economies.
\newblock \emph{Review of Economic Studies}, 46\penalty0 (2):\penalty0
  263--282, 1979.

\bibitem[Hart and Nisan(2017)]{HartN17}
S.~Hart and N.~Nisan.
\newblock Approximate revenue maximization with multiple items.
\newblock \emph{Journal of Economic Theory}, 172:\penalty0 313--347, 2017.
\newblock Abstract in Proceedings of the 13th ACM Conference on Electronic
  Commerce (EC 2012).

\bibitem[Hart and Nisan(2019)]{HartN13}
S.~Hart and N.~Nisan.
\newblock Selling multiple correlated goods: Revenue maximization and menu-size
  complexity.
\newblock \emph{Journal of Economic Theory}, 183:\penalty0 991--1029, 2019.
\newblock Abstract (``The menu-size complexity of auctions'') in Proceedings of
  the 14th ACM Conference on Electronic Commerce (EC 2013).

\bibitem[Hatfield and Milgrom(2005)]{HatfieldM05}
J.~W. Hatfield and P.~R. Milgrom.
\newblock Matching with contracts.
\newblock \emph{American Economic Review}, 95\penalty0 (4):\penalty0 913--935,
  2005.

\bibitem[Immorlica and Mahdian(2005)]{ImmorlicaM05}
N.~Immorlica and M.~Mahdian.
\newblock Marriage, honesty, and stability.
\newblock In \emph{Proceedings of the 16th Annual {ACM-SIAM} Symposium on
  Discrete Algorithms (SODA)}, pages 53--62, 2005.

\bibitem[Immorlica et~al.(2020)Immorlica, Leshno, Lo, and
  Lucier]{ImmorlicaLLL20}
N.~Immorlica, J.~Leshno, I.~Lo, and B.~Lucier.
\newblock Information acquisition in matching markets: The role of price
  discovery.
\newblock Mimeo, 2020.

\bibitem[Katu\u{s}\u{c}\'{a}k and Kittsteiner(2025)]{KatuscakK2020}
P.~Katu\u{s}\u{c}\'{a}k and T.~Kittsteiner.
\newblock Strategy-proofness made simpler.
\newblock \emph{Forthcoming in Management Science}, 2025.

\bibitem[Leshno and Lo(2021)]{LeshnoL21}
J.~D. Leshno and I.~Lo.
\newblock The cutoff structure of top trading cycles in school choice.
\newblock \emph{Review of Economic Studies}, 88\penalty0 (4):\penalty0
  1582--1623, 2021.

\bibitem[Li(2017)]{Li17}
S.~Li.
\newblock Obviously strategy-proof mechanisms.
\newblock \emph{American Economic Review}, 107\penalty0 (11):\penalty0
  3257--87, 2017.

\bibitem[Luflade(2017)]{Luflade17}
M.~Luflade.
\newblock The value of information in centralized school choice systems.
\newblock \emph{Mimeo}, 2017.

\bibitem[Mackenzie and Zhou(2022)]{MackenzieZ20}
A.~Mackenzie and Y.~Zhou.
\newblock Menu mechanisms.
\newblock \emph{Journal of Economic Theory}, 204:\penalty0 105511, 2022.

\bibitem[Mandal and Roy(2021)]{MandalR21}
P.~Mandal and S.~Roy.
\newblock Obviously strategy-proof implementation of assignment rules: A new
  characterization.
\newblock \emph{International Economic Review}, 63\penalty0 (1):\penalty0
  261--290, 2021.

\bibitem[McVitie and Wilson(1971)]{McVitieW71}
D.~G. McVitie and L.~B. Wilson.
\newblock The stable marriage problem.
\newblock \emph{Communications of the ACM}, 14\penalty0 (7), 1971.

\bibitem[Meisner and von Wangenheim(2023)]{MeisnerW23}
V.~Meisner and J.~von Wangenheim.
\newblock Loss aversion in strategy-proof school-choice mechanisms.
\newblock \emph{Journal of Economic Theory}, 207:\penalty0 105588, 2023.

\bibitem[Mennle and Seuken(2021)]{Mennle21}
T.~Mennle and S.~Seuken.
\newblock Partial strategyproofness: Relaxing strategyproofness for the random
  assignment problem.
\newblock \emph{Journal of Economic Theory}, 191:\penalty0 105144, 2021.

\bibitem[Morrill and Roth(2024)]{MorrillR24}
T.~Morrill and A.~E. Roth.
\newblock Top trading cycles.
\newblock \emph{Journal of Mathematical Economics}, 112:\penalty0 102984, 2024.

\bibitem[{NMS}(2020)]{MatchingExplained20}
{NMS}.
\newblock The matching algorithm - explained, 2020.
\newblock URL \url{https://www.youtube.com/watch?v=kVTwXNawpbk}.
\newblock Video produced by National Matching Services.

\bibitem[Pathak and Sönmez(2008)]{PathakS08}
P.~A. Pathak and T.~Sönmez.
\newblock Leveling the playing field: Sincere and sophisticated players in the
  boston mechanism.
\newblock \emph{American Economic Review}, 98\penalty0 (4):\penalty0 1636--52,
  2008.

\bibitem[Pycia and Troyan(2023)]{PyciaT21}
M.~Pycia and P.~Troyan.
\newblock A theory of simplicity in games and mechanism design.
\newblock \emph{Econometrica}, 2023.
\newblock Abstract (``Obvious dominance and random priority'') in Proceedings
  of the 20th ACM Conference on Economics and Computation (EC 2019).

\bibitem[Rees-Jones and Shorrer(2023)]{Rees-JonesS23}
A.~Rees-Jones and R.~Shorrer.
\newblock Behavioral economics in education market design: A forward-looking
  review.
\newblock \emph{Journal of Political Economy Microeconomics}, 1\penalty0
  (3):\penalty0 557--613, 2023.

\bibitem[Roth(1982)]{Roth82}
A.~E. Roth.
\newblock The economics of matching: stability and incentives.
\newblock \emph{Mathematics of Operations Research}, 7\penalty0 (4):\penalty0
  617--628, 1982.

\bibitem[Roth(1986)]{RothRuralHospital86}
A.~E. Roth.
\newblock On the allocation of residents to rural hospitals: A general property
  of two-sided matching markets.
\newblock \emph{Econometrica}, 54\penalty0 (2):\penalty0 425--427, 1986.

\bibitem[Roth and Postlewaite(1977)]{RothP77}
A.~E. Roth and A.~Postlewaite.
\newblock Weak versus strong domination in a market with indivisible goods.
\newblock \emph{Journal of Mathematical Economics}, 4\penalty0 (2):\penalty0
  131--137, 1977.

\bibitem[Saxena et~al.(2018)Saxena, Schvartzman, and Weinberg]{SaxenaSW18}
R.~R. Saxena, A.~Schvartzman, and S.~M. Weinberg.
\newblock The menu complexity of ``one-and-a-half-dimensional'' mechanism
  design.
\newblock In \emph{Proceedings of the 29th Annual {ACM-SIAM} Symposium on
  Discrete Algorithms (SODA)}, pages 2026--2035, 2018.

\bibitem[Segal(2007)]{Segal07}
I.~Segal.
\newblock The communication requirements of social choice rules and supporting
  budget sets.
\newblock \emph{Journal of Economic Theory}, 136\penalty0 (1):\penalty0
  341--378, 2007.

\bibitem[Shapley and Scarf(1974)]{ShapleyS74}
L.~Shapley and H.~Scarf.
\newblock On cores and indivisibility.
\newblock \emph{Journal of Mathematical Economics}, 1\penalty0 (1):\penalty0
  23--37, 1974.

\bibitem[Teo et~al.(2001)Teo, Sethuraman, and Tan]{TeoStableRevisited01}
C.-P. Teo, J.~Sethuraman, and W.-P. Tan.
\newblock Gale-shapley stable marriage problem revisited: Strategic issues and
  applications.
\newblock \emph{Management Science}, 47\penalty0 (9):\penalty0 1252--1267,
  2001.

\bibitem[Thomas(2021)]{Thomas21}
C.~Thomas.
\newblock Classification of priorities such that deferred acceptance is {OSP}
  implementable.
\newblock In \emph{Proceedings of the 22nd ACM Conference on Economics and
  Computation (EC 2021)}, page 860, 2021.

\bibitem[Troyan(2019)]{Troyan19}
P.~Troyan.
\newblock Obviously strategy-proof implementation of top trading cycles.
\newblock \emph{International Economic Review}, 60\penalty0 (3):\penalty0
  1249--1261, 2019.

\bibitem[Troyan and Morrill(2020)]{TroyanM20}
P.~Troyan and T.~Morrill.
\newblock Obvious manipulations.
\newblock \emph{Journal of Economic Theory}, 185:\penalty0 104970, 2020.

\bibitem[Wilson(1972)]{WilsonAnalysisStable72}
L.~B. Wilson.
\newblock An analysis of the stable marriage assignment algorithm.
\newblock \emph{BIT Numerical Mathematics}, 12\penalty0 (4):\penalty0 569--575,
  Dec 1972.

\end{thebibliography}

\clearpage 

\setcounter{page}{1}
\renewcommand{\thepage}{A.\arabic{page}}
\setcounter{table}{0} 
\renewcommand{\thetable}{A.\arabic{table}}
\setcounter{figure}{0} 
\renewcommand{\thefigure}{A.\arabic{figure}}
\setcounter{algorithm}{0} 
\renewcommand{\thealgorithm}{A.\arabic{algorithm}}
\setcounter{footnote}{0} 
\renewcommand{\thefootnote}{\arabic{footnote}}

\appendix

\part*{Appendix}

\section{Proof of Main Impossibility Theorem}
\label{sec:proof-of-MTR-DA-LB}

In this appendix, we prove \autoref{thrm:MTR-DA-LB}.

\autoref{thrm:MTR-DA-LB} considers applicant-proposing menu-in-outcome descriptions of DA.
Recall that such descriptions must---while reading applicants' preferences only once in favorite-to-least-favorite order---calculate $i$'s menu using $\succ_{-i}$, and then proceed to calculate the full matching using $(\succ_i, \succ_{-i})$.
The theorem states that such descriptions require quadratic memory.

To prove the theorem, we construct a large set of applicant preference profiles that, intuitively speaking, has two properties: 
(A) to calculate $i$'s menu given preferences in this set, essentially the full preference list of every applicant other than $i$ must be read in its entirety, and
(B) to calculate the final matching, essentially all this information must be remembered in its entirety.
These properties ensure that the description must store the entire preference profile in its memory.
There are many preference profiles in our construction, which implies the description has a high memory requirement.

\begin{proof}[Proof of \autoref{thrm:MTR-DA-LB}]
  Fix an applicant $i$ and let $D$ be any applicant-proposing menu-in-outcome description of DA for $i$.

  We now describe a set $\mathcal{S}\subseteq \T_{-i}$ of possible inputs to DA, illustrated in \autoref{fig:da-mtr-lb}, which allows us to establish property (B) discussed above (intuitively, by allowing $i$'s possible reports to affect the outcome matching in a different way for each different $\succ_{-i}\in\mathcal{S}$).
  For simplicity, let the number $n$ of non-$i$ applicants and institutions be a multiple of $4$.
  Other than $i$, there are applicants and institutions $d_j, d_j', h_j, h_j'$ for each $j \in \{1,\ldots,n/2\}$.
  There are $n/2$ total ``cycles'' containing two applicants  and two institutions each.
  Cycle $j$ has applicants $d_j$ and $d_j'$ and institutions $h_j$ and $h_j'$.
  The cycles are divided into two classes, ``top'' cycles (for $j\in \{1,\ldots,n/4\}$) and ``bottom'' cycles  (for $j\in\{n/4+1,\ldots,n/2\}$).

  The institutions' priorities are fixed, and defined as follows:
    \begin{align*}
    & \text{For top cycles ($j \le n/4$):} 
    &\qquad \qquad& \text{For bottom cycles ($j > n/4$):} 
    \\
    & \qquad h_j : \quad d_j' \succ i \succ d_j
    && \qquad h_j : \quad d_j' \succ d_1 \succ d_2 \succ \ldots \succ d_{n/4} \succ d_j
    \\
    & \qquad h_j' : \quad d_j \succ d_j'.
    && \qquad h_j' : \quad d_j \succ d_j'.
  \end{align*}
  For the top cycle applicants ($d_j$ with $j\le n/4$), the preferences vary (in a way we will specify momentarily). 
  Other applicants' preferences are fixed, as follows:
  \begin{align*}
    &\text{For bottom cycles ($j > n/4$):}
    && d_j : \quad h_j \succ h_j'. 
    \\
    &\text{For all cycles ($j \in \{1,\ldots,n/2\}$):}
    && d_j' :  \quad h_j' \succ h_j.
  \end{align*}

\begin{figure}[tbh]
  \begin{minipage}[c]{0.55\textwidth}
  \includegraphics[width=\textwidth]{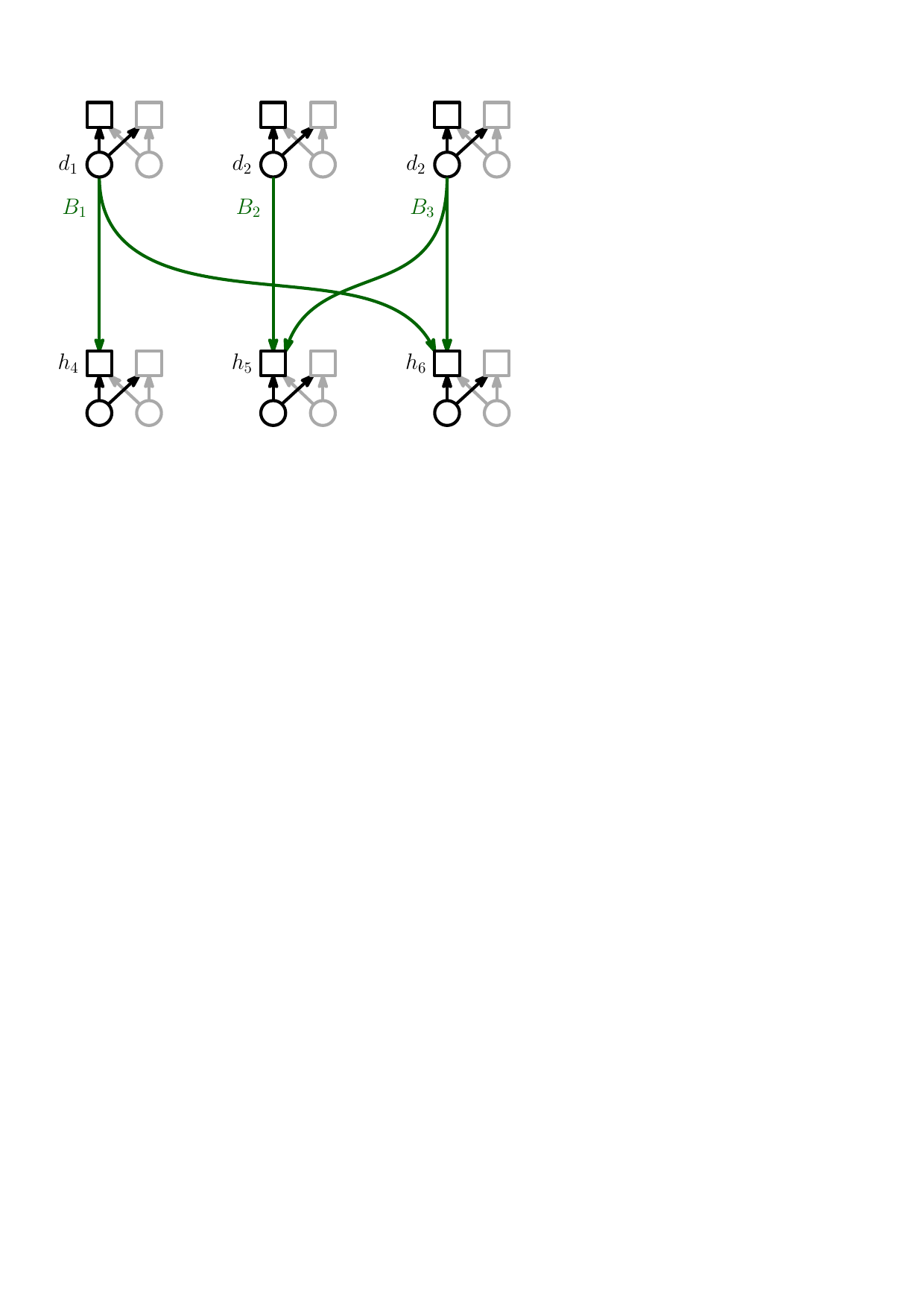}
  \end{minipage}
  \qquad
  \begin{minipage}[c]{0.38\textwidth}

  \caption[DA LB]{
    Illustration of the construction used to prove  \autoref{thrm:MTR-DA-LB}
    }
    
    {\footnotesize \textbf{Notes:}
    Dark nodes represent $d_j$ or $h_j$ for some $j$, and grey nodes represent $d_j'$ or $h_j'$.
    The green arrows directed outwards from a top cycle $d_j$ represent the sets $B_j$.
    \par
    }
  \label{fig:da-mtr-lb}
    \end{minipage}

\end{figure}

  Let $\mathcal{S}$ denote the set of preference profiles where we additionally have:
  \begin{align*}
    \text{For top cycles ($j\le n/4$):}
    &&
    d_j :  \qquad h_j \succ B_j \succ h_j',
  \end{align*}
  where $B_j$ is an arbitrary subset of $\{h_k\ |\ k > n/4 \}$, ranked in any fixed order (say, increasing order of $j$).
  Any such collection of $\big( B_j \big)_{j = 1}^{n/4}$ defines a distinct preference profile in $\mathcal{S}$. 
  Note that $|\mathcal{S}| = 2^{(n/4)^2}$.
  See \autoref{fig:da-mtr-lb} for an illustration.

  We additionally define a set of inputs $\mathcal{S}'\supseteq\mathcal{S}$, which allow us to establish property (A) discussed above (intuitively, by making $i$'s menu depend on the final institution ranked above $\emptyset$ on other applicants' lists).
  Specifically, let $\mathcal{S}'$ denote the set containing every element of $\mathcal{S}$,
  and additionally any top cycle applicant $d_j$ ($j \le n/4$) may or may not truncate the final institution $h_i'$ off her list, marking it as unacceptable.
  In other words, in addition to the sets $\big(B_j\big)_{j=1}^{n/4}$, an element of $\mathcal{S}'$ is defined by bits $(c_j)_{j=1}^{n/4}$, such that, for each top cycle $j$ ($j\le n/4$):\footnote{
    This collection of preferences can also be constructed with full preference lists by adding some unmatched institution $h_{\emptyset}$ to represent truncating $d_i$'s list.
  }
  \begin{align*}
      & \text{When $c_j = 0$:} && d_j : \quad h_j \succ B_j \succ h_j'.
      \\
      & \text{When $c_j = 1$:} && d_j : \quad h_j \succ B_j.
  \end{align*}
  
  We now proceed to prove the two crucial properties of DA, and the description $D$, when run on this family of preference profiles.
  The following lemmas formalize properties (A) and (B) discussed above, showing (respectively) that $D$ must essentially read all of the preferences in $\succ_{-i}\in\mathcal{S}'$ in order to calculate $i$'s menu, and (before knowing $\succ_i$)  must remember essentially all of this information in order to calculate the outcome matching $\text{DA}(\succ_i, \succ_{-i})$.

  \begin{lemma}
    \label{lem:dpda-mtr-lb-must-read}
    Consider a preference profile in $\mathcal{S}'$.
    For each top cycle $j$ (with $j\le n/4$), we have that $h_j$ is in applicant $i$'s menu in DA if and only if $d_j$ does not rank $h_j'$ (i.e., $c_j=1$).
    Hence, to correctly calculate $i$'s menu, description $D$ must read the entire preference list of each such $d_j$ (up to the position of $h_j'$).
  \end{lemma}
    To prove this lemma, consider the execution of the APDA algorithm when $i$ submits a list containing only $h_j$. %
    First, $d_j$ is rejected, then she proposes to every institution $h_k \in B_j$. 
    This ``rotates'' the bottom cycle containing $h_k$;
    in more detail, $h_k$ will accept the proposal from $d_j$, then $d_k$ will propose to $h_k'$, then $d_k'$ with propose to $h_k$, and $d_j$ will be rejected from $h_k$.
    This will occur for every $h_k \in B_j$, so $d_j$ will not match to any $h_k$ with $k\in\{n/4+1,\ldots,n/2\}$. 

    Finally, after getting rejected from each institution in $B_j$, applicant $d_j$ may or may not propose to $h_j'$, depending on the bit $c_j$. 
    If she does not, then $d_*$ remains matched to $h_j$ and in this case $h_j$ is on $i$'s menu. 
    If she does, then $h_j'$ will reject $d_j'$, who will propose to $h_j$, which will reject $i$.
    So $i$ will go unmatched, and thus in this case $h_j$ is not on $i$'s menu.
    
    The final sentence of the lemma then follows from the fact that $D$ is an applicant-proposing and must calculate $i$'s menu. %
    This proves \autoref{lem:dpda-mtr-lb-must-read}.

  \begin{lemma}
    \label{claim:da-mtr-lb-with-gtc}
    Each distinct preference profile $\succ_{-i} \in \mathcal{S}$ induces a distinct function $\text{DA}(\cdot, \succ_{-i}) : \T_{i} \to A$ from applicant $i$'s report to outcome matchings.
    Hence, to correctly calculate the outcome matching, the description $D$ must---across all states where it finishes calculating $i$'s menu---have at least one state for each element of $\mathcal{S}$.
  \end{lemma}
    To prove this lemma, consider two distinct preference profiles in $\mathcal{S}$, one
    profile $\succ_{-i}$ corresponding to $\big( B_j \big)_{j=1}^{n/4}$, and the other
    profile $\succ_{-i}'$ corresponding to $\big( B_j' \big)_{j=1}^{n/4}$.
    Without loss of generality, there is some $j$ and $k$ such that $h_k \in B_j \setminus B_j'$.
    Suppose now that $i$'s report $\succ_i$ lists only $h_j$.
    Then, consider execution of the APDA algorithm under $(\succ_i, \succ_{-i})$ and under $(\succ_i, \succ_{-i}')$.
    Under $\succ_{-i}$, the bottom tier cycle containing $h_k$ will be
    ``rotated,'' i.e. since $h_k \in B_j$, the sequence of rejections will cause $h_k$ to match to $d_k'$.
    However, this is not the case under $\succ_{-i}'$, since $h_k \notin B_j$.
    Thus, $\text{DA}(\cdot,\succ_{-i}) \ne \text{DA}(\cdot,\succ_{-i}')$.

    We now prove the second sentence of the lemma.
    As argued in \autoref{lem:dpda-mtr-lb-must-read}, $D$ must have read all top cycle applicants' preferences in order to calculate $i$'s menu.
    Moreover, since $D$ is a menu-in-outcome description, it must do so before learning $\succ_i$. Hence, to calculate the outcome matching correctly at the end, $D$ must %
    remember the entirety of
    $\big( B_j \big)_{j=1}^{n/4}$.
    This proves \autoref{claim:da-mtr-lb-with-gtc}.

  We now prove \autoref{thrm:MTR-DA-LB}.
  Together, \autoref{lem:dpda-mtr-lb-must-read} and \autoref{claim:da-mtr-lb-with-gtc} show that when $D$ has just calculated the menu of applicant $i$, the description must be in a distinct state for each distinct $\succ_{-i} \in \mathcal{S}$. 
  There is one such $\succ_{-i}$ for each possible way of assigning the sets $B_j\subseteq %
  \{ h_k \ |\ k > n/4 \}$ for all $j\in\{1,\ldots,n/4\}$. 
  For each such $j$, there are $2^{n/4}$ ways to assign $B_j$, and hence
  there are $\bigl(2^{n/4}\bigr)^{n/4} = 2^{(n/4)^2} = 2^{\Omega(n^2)}$ possible ways to set this collection $\big(B_j\big)_{j=1}^{n/4}$. 
  Thus, %
  the description requires at least this many states, and thus requires memory $\Omega(n^2)$.
  This finishes the proof.
\end{proof}

\section{Additional Proofs and Remarks}
\label{sec:missing-proofs}

In this appendix, we provide additional supplemental proofs and remarks omitted from the main text.

We start by proving \autoref{thrm:phpm-correctness}, which shows that \autoref{alg:Phpm} is a menu description of DA, without assuming the strategyproofness of DA.
This provide an alternative, potentially-instructive approach for proving DA's strategyproofness.

\begin{proof}[Alternative proof of \autoref{thrm:phpm-correctness}]
  
  We show that, for any applicant $i$, \autoref{alg:Phpm} correctly calculates $i$'s match in DA. 
  To this end, fix the priorities of institutions and preferences $\succ = (\succ_i, \succ_{-i})$ of all applicants.
  Let $h_* = \mathit{\mathit{APDA}}_{i}(\succ)$ denote the match of $i$ according to applicant-proposing DA. 
  Our goal is to show that $h_*$ is the outcome of \autoref{alg:Phpm}.
  Hence, we must show $h_*$ is the $\succ_i$-favorite institution in the set containing 
  (1) the ``outside option'' of going unmatched, and 
  (2) all institutions $h$ such that $h$ prefers $i$ to $\mathit{IPDA}_h(\succ_{-i})$ (the match of $h$ according to institution-proposing DA in the market without $i$).
  
  Let $\emptyset$ denote the empty preference report of $i$ (i.e., the report marking all institutions as unacceptable).
  Observe that $\mathit{IPDA}(\succ_{-i})$ and $\mathit{IPDA}(\emptyset, \succ_{-i})$ match applicants (other than $i$) in exactly the same way, %
  and furthermore, the institutions $h$ that prefer $i$ to $\mathit{IPDA}_h(\succ_{-i})$ are exactly those that propose to $i$ during the calculation of $\mathit{IPDA}(\emptyset,\allowbreak \succ_{-i})$. 
  Therefore, it suffices to prove:
  \begin{enumerate}
      \item[(I)] If $h_*\ne\emptyset$, then then $h_*$ proposes to $i$ during $\mathit{IPDA}(\emptyset, \succ_{-i})$.
      \item[(II)] Applicant $i$ gets no proposal in $\mathit{IPDA}(\emptyset, \succ_{-i})$ that is $\succ_i$-preferred to $h_*$.
  \end{enumerate}

  We start by proving (I). Assume that $h_*\ne\emptyset$. Let $\{h_*\}$ denote the preference list of $i$ ranking only $h_*$ (i.e., marking all other institutions as unacceptable). Observe that $\mathit{APDA}(\succ_i,\succ_{-i})$ is also stable under preferences $(\{h_*\},\succ_{-i})$. 
  Thus, by the Lone Wolf / Rural Hospitals Theorem 
  (\autoref{claimRuralDoctors}),
  since $i$ is matched in $\mathit{APDA}(\succ_i,\succ_{-i})$, she must be matched in $\mathit{IPDA}(\{h_*\}, \succ_{-i})$ as well.
  Thus, $\mathit{IPDA}(\{h_*\}, \succ_{-i})=h_*$. 
  Since $\mathit{IPDA}(\{h_*\},\succ_{-i})$ and $\mathit{IPDA}(\emptyset,\succ_{-i})$ coincide until $h_*$ proposes to $i$, we conclude that $h_*$ must propose to $i$ during $\mathit{IPDA}(\emptyset, \succ_{-i})$, proving (I).
  
  We now prove (II).
  Let $T$ denote $i$'s preference list, truncated at and below $h^*$, i.e., the report listing only institutions that $i$ strictly prefers to $h^*$.
  Observe that $i$ must go unmatched in $\mathit{APDA}(T,\succ_{-i})$, since every proposal by $i$ before $h_*$ was rejected in $\mathit{APDA}(\succ_i,\succ_{-i})$.
  Hence, by the Lone Wolf / Rural Hospitals Theorem  (\autoref{claimRuralDoctors}), $i$ goes unmatched in $\mathit{IPDA}(T, \succ_{-i})$.
  Now, since $i$ goes unmatched in $\mathit{IPDA}(T, \succ_{-i})$, we see that $i$ does not receive any proposal that is $\succ_i$-preferred to $h_*$ in $\mathit{IPDA}(\emptyset,\succ_{-i})$, proving (II).

  We have shown that the outcome calculated at the end of Step~(2) of \autoref{alg:Phpm} is $i$'s outcome in DA.
  Moreover, observe that the set calculated in Step~(1) of is independent of $i$'s report.
  Hence, DA is strategyproof (by the same proof outline that applies to every menu description).
  Moreover, as observed in \autoref{footnote:unique}, the menu is the \emph{only} set $M$ of institutions that is independent of $i$'s report such that $i$ always receives her favorite institution in $M$.
  Hence, the set in Step~(1) must be $i$'s menu, and \autoref{alg:Phpm} is a menu description of DA.
\end{proof}

\begin{remark}
  \label{rem:many-to-one-menu}
  As noted in \autoref{sec:parallen-hpda}, \autoref{thrm:phpm-correctness} extends to
  many-to-one markets with substitutable priorities. To quickly see why this extension holds in the special case in which institutions have responsive preferences (i.e., the special case in which each institution has a master preference order and a capacity), fix a many-to-one market, and following a standard approach, consider a one-to-one market where each institution from the original market is split into ``independent copies.'' That is, the number of copies of each institution equals the capacity of the institution, each ``copied'' institution has the same preference list as the original institution, and each applicant ranks all the copies of the institution (in any order) in the same way she ranked the original institution. Ignoring the artificial difference between copies of the same institution, the run of applicant-proposing DA is equivalent under these two markets. Thus, an applicant's menu is equivalent under both markets, and so by \autoref{thrm:phpm-correctness}, a menu description for the many-to-one market can be given through institution-proposing DA under the corresponding one-to-one market, which in turn is equivalent to institution-proposing DA under the original market (where at each step, each institution proposes to a number of applicants up to its capacity). The only change in \autoref{alg:Phpm} in this case would be replacing the condition $i \succ_h \mu_{-i}(h)$  with $\exists d' \in \mu_{-i}(h):i \succ_h d'$.
\end{remark}

\begin{remark}
  \label{rem:contracts-menu}
  As additionally noted in \autoref{sec:parallen-hpda}, \autoref{thrm:phpm-correctness} also extends to many-to-one markets with contracts in which the institutions have substitutable preferences that satisfy the law of aggregate demand (the conditions under which \cite{HatfieldM05} prove that the strategyproofness of applicant-proposing DA and the rural hospitals theorem hold).
  \autoref{alg:Phpm-contracts} gives a menu description of DA in this environment, which generalizes \autoref{alg:Phpm} as follows: (1) \autoref{alg:Phpm-contracts} uses the generalized Gale--Shapley algorithm of \cite{HatfieldM05} 
  starting from $(\emptyset, X)$ (where $X$ is the set of all possible contracts) to calculate the institution-optimal stable outcome without $i$ to get a matching $\mu_{-i}$. (2) A given contract $c = (i,h,c)$ (i.e., an $(\mathrm{applicant},\mathrm{institution},\mathrm{term})$ tuple) is on $i$'s menu if and only if $h$ would choose $(i,h,c)$ if given a choice from the set containing $(i,c)$ and its matches in $\mu_{-i}$ (in the notation of \cite{HatfieldM05}, $c \in C_h(\mu_{-i}(h) \cup \{c\})$).
  Under this modification, each step of the proof of \autoref{thrm:phpm-correctness} in \autoref{sec:parallen-hpda} holds by a completely analogous argument for this market.
\end{remark}

\begin{algorithm}[ht]
\caption{A menu description for applicant $i$ of the applicant-optimal stable matching in a many-to-one market with contracts} 
  \label{alg:Phpm-contracts}

\begin{enumerate}[(1)]
  \item Calculate the institution-optimal stable matching with applicant $i$ removed from the market using the generalized Gale--Shapley algorithm of \cite{HatfieldM05}. Call the resulting matching $\mu_{-i}$.
  Let $M$ be the set of contracts $c = (i,h,t)$ involving applicant $i$ such that $c \in C_h( \mu_{-i}(h) \cup \{c\} )$.
  \item Match $i$ to $i$'s highest-ranked contract in $M$.
\end{enumerate}

\end{algorithm}

\begin{remark}
  \label{rem:menu-corollaries}
  In this remark, we show how \autoref{thrm:phpm-correctness}, which characterizes the menu in DA in terms of \autoref{alg:Phpm}, can be used to prove results from \cite{AshlagiKL17} via arguments similar to \cite{CaiT22}. Consider a randomized market with $n+1$ applicants and $n$ institutions, where such that each applicant/institution draws a full-length preference list uniformly at random, and let $\mu$ be the result of (applicant-optimal) DA with these preferences. We prove that the expected rank each applicant receives on their preference list (formally, the expectation of $|\{h : h \succeq_d \mu(d)\}|$ for any $d$) is at least $(1 - \epsilon) n / \log(n)$ for any $\epsilon > 0$ and large enough $n$.

  Fix an applicant $d_*$, and consider calculating $d_*$'s menu using \autoref{alg:Phpm} in this market. This is equivalent to considering IPDA in a market where $d_*$ rejects all proposals, and setting $d_*$'s menu to consist of all proposals she receives. 
  By the principle of deferred decisions, this run of IPDA can be constructed by letting each institution $h$ proposes to a uniformly random applicant (among those $h$ has not yet proposed to) each time she proposes. 
  Observe that this run of IPDA will terminate as soon as each of the $n$ applicants other than $d_*$ receives a proposal. 
  Thus (much like the standard case of $n$ applicants and $n$ institutions in APDA \cite{WilsonAnalysisStable72}), the total number of proposals made in this run of IPDA is stochastically dominated by a coupon collector random variable. Thus, intuitively, the total number of proposals will be $n \log(n)$, and $\log(n)$ of these will go to $d_*$ in expectation, and $d_*$'s top choice out of these $\log(n)$ proposals will be their $n / \log(n)$th ranked choice overall.
  
  Formally, let $Y$ denote the number of proposals $d_*$ receives, and let $\overline{Y}$ denote the same quantity in a market where each institution makes each proposal completely uniformly at random (without regard to prior proposals); it follows that $Y$ is stochastically dominated by $\overline{Y}$. 
  Let $\overline{Z_i}$ denote the total number of proposals between the $(i-1)$th and $i$th distinct applicant in $\D\setminus\{d_*\}$ receiving a proposal (in the market with repeated proposals). The expected value of $Z_i$ is exactly $({n+1})/({n+1-i})$, and each of these $Z_i$ proposals (except for the final one) has a $1/i$ probability of going to $d_*$. Thus, we have
  \[ \E{{Y}} 
    \le \E{\overline{Y}} 
    = \sum_{i=1}^n \frac{1}{i}\left(\frac{n+1}{n+1-i} - 1\right)
    = \sum_{i=1}^n \frac{1}{i}\left(\frac{i}{n+1-i}\right)
    = H_n
    \le \log(n) + 1.
  \] 
  Now, let $R = |\{h : h \succeq_d h_*\}|$, where $h_*$ is $d_*$'s top-ranked proposal received (i.e., $d_*$'s match in APDA). One can show that, conditioned on $Y = y$, we have the expected value of $R$ exactly equal to $(n+1)/(y+1)$ (see for example  \cite[Claim A.1]{CaiT22}). Thus, by Jensen's inequality, we have
  \[ \E{R} = \Es{y \sim Y}{\frac{n+1}{y+1}}
  \ge \frac{n+1}{\E{Y}+1} 
  \ge \frac{n+1}{\log(n)+2}
  \ge \big(1 - \epsilon\big)\frac{n}{\log(n)}
  \]
  for any $\epsilon > 0$ and large enough $n$, as desired.
\end{remark}

\begin{remark} \label{rem:ttc-order-specialized}
  We now formally show that, unlike SD, a description of TTC \emph{must} be specialized to individual applicants in order to contain a menu description for them.
  Concretely, we show that any outcome description of TTC cannot contain a menu description for \emph{two} applicants (where, in contrast, our \autoref{alg:individ-dict-ttc} contains a menu description for exactly one student).
  
  To do this, it suffices to construct an instance containing two applicants $d_1$ and $d_2$ such that each of their menus depends on the other. 
  For example, consider an instance where $h_i : d_i \succ d_{3-i}$ and $d_i : h_{3-i} \succ h_i$ for $i\in\{1,2\}$.
  Under this instance, for each $i\in\{1,2\}$, institution $h_{3-i}$ is on $d_i$'s menu, but if applicant $d_{3-i}$ changed her preference list, this would no longer be true.
  Hence, a description cannot calculate either applicant's menu before the description queries the other applicant's type.
\end{remark}

\begin{remark}
\label{rem:budget-set-vs-menu}
    We now show that in (finite-market) DA, budget sets and menus are different sets; moreover, we show that neither set includes the other.
    For a fixed profile of preferences and priorities, denote an applicant $i$'s budget set $B(i) = \{ h | i \succeq_h \mu(i) \}$, where $\mu$ is the outcome of DA.
    Let $M(i) = \Menu_{\succ_{-i}}$ denote $i$'s menu.

    Now, consider the market with institutions $h_1, h_2, h_3,$ and $h_4$, and applicants $d_1, d_2, d_3,$ and $d_4$.
    Let the preferences and priorities be as follows:
    \begin{align*}
        & h_1: d_1 \succ d_2 \succ \ldots
        && d_1: h_1 \succ \ldots
        \\
        & h_2: d_4 \succ d_3 \succ d_2 \succ d_1 \succ \ldots 
        && d_2: h_1 \succ h_2 \succ h_4 \succ \ldots
        \\
        & h_3: d_3 \succ \ldots 
        && d_3: h_3 \succ \ldots
        \\
        & h_4: d_2 \succ d_4 \succ \ldots
        && d_4: h_4 \succ h_2 \succ \ldots
    \end{align*}
  Then, one can check that DA pairs $h_i$ to $d_i$ for each $i=1,\ldots,4$,
  and that $h_2 \in B(d_3) \setminus M(d_3)$, and also $h_2 \in M(d_1) \setminus B(d_1)$.
  Thus, neither the menu nor the budget set contain the other.
  Moreover, the relationship between the two sets does not seem to be restricted in a straightforward way based on priorities and the outcome of DA:
  despite the fact that $d_3\succ_{h_2} d_2$, we have $h_2\notin M(d_3)$;
  despite $d_1\prec_{h_2} d_2$, we have $h_2 \in M(d_1)$.
\end{remark}

\begin{remark}
\label{rem:stable-set-vs-menu}
    We now show that in DA, an applicant's set of stable partners is a (possibly strict) subset of her menu.
    For a given profile of preferences and priorities, let $S(i)$ denote the set of stable partners of applicant $i$, and let $M(i)$ denote her menu.
    We begin by showing that $M(i)\ne S(i)$.
    Consider any instance with two institutions $h_1, h_2$ which both rank $i$ above all other applicants.
    Both $h_1$ and $h_2$ must be in $i$'s menu.
    However, if $i$ ranks $h_1$ above all other institutions, then $h_1$ is $i$'s unique stable partner;
    thus $h_2 \in M(i)\setminus S(i)$.

    We now show that $S(i)\subseteq M(i)$. 
    Suppose the profile of preferences and priorities is $P$.
    Consider any $h \in S(i)$, and let $\mu$ be a stable matching with $\mu(i)=h$. 
    Then, let $\widetilde P$ denote modifying $P$ by having $i$ submit a list which ranks only $h$. 
    Then, observe that $\mu$ is also stable under $\widetilde P$.
    Thus, by the Rural Hospital Theorem (\autoref{claimRuralDoctors}), $i$ and $h$ must be matched in every stable matching under $\widetilde P$, in particular, in $\mathit{DA}(\widetilde P)$. 
    Thus, $h \in M(i)$, and $S(i)\subseteq M(i)$.
\end{remark}

\section{Proofs of Known Results}
\label{sec:lemmas-by-hand}

In this appendix, we recall classically-known lemmas on DA and TTC that are needed for our paper.
We also provide full proofs, making all the arguments in this paper self-contained.

\subsection{Known Results for DA and Stable Matchings}
\label{sec:DPDA-lemmas}

We now provide properties of DA and stable matchings.
Let $D$ denote the set of applicants, and $H$ the set of institutions. 
Recall that a matching $\mu$ is \emph{stable} if $\mu(a) \succ_a \emptyset$ for all $a \in D\cup H$, and moreover there is no pair $d\in D, h\in H$ such that $h \succ_d \mu(h)$ and $d \succ_h \mu(d)$.

\begin{lemma}[\citealp{GaleS62}]
  \label{claimDaStable}
  The outcome of DA is a stable matching.
\end{lemma}
\begin{proof}
  Consider running the traditional description of DA (\autoref{def:DA}) on some profile of preferences (and priorities), and let the output matching be $\mu$. 
  Consider a pair $d\in D$, $h\in H$ which is unmatched in $\mu$. 
  Suppose for contradiction $h\succ_d \mu(d)$ and $d\succ_h \mu(h)$.
  In the DA algorithm, $d$ would propose to $h$ before $\mu(d)$.
  However, it's easy to observe from the traditional description of DA that once an institution is proposed to, they remain matched and can only increase their priority for their match.
  This contradicts the fact that $h$ was eventually matched to $\mu(h)$.
\end{proof}

Note that \autoref{claimDaStable} also proves that at least one stable matching always exists.
Next, we show that DA (i.e., the matching output by the APDA algorithm) is (simultaniously) the best stable matching for all applicants.

\begin{lemma}[\citealp{GaleS62}] \label{claimDpdaRejectionsNeverStable}
  If an applicant $d\in D$ is ever rejected by an institution $h\in H$ during some run of the APDA algorithm, %
  then no stable matching can pair $d$ to $h$.
\end{lemma}
\begin{proof}
  Let $\mu$ be any matching, not necessarily stable. We will show that if $h$ rejects $\mu(h)$ at any step of DA, then $\mu$ is not stable.

  Consider the first time during in the run of APDA where such a rejection occurred.
  In particular, let $h$ reject $d\defeq\mu(h)$ in favor of $\widetilde d\ne d$ (either because $\widetilde d$ proposed to $h$, or because $\widetilde d$ was already matched to $h$ and $d$ proposed).
  We have $\widetilde d\succ_h d$.
  We have $\mu(\widetilde d)\ne h$, simply because $\mu$ is a matching.
  Because this is the \emph{first} time an applicant has been rejected by her match in $\mu$, $\widetilde d$ has not yet proposed to $\mu(\widetilde d)$.
  This means $h \succ_{\widetilde d} \mu(\widetilde d)$, and $\mu$ is not stable.

  Thus, no institution can ever reject a stable partner in APDA.
\end{proof}

The following corollaries are immediate:

\begin{corollary}[\citealp{GaleS62}]
  \label{claimDpdaDoctorOptimal}
  In the outcome of DA, every applicant is matched to her favorite stable partner.
\end{corollary}
\begin{corollary}[\citealp{DubinsMachiavelliGaleShapley81}]
  \label{claimDpdaUniqueOutcome}
  The matching output by the traditional DA algorithm is independent of the order in which applicants are selected to propose.
\end{corollary}

A phenomenon dual to \autoref{claimDpdaDoctorOptimal} occurs for the institutions:
\begin{lemma}[\citealp{McVitieW71}]\label{claimDpdaHospitalsAtWorstMatch}
  In the outcome of DA, every $h\in H$ is paired to her least-favorite stable partner.
\end{lemma}
\begin{proof}
  Let $d\in D$ and $h\in H$ be paired by applicant-proposing deferred acceptance.
  Let $\mu$ be any stable matching which does not pair $d$ and $h$.
  We must have $h \succ_d \mu(d)$, because $h$ is the $d$'s favorite stable partner.
  If $d \succ_h \mu(h)$, then $\mu$ is not stable.
  Thus, we must in fact have $\mu(h)\succ_h d$.
\end{proof}

Finally, we show
that the set of matched agents must be the same in each stable matching.
\begin{theorem}[Lone Wolf / Rural Hospitals Theorem, \citealp{RothRuralHospital86}]
  \label{claimRuralDoctors}\label{thrmRural1to1}
  The set of unmatched agents is the same in every stable matching.
\end{theorem}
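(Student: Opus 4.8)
The plan is to derive the theorem from the applicant-optimality and institution-pessimality of applicant-proposing DA (APDA), combined with the elementary fact that any one-to-one matching pairs equally many applicants and institutions. Fix any preference profile, let $\mu_D$ denote the matching produced by APDA, and let $\mu$ be an arbitrary stable matching. For a matching $\nu$, I will write $A(\nu)$ for the set of matched applicants and $H(\nu)$ for the set of matched institutions. It suffices to show that $A(\mu)=A(\mu_D)$ and $H(\mu)=H(\mu_D)$, since $\mu$ is arbitrary and $\mu_D$ is fixed.

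First I would use \autoref{claimDpdaDoctorOptimal} (applicant-optimality) to show $A(\mu)\subseteq A(\mu_D)$: any applicant $d$ matched in $\mu$ is matched to an acceptable institution $\mu(d)$, which is thus a stable partner of $d$, so $d$'s favorite stable partner (her match in $\mu_D$) is at least as good as $\mu(d)$ and in particular acceptable, whence $d$ is matched in $\mu_D$. Dually, I would use \autoref{claimDpdaHospitalsAtWorstMatch} (institution-pessimality) to show $H(\mu_D)\subseteq H(\mu)$: if $h$ is matched in $\mu_D$ to some $d$, then applying the blocking-pair argument in the proof of that lemma, we have $h\succ_d\mu(d)$, so if $h$ were unmatched in $\mu$ we would have $d\succ_h\mu(h)$ (as remaining unmatched is strictly worse for $h$ than the acceptable applicant $d$), making $(d,h)$ block $\mu$ and contradicting its stability; hence $h\in H(\mu)$.

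The key step is then a counting argument. Because each matching is one-to-one, it induces a bijection between its matched applicants and its matched institutions, so $|A(\nu)|=|H(\nu)|$ for every matching $\nu$. Chaining the two inclusions with this identity yields
\[
  |A(\mu)| \;\le\; |A(\mu_D)| \;=\; |H(\mu_D)| \;\le\; |H(\mu)| \;=\; |A(\mu)|,
\]
so every inequality is in fact an equality. Combining $A(\mu)\subseteq A(\mu_D)$ with $|A(\mu)|=|A(\mu_D)|$ forces $A(\mu)=A(\mu_D)$, and likewise $H(\mu_D)\subseteq H(\mu)$ with equal cardinalities forces $H(\mu_D)=H(\mu)$. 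Since this holds for every stable $\mu$, the set of matched applicants and the set of matched institutions each coincide across all stable matchings, and therefore so does the set of unmatched agents on both sides.

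I expect the main subtlety to be the correct handling of unmatched agents in the two monotonicity steps—specifically, arguing that an agent matched in one of the two extremal matchings must be \emph{matched} (not merely weakly better off) in any other stable matching. This requires invoking individual rationality so that ``being matched to an acceptable partner'' strictly dominates ``being unmatched,'' which is exactly what turns the weak welfare comparisons from \autoref{claimDpdaDoctorOptimal} and \autoref{claimDpdaHospitalsAtWorstMatch} into the set inclusions above. Once those inclusions are pinned down, the cardinality identity $|A(\nu)|=|H(\nu)|$ does all the remaining work with no further case analysis.
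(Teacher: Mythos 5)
Your proposal is correct and follows essentially the same route as the paper's proof: establish the inclusion that every applicant matched in an arbitrary stable matching is matched under APDA (you via applicant-optimality, the paper via its immediate antecedent \autoref{claimDpdaRejectionsNeverStable}), establish via \autoref{claimDpdaHospitalsAtWorstMatch} that every institution matched under APDA is matched in every stable matching, and then close the loop with the cardinality identity $|A(\nu)|=|H(\nu)|$. The subtlety you flag---using individual rationality to turn weak welfare comparisons into statements about being matched at all---is handled the same way in the paper, so no gap remains.
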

\begin{proof}
  Consider any stable matching $\mu$ in which applicants $D^\mu$ and institutions $H^\mu$ are matched, and let $D^0$ and $H^0$ be matched in DA.
  By \autoref{claimDpdaDoctorOptimal}, we know that for all $d\in D^\mu$, the match of $d$ can only improve in DA; in particular, $d$ is still matched in DA, and thus $D^\mu \subseteq D^0$.
  Similarly, %
  \autoref{claimDpdaHospitalsAtWorstMatch} implies that each agent in $H^0$ is matched in every stable outcome, so $H^0 \subseteq H^\mu$.
  But then, since the matching is one-to-one, we have 
  $|D^0| = |H^0|$ as well as
  $|D^0| \ge |D^\mu| = |H^\mu| \ge |H^0|$, so the same number of agents (on each side) are matched in $\mu$ and in DA.
  Thus, $D^0 = D^\mu$ and $H^0 = H^\mu$.
\end{proof}

Additionally, DA is strategyproof. 
This follows from our \autoref{thrm:phpm-correctness}.
While our proof of \autoref{thrm:phpm-correctness} in \autoref{sec:parallen-hpda} relies on DA's strategyproofness,
our proof of \autoref{thrm:phpm-correctness} in \autoref{sec:missing-proofs} only uses properties proven above in this appendix.
Hence, these arguments show that DA is strategyproof from first-principles.

\begin{theorem}[\citealp{Roth82,DubinsMachiavelliGaleShapley81}]\label{thrm:sp-traditional}
  DA is strategyproof for the applicants.
\end{theorem}

\subsection{Known Result for TTC}
\label{sec:TTC-order-independent}

We now prove that TTC is independent of the order in which the steps are chosen in the traditional description (analogous to \autoref{claimDpdaUniqueOutcome} for DA).
This will follow from the observation that cycles in the pointing graph of the traditional description of TTC must always be disjoint, since the pointing graph has out-degree $1$.
See also \cite{Carroll14, MorrillR24} for similar contemporary proofs.

\begin{proposition}[Follows from \citealp{ShapleyS74, RothP77}]
\label{claim:TTC-order-independent}
The TTC algorithm is independent of the order in which cycles are chosen and eliminated.
\end{proposition}
\begin{proof}
    Fix a profile of priorities and preferences.
    Define the \emph{elimination graph} $G$ as follows.
    The vertices of $G$ are the set of all partial matchings between applicants and institutions.
    There is an edge $\mu_1\to\mu_2$ in $G$ whenever $\mu_2$ differs from $\mu_1$ by the elimination of exactly one cycle, as defined in \autoref{def:TTC}, under the given preferences and priorities. %
        Formally, this is defined as follows.
        Fix $\mu_1$, and consider the \emph{pointing graph} $B = B_{\mu_1}$ given $\mu_1$ to be the bipartite graph formed by applicants and institutions who are unmatched in $\mu_1$, where each agent points to her top-ranked agent on the other side who is unmatched in $\mu_1$ (if any such agents on the other side remain). 
        Then, we have an edge $\mu_1\to\mu_2$ whenever there exists a cycle in $B$ such that, if $\mu_1$ is modified such that every applicant in the cycle is matched to the institution she points to, then the resulting matching is $\mu_2$.
    When $\mu_1\to\mu_2$ in $G$, %
    and the cycle $C$ in $B_{\mu_1}$ represents the difference between $\mu_2$ and $\mu_1$,
    we say that $C$ is \emph{available} in $\mu_1$.
    
    Now, define a \emph{elimination sequence} $T$ to be any sequence $T = \mu_1 \to \mu_2 \to \ldots \to \mu_k$ of adjacent edges in $G$, such that $\mu_1$ is the empty matching which pairs no agents, and $T$ is of maximal possible length.
    Observe that the outcome of TTC is defined to be the final matching $\mu_k$ of an elimination sequence.

    We make the following observations regarding any elimination sequence $T = \mu_1\to\ldots\to\mu_k$:
    \begin{itemize}
        \item For any fixed pointing graph $B_{\mu_i}$, all of the cycles $C$ in $B_{\mu_i}$ are disjoint. This follows because the pointing graph has out-degree $1$.
        \item If $C$ is available in some $\mu_x$, then there exists a $z>x$ such that $C$ is available in every subsequent $\mu_y$ for $x \le y < z$. %
        This follows from the previous observation, since for each $\mu_y\to\mu_{y+1}$ with $x\le y < z$ with $y$ increasing inductively, the vertices in the cycle $C$ are not changed as we switch from $\mu_y$ to $\mu_{y+1}$, unless the cycle $C$ itself is eliminated. 
        Thus, in particular, $\mu_{z}$ differs from $\mu_{z-1}$ by the elimination of $C$. 
        \item Suppose that in $T$, cycle $C_1$ is available in some $\mu_x$, but $C_2\ne C_1$ eliminated in $\mu_x$ to get $\mu_{x+1}$.
        Then, there exists another elimination sequence $T' = \mu_1 \to \mu_x \to \mu_{x+1}' \to \ldots \to \mu_k'$ which agrees with $T$ up until $\mu_x$, but $C_1$ is eliminated at $\mu_x$ to get $\mu_{x+1}'$, and which ends in the same final matching $\mu_k' = \mu_k$. 
        To show this, we construct $T'$ as follows. 
        After eliminating $C_1$ at $\mu_x$ to get $\mu_{x+1}'$,
        follow the same order of eliminating cycles as in $T$ until cycle $C_1$ is eliminated in $T$---i.e., go from $\mu_{y+1}'$ to $\mu_{y+2}'$ via the same cycle used to go from $\mu_y$ to $\mu_{y+1}$, for each $y\ge x$ such that $C_1$ is not eliminated in $\mu_y\to\mu_{y+1}$ in $T$.
        (All such cycles must be available as needed in $T'$, since before $C_1$ was eliminated in $T$, none of these cycles could have involved agents in $C_1$ in any way.)
        At some point, $C_1$ must be eliminated in $T$, say in $\mu_z \to \mu_{z+1}$. After this point, the elimination sequence $T'$ will from that point onward agree with $T$, i.e., $\mu_w=\mu'_w$ for $w \ge z+1$.
    \end{itemize}

    Now, suppose for contradiction that there are two elimination orderings $T_1$ and $T_2$ which produce different final matchings, and additionally 
    suppose among all such pairs, the index $j > 1$ where $T_1$ and $T_2$ first disagree is \emph{as large as possible}.
    Then, at index $j$, two cycles $C_1$ and $C_2$ are eliminated in $T_1$ and $T_2$, respectively.
    Then, by final observation listed above, we can consider the elimination sequence $T_2'$ that disagrees with $T_1$ at least one step later than $j$ (by eliminated $C_1$), but has the same final matching as $T_2$.
    This contradicts the assumption that $j$ was as large as possible.

    This proves that all elimination sequences must produce the same final matching, which is the outcome of TTC. 
    This proves the result.
\end{proof}

\clearpage 

\setcounter{page}{1}
\renewcommand{\thepage}{S.\arabic{page}}
\setcounter{table}{0} 
\renewcommand{\thetable}{S.\arabic{table}}
\setcounter{figure}{0} 
\renewcommand{\thefigure}{S.\arabic{figure}}
\setcounter{algorithm}{0} 
\renewcommand{\thealgorithm}{S.\arabic{algorithm}}
\setcounter{footnote}{0} 
\renewcommand{\thefootnote}{\arabic{footnote}}

\part*{Supplemental Material}

\renewcommand{\thesection}{S}

\section{Mathematical Model of Algorithms}
\label{sec:model-descriptions}

In this appendix, we define from first-principles a mathematical model of descriptions of mechanisms which can express all our results.

We introduce the notion of an \emph{extensive-form description}.
For generality, we state this definition in terms of a general mechanism design environment with players $1,\ldots,n$, type spaces $\T_1,\ldots,\T_n$, and outcome space $A$.
At a technical level, an extensive-form description is similar to an extensive-form mechanism, except that different branches may ``merge,''
i.e., %
the underlying game tree is actually a directed acyclic graph (DAG).\footnote{
  Alternatively, extensive-form descriptions can be viewed as finite automata where state transitions are given by querying the types of players. %
}
Note, however, that the interpretation is different from that of an extensive-form mechanism: Rather than modeling an interactive process where the players may act multiple times, an extensive-form description spells out the steps used to calculate some result by iteratively querying the directly-reported types of the players.

We formally define three types of extensive-form descriptions, corresponding to our three description outlines: outcome descriptions, menu descriptions, and menu-in-outcome descriptions.

\begin{definition}[Extensive-Form Descriptions]
  \label{def:decision-dag}\leavevmode
  \begin{itemize}
  \item
  An \emph{extensive-form description} in some environment is defined by a directed graph on some set of vertices $V$.\footnote{
    Formally, a directed graph $G$ on vertices $V$ is some set of ordered pairs $G\subseteq V\times V$. An element $(v,w)\in G$ is called an \emph{edge} from $v$ to $w$. A \emph{source} (resp., \emph{sink}) vertex is any $v$ where there exists no vertex $w$ with an edge from $w$ to $v$ (resp., from $v$ to $w$).
  } 
  There is a (single) root vertex $s \in V$, and the vertices of $V$ are organized into \emph{layers} $j=1,\ldots,L$ such that each edge goes between layer $j$ and $j+1$ for some $j$. For a vertex $v$, let $S(v)$ denote the edges outgoing from~$v$. Each vertex $v$ with out-degree at least~$2$ is associated with some player $i$, whom the vertex is said to \emph{query}, and some \emph{transition function} $\ell_v: \T_i \to S(v)$ from types of player $i$ to edges outgoing from $v$. (It will be convenient to also allow vertices with out-degree~$1$, which are not associated with any player.) For each type profile $(t_1,\ldots,t_n)$, the \emph{evaluation path} on $(t_1,\ldots,t_n) \in \T_1\times\ldots\times\T_n$ is defined as follows: Start in the root vertex~$s$, and whenever reaching any non-terminal vertex $v$ that queries a player $i$ and has transition function $\ell_v$, follow the edge $\ell_v(t_i)$.

  \item
  An \emph{extensive-form outcome description} of a mechanism $f$ is an extensive-form description in which each terminal vertex is labeled by an outcome, such that for each type profile $(t_1,\ldots,t_n)\in\T_1\times\T_n$, the terminal vertex reached by following the evaluation path on $t\in T$ is labeled by the outcome $f(t_1,\ldots,t_n)$.

  \item
  An \emph{extensive-form menu description} of a social choice function $f$ for player $i$ is an extensive-form description with $k+1$ layers, such that 
  (a) each vertex preceding layer~$k$ queries some player other than $i$,
  (b) each vertex $v$ in layer $k$ queries player $i$ and is labeled by some set $M(v)\subseteq A_i$, such that if $v$ is on the evaluation path on a type profile $(t_1,\ldots,t_n)\in\T_1\times\T_n$, then $M(v) = \Menu_{t_{-i}}$ is the menu of player~$i$ with respect to $t_{-i}$ in $f$, and 
  (c) each (terminal) vertex $v$ in the final layer $k+1$ is labeled by an outcome for player $i$,\footnote{
    Formally, in a general mechanism design environment, an \emph{outcome of player $i$} (or, an \emph{$i$-outcome}) is a maximal set $E$ of outcomes such that all possible types of player $i$ in $\T_i$ view each outcome in $E$ as equally desirable. 
  } such that if $v$ is reached by following the evaluation path on a type profile $(t_1,\ldots,t_n)$, then $v$ is labeled by $i$'s outcome in $f(t_1,\ldots,t_n)$.

  \item
  \label{def:MTR}
  An \emph{extensive-form menu-in-outcome description} of $f$ for player $i$ is an extensive-form outcome description such that, for some $k$, the first $k+1$ layers are an extensive-form menu description.
  \end{itemize}
\end{definition}

For a concrete example of an extensive-form description, we consider a menu description of a second price auction.\footnote{
  While we have not formally defined menus or menu descriptions in non-matching environments, they naturally generalize by considering the menu of $i$ induced by reports $t_{-i}$ to be the set of $i$'s outcomes consistent with $t_{-i}$. 
} 
In this mechanism, a bidder's menu consists of two options: winning the item and playing the highest bid placed by any other bidder, or winning nothing and paying nothing.
Thus, a menu description can be given as follows: 
\begin{enumerate}[(1)]
    \item Your ``price to win'' the item will be set to the highest bid placed by any other player.
    \item  If your bid is higher than this ``price to win,'' then you will win the item and pay this price. Otherwise, you will win nothing and pay nothing.
\end{enumerate}

An extensive-form description can formalize this menu description by querying the other bidders one-by-one, while keeping track of only the highest bid placed by any of them. \autoref{fig:full-menu-only-2pa} provides an illustration.

\begin{figure}[hbt]

\tikzset{%
  zeroarrow/.style = {-stealth,dashed},
  onearrow/.style = {-stealth,solid},
  c/.style = {circle,draw,solid,minimum width=1em,
        minimum height=1em},
  r/.style = {rectangle,draw,solid,minimum width=2em,
        minimum height=2em}
}   
\begin{minipage}{0.55\textwidth}
  \hspace{-0.2in}
\begin{tikzpicture}[node distance=1cm and 1cm]\scriptsize
   \node[c] (emp) {};

   \node[c] (l13) at (1.4,1.5)  {\$3};
   \node[c] (l12) at (1.4,0.5)  {\$2};
   \node[c] (l11) at (1.4,-0.5) {\$1};
   \node[c] (l10) at (1.4,-1.5) {\$0};

   \draw[onearrow] (emp) -- (l13) node[near start, above, yshift=5pt]{\$3};
   \draw[onearrow] (emp) -- (l12) node[near end, above]    {\$2};
   \draw[onearrow] (emp) -- (l11) node[near end, above]    {\$1};
   \draw[onearrow] (emp) -- (l10) node[near start, below, yshift=-3pt]{\$0};

   \node[c] (l23) at (3.0,1.5)  {\$3};
   \node[c] (l22) at (3.0,0.5)  {\$2};
   \node[c] (l21) at (3.0,-0.5) {\$1};
   \node[c] (l20) at (3.0,-1.5) {\$0};

   \draw[onearrow] (l10) -- (l20) ;
   \draw[onearrow] (l10) -- (l21) ;
   \draw[onearrow] (l10) -- (l22) ;
   \draw[onearrow] (l10) -- (l23) ;
   \draw[onearrow] (l11) -- (l21) ;
   \draw[onearrow] (l11) -- (l22) ;
   \draw[onearrow] (l11) -- (l23) ;
   \draw[onearrow] (l12) -- (l22) ;
   \draw[onearrow] (l12) -- (l23) ;
   \draw[onearrow] (l13) -- (l23) ;

   \node (dots) at (3.7,0.0) {{\large\ldots}};

   \node[c] (l33) at (5.0,1.5)  {\$3};
   \node[c] (l32) at (5.0,0.5)  {\$2};
   \node[c] (l31) at (5.0,-0.5) {\$1};
   \node[c] (l30) at (5.0,-1.5) {\$0};

   \node (p1) at (0.1,  -1.2) {{\normalsize$\underbrace{\ \ \ \ }_{\text{Bidder 1}}$}};
   \node (p2) at (1.4,-2.1) {{\normalsize$\underbrace{\ \ \ \ }_{\text{Bidder 2}}$}};
   \node (p3) at (3,  -2.1) {{\normalsize$\underbrace{\ \ \ \ }_{\text{Bidder 3}}$}};
   \node (pn) at (5,  -2.1) 
     {{\normalsize$\underbrace{\ \ \ \ }_{\text{Bidder } n}$}};

   \draw[onearrow] ($($(l20)!.5!(l30)$)$) -- (l30) ;
   \draw[onearrow] ($($(l20)!.55!(l31)$)$) -- (l31) ;
   \draw[onearrow] ($($(l20)!.7!(l32)$)$) -- (l32) ;
   \draw[onearrow] ($($(l20)!.7!(l33)$)$) -- (l33) ;
   \draw[onearrow] ($($(l21)!.5!(l31)$)$) -- (l31) ;
   \draw[onearrow] ($($(l21)!.6!(l32)$)$) -- (l32) ;
   \draw[onearrow] ($($(l21)!.6!(l33)$)$) -- (l33) ;
   \draw[onearrow] ($($(l22)!.5!(l32)$)$) -- (l32) ;
   \draw[onearrow] ($($(l22)!.5!(l33)$)$) -- (l33) ;
   \draw[onearrow] ($($(l23)!.5!(l33)$)$) -- (l33) ;

   \node[r] (l43) at (7.2,1.5)  {win for \$3};
   \node[r] (l42) at (7.2,0.5)  {win for \$2};
   \node[r] (l41) at (7.2,-0.5) {win for \$1};
   \node[r] (l40) at (7.2,-1.5) {win for \$0};
   \node[r] (l4l) at (6.8,-2.8)   {lose};

   \draw[onearrow] (l30) -- (l40) ;
   \draw[onearrow] (l31) -- (l41) ;
   \draw[onearrow] (l32) -- (l42) ;
   \draw[onearrow] (l33) -- (l43) ;

   \draw[onearrow] (l30) -- (l4l) ;
   \draw[onearrow] (l31) -- (l4l) ;
   \draw[onearrow] (l32) -- (l4l) ;
   \draw[onearrow] (l33) -- (l4l) ;
\end{tikzpicture}
\end{minipage}
\hfill
\begin{minipage}{0.4\textwidth}
  \vspace{-0.4em}
  \caption[Extensive-form Mechanism]{
  An extensive-form menu description for bidder $n$ in a second-price auction
  }
  \label{fig:full-menu-only-2pa}

  {\footnotesize \textbf{Note:}
  The second-to-last layer is labeled with bidder $n$'s menu, abbreviated in the figure by the price she must pay to win the item. \par  }
  \end{minipage}

\end{figure}

More broadly, any precise algorithm taking players types as inputs
induces an extensive-form description in a natural way: the vertices in layer $j$ are the possible states of the algorithm after querying the types of different players altogether $j$ times. 
In particular, our positive results (\autoref{alg:Phpm} and \autoref{alg:individ-dict-ttc}) correspond to extensive-form descriptions.
The definitions of all our simplicity desiderata %
(\autoref{def:preference-linear} and \autoref{def:pick-an-object} below)
also extend naturally to extensive-form descriptions.
Moreover, the proofs of our impossibility theorems (\autoref{thrm:MTR-DA-LB} and \autoref{prop:pick-an-object} below) hold, mutatis mutandis, for the relevant class of extensive form descriptions.

In addition to providing a self-contained mathematical language for expressing our results, the definition of an extensive-form description allows us to clarify some ways in which our impossibility results are strong. 
Namely, while algorithms are often required to work for any number of players, our impossibility results hold even if one can use a separate extensive-form description for each number of players $n$,
and regardless of the computational complexity of such a description. 
Relatedly, our impossibility results follow from direct combinatorial arguments and do not depend on any complexity-theoretic conjectures such as $P\ne NP$.

\renewcommand{\thesection}{T}

\section{On Additional Descriptions of DA}
\label{sec:main-body-delicate-DA-algs}
\label{sec:delicate-DA-algs}

In this appendix, we present additional findings regarding descriptions of DA.
We examine a broad classification of mechanism descriptions. 
While we uncover additional descriptions of DA, we find that all such uncovered descriptions (beyond the traditional one and \autoref{alg:Phpm}) are unintuitive and convoluted algorithms that are impractical for real-world use.

To motivate our search for additional descriptions of DA, consider the outline of menu-in-outcome descriptions, which provided our highly-useful \autoref{alg:individ-dict-ttc} for TTC.
Our description in \autoref{alg:individ-dict-ttc} satisfies applicant-proposing and linear-memory, that may be regarded  as certain formal simplicity properties.
Our main impossibility theorem (\autoref{thrm:MTR-DA-LB}) shows that applicant-proposing menu-in-outcome description of DA must, in some formal sense, be complex; formally, they cannot be linear-memory.
However, this theorem does not give any impossibility result for menu-in-outcome descriptions of DA which---like our menu description of DA, \autoref{alg:Phpm}---are \emph{institution}-proposing.\footnote{
   We use the term institution-proposing to mean the definition perfectly analogous to applicant-proposing (\autoref{def:preference-linear}), in which sides of the market are interchanged.
} 
Given this, one might still hope for a useful institution-proposing menu-in-outcome description of DA, which might yield an alternative outcome description of DA together with a simple proof of its strategyproofness.

Perhaps surprisingly, in \autoref{sec:HPDA-MTR} we construct a new institution-proposing menu-in-outcome description of DA which is, in fact, linear memory.
Unfortunately, this description is \emph{exceedingly} unintuitive and convoluted.
Indeed, as one can see from the details in \autoref{sec:HPDA-MTR}, 
this description is a highly technical algorithm that requires careful bookkeeping to maintain its linear-memory.
Thus, in contrast to DA's traditional description and our \autoref{alg:Phpm}, this algorithm is impractical for describing DA to real-world participants.

Motivated by the intricacies of the description we uncover in \autoref{sec:HPDA-MTR}, in \autoref{sec:non-local-appendix}, we additionally use an established formal simplicity property to demonstrate a sense in which institution-proposing menu-in-outcome descriptions of DA \emph{must be} convoluted and impractical.
Our linear-memory property used in \autoref{thrm:MTR-DA-LB} does not suffice for this purpose (since our convoluted descriptions in \autoref{sec:delicate-DA-algs} satisfy this flexible property).
Instead, we use the \emph{pick-an-object} simplicity desideratum of \citet{BoH20}.
We prove that institution-proposing menu-in-outcome descriptions for DA cannot be pick-an-object.\footnote{
  In \autoref{sec:non-local-appendix}, we demonstrate more generally that for DA, institution-proposing outcome descriptions---and thus menu-in-outcome descriptions as a special case---cannot be pick-an-object.
} 
Briefly and informally, this means that all such descriptions must learn the match of some applicant $d$ when making queries which seem unrelated to $d$,
showing a precise sense in which such descriptions cannot be simple.
Combined with our main impossibility result (\autoref{sec:mtr-matching}), this shows that one-side-proposing menu-in-outcome descriptions of DA cannot (in appropriate senses) be simple.

More broadly, in pursuit of potentially useful descriptions of DA, we consider a broad classification of matching mechanism descriptions.
We consider applicant-proposing descriptions (like traditional ones), and institution-proposing descriptions (like \autoref{alg:Phpm}).
We consider our three description outlines: menu descriptions, outcome descriptions, and menu-in-outcome descriptions.
Altogether, this gives six classes of one-side-proposing descriptions with one of these outlines.
In this appendix, we construct linear-memory descriptions of DA of \emph{every} class that is not ruled out by our main impossibility result \autoref{thrm:MTR-DA-LB}. 
Unfortunately, all of the additional descriptions %
are (like our institution-proposing menu-in-outcome description) exceedingly unintuitive and convoluted algorithms.
See \autoref{fig:All-Delicate-DA-Algs} for an overview of all our descriptions and results for DA.

\begin{table}[htbp]
  \caption{Classification of descriptions of DA}
  \label{fig:All-Delicate-DA-Algs}
  \begin{center}
  \resizebox{\textwidth}{!}{
  {\small
  \begin{tabular}{ccccc}
  \toprule
    & %
    & \makecell{Menu
        \\ Description} 
    & %
        \makecell{Outcome
        \\ Description} 
    & \makecell{Menu-in-Outcome 
        \\ Description} 
    \\ \cmidrule{3-5}
    \\[-0.5em]
    \makecell{Applicant
        \\proposing}
    & %
    & %
        \makecell{Unintuitive, convoluted 
        \\ algorithm in \autoref{sec:DPDA-Menu}.  }
    & %
        \makecell{
        Traditional  
        \\ DA algorithm.}
    & %
        \makecell{ \ul{\textbf{Impossible}} (without 
        \\ quadratic memory) 
        \\ by \autoref{thrm:MTR-DA-LB}. 
        }
    \\ 
    \\[-0.5em]
    \makecell{Institution
        \\proposing}
    & %
    & \multicolumn{1}{c}{
        \makecell{\autoref{alg:Phpm} 
        \\ in \autoref{sec:parallen-hpda}. }
        }
    & %
        \makecell{Unintuitive, convoluted
        \\ algorithm in \autoref{sec:DPDA-AKL-Alg}. }
    & \makecell{Unintuitive, convoluted
        \\ (e.g., not pick-an-object)
        \\ algorithm in \autoref{sec:HPDA-MTR}. }
    \\[-0.5em]
    \\ \bottomrule
  \end{tabular}
  }
  }
  \end{center}
    {\footnotesize \textbf{Notes:}
    We consider descriptions which either read preferences in an applicant-proposing manner or read priorities in an institution-proposing manner.
    We consider three description outlines: menu descriptions (conveying strategyproofness), outcome descriptions (conveying the fully matching), or menu-in-outcome descriptions (conveying both). \par }
\end{table}

All told, our results exhaustively consider all classes of descriptions of DA that are one-side-proposing and fit one of our three description outlines. 
Within this classification, we find two simple and practical descriptions of DA: the traditional one, and our menu description.
This suggests that within our framework, simple descriptions of DA face a trade-off between conveying strategyproofness and conveying the full outcome matching.

The organization of this appendix is as follows.
We present an institution-proposing outcome description of DA, adapted from \citet{AshlagiKL17}, in \autoref{sec:DPDA-AKL-Alg}.
We present our applicant-proposing menu description in \autoref{sec:DPDA-Menu}.
We present our institution-proposing menu-in-outcome description in \autoref{sec:HPDA-MTR}.
We present our supplemental impossibility theorem for DA in \autoref{sec:non-local-appendix}.

\subsection{Institution-proposing outcome description of DA}
\label{sec:DPDA-AKL-Alg}

First, we construct an \emph{institution}-proposing linear-memory outcome description of DA.
Interestingly, essentially this same algorithm was used as a lemma by
\cite{AshlagiKL17} (henceforth, AKL).\footnote{
  For context, \cite{AshlagiKL17} needs such an algorithm to analyze (for a
  random matching market) the expected ``gap'' between the applicant and
  institution optimal stable matching.
  Their algorithm builds on the work of \cite{ImmorlicaM05}, and is also
  conceptually similar to algorithms for constructing the ``rotation
  poset'' in a stable matching instance \cite{GusfieldI89}
  (see also \cite{CaiT19}).
}
For notational convenience, throughout the rest of this appendix, we refer to the priorities of institutions as ``preferences.'' 
We also denote the set of applicants by $\D$, the set of institutions by $\H$, and (when relevant) we describe the menu to applicant $d_*$.

\newcommand{\Hrej}{\H_{\mathrm{*}}}
\newcommand{\Dterm}{\D_{\mathrm{term}}}
\newcommand{\muSure}{\mu}

\begin{theorem}[Adapted from \citealp{AshlagiKL17}]
  \autoref{alg:AKL-alg} computes the applicant-optimal stable outcome.
  Moreover, \autoref{alg:AKL-alg} is an %
  institution-proposing and
  $\widetilde \BigO(n)$-memory description.
\end{theorem}
\begin{proof}
  AKL refer to the sides of the market as ``men'' and ``women'', and define
  ``Algorithm 2 (MOSM to WOSM)'', a men-proposing algorithm for the
  women-optimal stable matching.
  \autoref{alg:AKL-alg} follows the exact same order of proposals as this algorithm from AKL. The only
  difference apart from rewriting the algorithm in a more ``pseudocode''
  fashion is that \autoref{alg:AKL-alg} performs bookkeeping in a
  slightly different way---Algorithm 2 from AKL maintains
  \emph{two} matchings, and their list $V$ keeps track of only women
  along a rejection chain; our list $V$ keeps track of both applicants and
  institutions along the rejection chain (and can thus keep track of the
  ``difference between'' the two matchings which AKL tracks).

  Moreover, the algorithm is institution-proposing, by construction.
  Furthermore, as it runs it stores only a single matching $\muSure$, a set
  $\Dterm \subseteq \D$, and
  the ``rejection chain'' $V$ (which can contain each applicant $d\in\D$
  \emph{at most once}). Thus, it uses memory $\widetilde O(n)$.
\end{proof}

\begin{algorithm}
  \caption{An institution-proposing outcome description of  DA}
    \label{alg:AKL-alg}

    {\footnotesize    
  \begin{flushleft}
    \textbf{Input:}
    Preferences of all applicants $\D$ and institutions $\H$ \\

    \textbf{Output:}
    The result of applicant-proposing deferred acceptance
    \end{flushleft}
\begin{algorithmic}[1]
  \LComment{We start from the institution-optimal outcome, and slowly 
    ``improve the match for the applicants''}
  \State Let $\muSure$ be the result of institution-proposing DA
  \State Let $\Dterm$ be all applicants unmatched in $\muSure$
    \Comment{$\Dterm$ is all applicants at their optimal stable partner}

  \While { $\Dterm \ne \D$ }
    \label{line:outermostPickDMystery}
    \State Pick any $\widehat d\in \D\setminus \Dterm$, and set 
      $d = \widehat d$
      \State Let $h = \muSure(d)$ and set $V = [ (d, h) ]$
    \While{$V \ne []$ }
      \State Let $d \leftarrow \Call{NextAcceptingApplicant}{\muSure,h}$
      \If {$d = \emptyset$ or $d \in \Dterm$}
        \LComment{In this case, all the applicants in $V$ have reached
          their optimal stable partner.}
        \State Add every applicant which currently appears in $V$ to $\Dterm$
        \State Set $V = []$
      \ElsIf {$d \ne \emptyset$ and $d$ does not already appear in $V$}
        \Comment {Record this in the rejection chain}
        \State Add $(d,\muSure(d))$ to the end of $V$
        \State Set $h \leftarrow \muSure(d)$ 
          \Comment{The next proposing institution will be the ``old match'' of $d$.}
      \ElsIf {$d \ne \emptyset$ and $d$ appears in $V$}
        \LComment{A new ``rejection rotation'' should be written to $\muSure$}
        \State \Call{WriteRotation}{$\muSure$, $V$, $d$, $h$}
          \Comment{Updates the value of $\muSure$, $V$, and (possibly) $h$}
      \EndIf
    \EndWhile

  \EndWhile
  \State \textbf{Return} $\muSure$
  \\

  \Function{NextAcceptingApplicant}{$\mu$, $h$}
    \Repeat
      \State \textbf{Query} $h$'s preference list to get their next choice $d$
    \Until{ $d = \emptyset$ or $h \succ_d \mu(d)$ }
    \State \textbf{Return} d 
  \EndFunction
  \vspace{0.15in}
  \Procedure{WriteRotation}{$\muSure$, $V$, $d$, $h$}
    \State Let $T = (d_1,h_1),\ldots,(d_k,h_k)$ be the suffix of $V$
      starting with the first occurrence of $d = d_1$
    \State Update $\muSure$ such that $\muSure(h_i) = d_{i+1}$
      (for each $i=1,\ldots,k$, with indices taken mod $k$)
    \LComment{Now we fix $V$ and $h$ to reflect the new $\muSure$}
    \State Update $V$ by removing $T$ from the end of $V$
    \If { $V \ne \emptyset$ }
      \State Let $(d_0, h_0)$ denote the final entry remaining in $V$
      \LComment{The next proposing institution will either $h_k$ or
        $h_0$, depending on which $d_1$ prefers}
        \If { $h_k \succ_{d_1} h_0$ }
          \State Set $h \leftarrow h_0$
        \ElsIf { $h_0 \succ_{d_1} h_k$ }
          \State Add $(d_1, h_k)$ to the end of $V$
          \State Set $h \leftarrow h_k$
        \EndIf
    \EndIf
  \EndProcedure

\end{algorithmic}
}

\end{algorithm}

\subsection{Applicant-proposing menu description of DA}
\label{sec:DPDA-Menu}

\newcommand{\Hterm}{\H_{\mathrm{term}}}
\newcommand{\Hpicky}{\H_{\mathrm{picky}}}
\newcommand{\Hmenu}{\H_{\mathrm{menu}}}

\newcommand{\dtry}{d^{\mathrm{try}}}
\newcommand{\dfail}{d^{\mathrm{fail}}}
\newcommand{\htry}{h^{\mathrm{try}}}
\newcommand{\hfail}{h^{\mathrm{fail}}}

In this section, we construct an applicant-proposing linear-memory menu description of DA.
On an intuitive level, the algorithm works as per the ``brute-force'' menu description in
\autoref{ex:generic-mtr-matching}, but
avoiding the need to ``restart many times'' by using the various 
properties of DA and by careful
bookkeeping (to intuitively ``simulate all of the separate runs of the
brute-force description on top of each other'').

On a formal level, we describe the algorithm as a variant of
\autoref{alg:AKL-alg}.
The proof constructing this algorithm uses a bijection between 
one applicant's menu in DA under some preferences, and some data concerning the 
\emph{institution}-optimal stable
matching under a related set of preferences. 
Our applicant-proposing menu description is then phrased as a variation of
\autoref{alg:AKL-alg},
which (reversing the roles of applicants and institutions from the
presentation in \autoref{alg:AKL-alg}) is able to compute the
institution-optimal matching using an applicant-proposing algorithm.

Fix an applicant $d_*$ and set $P$ that contains
(1) the preferences of all applicants
$\D\setminus\{d_*\}$ \emph{other than $d_*$} over $\H$ 
and (2) the preferences of all institutions $\H$ over all applicants $\D$
(including $d_*$).
We now define the ``related set of preferences'' mentioned above.
Define the \emph{augmented preference list} $P'$ as follows:
For each $h_i \in \H$, we create two additional applicants 
$\dtry_i, \dfail_i$ and two additional institutions $\htry_i, \hfail_i$.
The entire preference lists of these additional agents in $P'$ are as follows:
for each $h_i \in \H$: 
\begin{align*}
  \dtry_i\  & : \ \htry_i \succ h_i \succ \hfail_i
  &&& 
  \dfail_i\  & : \ \hfail_i \succ \htry_i
  \\
  \htry_i\  & : \  \dfail_i \succ \dtry_i
  &&&
  \hfail_i\ & : \ \dtry_i \succ \dfail_i
\end{align*}
We need to modify the preference lists of the pre-existing institutions as well.
But this modification is simple:
for each $h_i \in \H$, replace $d_*$ with $\dtry_i$.
The institution-optimal matching for this augmented set of preferences $P'$
will encode the menu, as we need.\footnote{
  For the reader familiar with the rotation poset of stable matchings \citep{GusfieldI89}, the
  intuition for this construction is the following: having $\htry_i$
  reject applicant $\dtry_i$ corresponds to $d_*$ ``trying'' to get $h_i
  \in \H$, i.e., ``trying to see if $h_i$ is on their menu.''
  If $d_*$ would be rejected by $h_i$ after proposing,
  either immediately or after some ``rejection rotation,'' then so will
  $\dtry_i$ (because they serve the same role as $d_*$ at $h_i$). 
  So if a rotation swapping $\htry_i$ and $\hfail_i$ exists (e.g., in the
  institution optimal matching) then $h_i$ is \emph{not} on $d_*$'s menu.
  On the other hand, if $d_*$ could actually permanently match to $h_i$, 
  then $\dtry_i$ proposing to $h_i$ will result in a rejection chain that
  ends at some other applicant (either exhausting their preference list or
  proposing to an institution in $\Hterm$), which does not result in
  finding a rotation (or writing a new set of matches as we ``work towards
  the institution-optimal match''). Thus, if 
  $\htry_i$ and $\hfail_i$ do not swap their matches in the
  institution-optimal stable outcome, then $h_i$ \emph{is} on $d_*$'s menu.
}

\begin{proposition}
  An institution $h_i \in \H$ is on $d_*$'s menu in APDA with
  preferences $P$ 
  if and only if
  in the institution-optimal stable matching with the augmented preferences
  $P'$, we have $\htry_i$ matched to $\dtry_i$.
\end{proposition}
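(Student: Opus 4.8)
The plan is to reduce the statement to \autoref{thrm:phpm-correctness} together with an analysis of how each ``gadget'' behaves inside a stable matching of $P'$. By \autoref{thrm:phpm-correctness}, $h_i$ lies on $d_*$'s menu under $P$ if and only if $d_* \succ_{h_i} \mu_{-d_*}(h_i)$, where $\mu_{-d_*} := \mathit{IPDA}(P_{-d_*})$ is the institution-optimal stable matching of the market with $d_*$ removed. Since the list of $h_i$ in $P'$ is that of $P$ with $d_*$ replaced by $\dtry_i$, this condition reads $\dtry_i \succ_{h_i} \mu_{-d_*}(h_i)$, with $\dtry_i$ occupying the exact slot that $d_*$ did. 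The goal is thus to show that $\htry_i$ is matched to $\dtry_i$ in the institution-optimal stable matching of $P'$ precisely when $\dtry_i \succ_{h_i} \mu_{-d_*}(h_i)$.

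The key structural step I would prove first is that in every stable matching $\nu$ of $P'$, no real institution $h_j$ is ever matched to $\dtry_j$. This is a short local argument: if $\nu(h_j)=\dtry_j$, then since $\dtry_j$ prefers $\htry_j$ to $h_j$, stability forces $\nu(\htry_j)=\dfail_j$ (the only applicant $\htry_j$ ranks above $\dtry_j$); but then $\dfail_j$ prefers $\hfail_j$ to $\htry_j$, forcing $\nu(\hfail_j)=\dtry_j$, contradicting $\nu(h_j)=\dtry_j$. Two consequences follow. First, each gadget closes up internally, so $\htry_i$ is matched either to $\dtry_i$ or to $\dfail_i$. Second, the restriction of $\nu$ to the real agents is a stable matching of $P_{-d_*}$: a blocking real pair would block in $P'$ as well, because real applicants' lists are identical and each $h_j$ ranks the real applicants in the same relative order in $P'$ and in $P_{-d_*}$. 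In particular $\nu(h_i)$ is a real applicant (or $\emptyset$), and by institution-optimality of $\mu_{-d_*}$ we have $\mu_{-d_*}(h_i)\succeq_{h_i}\nu(h_i)$.

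With this in hand, the two directions are quick. If $\dtry_i \succ_{h_i}\mu_{-d_*}(h_i)$ (on the menu), I claim every stable $\nu$ has $\htry_i$ matched to $\dtry_i$: otherwise $\dtry_i$ is matched to $\hfail_i$ or is unmatched, hence strictly prefers $h_i$ to its assignment, while $h_i$ prefers $\dtry_i$ to $\nu(h_i)$ since $\dtry_i\succ_{h_i}\mu_{-d_*}(h_i)\succeq_{h_i}\nu(h_i)$; thus $(\dtry_i,h_i)$ blocks, a contradiction. As the institution-optimal stable matching of $P'$ is itself stable, $\htry_i$ is matched to $\dtry_i$ there. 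Conversely, if $\mu_{-d_*}(h_i)\succ_{h_i}\dtry_i$ (off the menu), I exhibit one stable matching of $P'$ in which $\htry_i$ receives its first choice $\dfail_i$: take the real part to be $\mu_{-d_*}$, put gadget $i$ in the configuration $\dtry_i\to\hfail_i,\ \dfail_i\to\htry_i$, and put every other gadget $j$ in $\dtry_j\to\htry_j,\ \dfail_j\to\hfail_j$. Checking stability is routine—the only nontrivial pair $(\dtry_i,h_i)$ does not block because $h_i$ prefers $\mu_{-d_*}(h_i)$ to $\dtry_i$. Since $\htry_i$ attains its top choice $\dfail_i$ in a stable matching, institution-optimality forces $\htry_i$ to be matched to $\dfail_i\neq\dtry_i$ in the institution-optimal stable matching as well.

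The main obstacle is the structural lemma ruling out $\nu(h_j)=\dtry_j$, since everything downstream relies on gadgets closing up internally and on the real part being a stable matching of $P_{-d_*}$; the second-direction argument also hinges on using institution-optimality (each institution gets its best stable partner) rather than merely stability, since off the menu both gadget configurations are individually stable and only optimality selects the right one.
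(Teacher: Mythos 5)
Your proof is correct, and it takes a genuinely different route from the paper's. The paper proves the proposition dynamically: it invokes an external lemma (from the cited work of Cai and Thomas) stating that $h^{\mathrm{try}}_i$ has a unique stable partner in $P'$ if and only if it goes unmatched after rejecting $d^{\mathrm{try}}_i$, and then traces the resulting rejection chains to argue that they replicate exactly the rejections caused by $d_*$ submitting the singleton list $\{h_i\}$ in $P$ — i.e., it works directly from the definition of the menu via strategyproofness. You instead argue statically about the set of stable matchings of $P'$: your structural lemma that no stable matching pairs $h_j$ with $d^{\mathrm{try}}_j$ (hence each gadget closes up internally and the restriction of any stable matching to the real agents is stable for $P_{-d_*}$) is the new ingredient, and you then reduce the menu condition to the characterization $d_* \succ_{h_i} \mu_{-d_*}(h_i)$ already established in \autoref{thrm:phpm-correctness}, which is legitimate since that theorem is proved independently of this proposition. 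Your on-the-menu direction correctly forces $(d^{\mathrm{try}}_i, h_i)$ to block unless $h^{\mathrm{try}}_i$ keeps $d^{\mathrm{try}}_i$ in every stable matching, and your off-the-menu direction correctly exhibits an explicit stable matching giving $h^{\mathrm{try}}_i$ its top choice $d^{\mathrm{fail}}_i$ and then invokes institution-optimality. What your approach buys is self-containment (no appeal to the unique-stable-partner lemma) and a cleaner picture of which gadget configurations are stable; what the paper's approach buys is a tighter connection to the rejection-chain intuition that motivates the gadget in the first place and that the surrounding algorithm (the applicant-proposing simulation of the AKL algorithm on $P'$) actually exploits.
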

\begin{proof}
  For both directions of this proof, we use the following
  lemma, which is a special case of the main technical lemma 
  in~\cite{CaiT22}:
  \begin{lemma}
    In $P'$, each $\htry_i$ has a unique stable partner if and only if, when
    $\htry_i$ rejects $\dtry_i$ (i.e. if $\htry_i$ submitted a list
    containing only $\dfail_i$, and all other preferences remained the same),
    $\htry_i$ goes unmatched (say, in the applicant-optimal matching).
  \end{lemma}
  Note that each $\htry_i$ is matched to $\dtry_i$ in the applicant-optimal
  matching with preferences $P'$ (and the matching among all original
  applicants and institutions is the same as $\mu_{\mathrm{app}}$).

  ($\Leftarrow$) By the lemma, if $\htry_i$ is matched to $\dtry_i$ in the
  institution-optimal matching under $P'$, then $\htry_i$ must go unmatched when
  $\htry_i$ rejects $\dtry_i$. But, after $\htry_i$, we know
  $\dtry_i$ will propose to $h_i$, and some rejection chain may be started.
  Because $\dtry_i$'s very next choice is $\hfail_i$ (and proposing there
  would lead directly to $\htry_i$ receiving a proposal from $\dfail_i$), 
  the \emph{only} way for $\htry_i$ to remain unmatched is if $\dtry_i$
  remains matched to $h_i$. But because (relative to all the original
  applicants) $\dtry_i$ is in the same place as $d_*$ on $h_i$'s preference
  list, the resulting set of rejections in $P'$ will be precisely the same as those
  resulting from $d_*$ submitting a preference list in $P$ which contains
  only $h_i$. In particular, $d_*$ would remain matched at $h_i$ in $P$ if
  they submitted such a list. Thus, $h_i$ is on $d_*$'s menu.

  ($\Rightarrow$) Suppose $\htry_i$ is matched to $\dfail_i$ in the
  institution optimal matching under $P'$. Again, $\htry_i$ must receive a
  proposal from $\dfail_i$ when $\htry_i$ rejects $\dtry_i$. But this can
  only happen if $\dtry_i$ is rejected by $h_i$ (then proposes to
  $\hfail_i$). But because the preferences of the original applicants in $P'$
  exactly corresponds to those in $P$, we know that $d_*$ would get rejected by $h_i$ if they proposed to them in $\mu_{\mathrm{app}}$ under
  $P$. But then $h_i$ cannot be on $d_*$'s menu.
\end{proof}

With this lemma in hand, we can now show that there is an applicant-proposing linear-memory menu description of (applicant-optimal) DA. This description is given in \autoref{alg:dpdaMenu}.
\begin{algorithm}
  \caption{An applicant-proposing menu description of DA}
    \label{alg:dpdaMenu}

  \begin{flushleft}
    \textbf{Input:}
    An applicant $d_*$ and preferences of all applicants $\D\setminus \{d_*\}$ and institutions $\H$ \\

    \textbf{Output:}
    The menu of $d_*$ in applicant-optimal DA given these preferences
    \end{flushleft}
\begin{algorithmic}[1]
\State Simulate the flipped-side version of \autoref{alg:AKL-alg} (such that applicants propose) on preferences $P'$ to get a matching $\mu$
\State \textbf{Return} the set of all institutions $h_i$ such that $\htry_i$ is matched to $\dtry_i$ in $\mu$
\end{algorithmic}
\end{algorithm}
\begin{theorem}
  There is an applicant-proposing, $\widetilde O(n)$ memory menu
  description of (applicant-optimal) DA.
\end{theorem}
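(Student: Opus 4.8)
The plan is to read the theorem off directly from the preceding proposition together with the correctness and complexity guarantees already established for \autoref{alg:AKL-alg}. First I would argue correctness. The preceding proposition shows that an institution $h_i\in\H$ lies on $d_*$'s menu in applicant-optimal DA under $P$ exactly when $\htry_i$ is matched to $\dtry_i$ in the \emph{institution}-optimal stable matching of the augmented market $P'$. Now, \autoref{alg:dpdaMenu} produces its output by computing a matching $\mu$ of $P'$ and returning $\{h_i : \htry_i \text{ is matched to } \dtry_i \text{ in } \mu\}$, so it suffices to verify that the routine it invokes returns the institution-optimal stable matching of $P'$. This is where \autoref{alg:AKL-alg} enters: that algorithm is an institution-proposing description computing the \emph{applicant}-optimal stable matching, so its flipped-side version (interchanging the roles of the two sides, as in the convention for institution-/applicant-proposing descriptions) is an \emph{applicant}-proposing description computing the \emph{institution}-optimal stable matching. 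Running it on $P'$ therefore yields the desired $\mu$, and by the proposition \autoref{alg:dpdaMenu} outputs precisely $d_*$'s menu.

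It then remains to check the two structural claims. For the memory bound, note that the augmentation adds exactly two applicants and two institutions per institution of the original market, so $P'$ has $\Theta(n)$ agents; since the flipped-side \autoref{alg:AKL-alg} uses memory near-linear in the size of the market it is run on (it stores a single matching, a subset of applicants, and one rejection chain containing each applicant at most once), it uses $\widetilde O(n)$ memory on $P'$. For the applicant-proposing property, observe that the only genuine inputs to the menu computation are the preferences of the applicants $\D\setminus\{d_*\}$: the priorities of $\H$ are fixed parameters (in $P'$ they are the fixed original priorities with $d_*$ replaced by the constant $\dtry_i$), and the preferences and priorities of all added agents are hard-coded constants not depending on any report. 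The flipped-side \autoref{alg:AKL-alg} reads each applicant's list once in favorite-to-least-favorite order; restricted to the real applicants $\D\setminus\{d_*\}$ — whose lists over $\H$ are identical in $P$ and $P'$ — this is exactly the applicant-proposing condition of \autoref{def:preference-linear}.

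Finally, to turn the menu \emph{computation} of \autoref{alg:dpdaMenu} into a full extensive-form menu \emph{description} I would append the trivial selection step: having computed the menu set $M$ using only $P_{-d_*}$, read $d_*$'s own preference list in favorite-to-least-favorite order and match $d_*$ to the first institution on it that lies in $M$. This step queries only $d_*$, in applicant-proposing order, and costs $O(\log n)$ additional memory, so the composite description is an applicant-proposing, $\widetilde O(n)$-memory menu description of DA. The only step that is more than routine bookkeeping is the justification that flipping \autoref{alg:AKL-alg} still yields an applicant-proposing, low-memory description \emph{on the real inputs} despite the presence of the hard-coded augmented agents; the point to make carefully is that those agents contribute only constant transition functions and a constant-factor blow-up in market size, and hence affect neither the applicant-proposing status with respect to $\D\setminus\{d_*\}$ nor the $\widetilde O(n)$ memory bound.
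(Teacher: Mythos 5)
Your proposal is correct and follows essentially the same route as the paper: simulate the flipped-side version of \autoref{alg:AKL-alg} on the augmented preferences $P'$, observe that the $O(n)$ hard-coded agents $\dtry_i,\dfail_i,\htry_i,\hfail_i$ contribute only predictable, report-independent behavior (so neither the applicant-proposing property with respect to $\D\setminus\{d_*\}$ nor the $\widetilde O(n)$ memory bound is affected), and invoke the preceding proposition to read the menu off the resulting matching. The paper's own proof is terser and leaves the final selection step implicit, but your added details (including the $O(\log n)$-memory Step~(2)) are routine and consistent with it.
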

\begin{proof}
  The algorithm proceeds by simulating a run of \autoref{alg:AKL-alg}
  on preferences $P'$ (interchanging the role of applicants and
  institutions, so that applicants are proposing).
  This is easy to do while still maintaining the applicant-proposing and
  $\widetilde O(n)$ memory.
  In particular, $P'$ adds only $O(n)$ applicants and institutions, with
  each $\dtry_i$ and $\dfail_i$ making a predictable set of proposals.
  Moreover, the modification made to the preferences lists of the institutions $h \in
  \H$ is immaterial---when such institutions receive a proposal from
  $\dtry_i$, the algorithm can just query their lists for $d_*$.
\end{proof}

\subsection{Institution-proposing menu-in-outcome description of DA}
\label{sec:HPDA-MTR}
\newcommand{\Pnorej}{P_{\mathrm{hold}}}
\newcommand{\RejDag}{\Delta}
\newcommand{\dhold}{d^{\text{hold}}}

In this section, we construct an institution-proposing linear-memory menu-in-outcome description of DA.\footnote{
  For some technical intuition on why such a description might exist,
  consider the construction used in \autoref{thrm:MTR-DA-LB}, and consider
  a menu-in-outcome description for applicant $i$ executed on these
  preferences. To find the menu in this construction with an
  applicant-proposing algorithm, all of the ``top tier rotations'' must be
  ``rotated'', but to find the correct final matching after learning $t_i$,
  some arbitrary subset of the rotations must be ``unrolled'' (leaving only
  the subset of rotations which $t_i$ actually proposes to).
  \autoref{thrm:MTR-DA-LB} shows that all of this information must thus be
  remembered in full. Now consider a run of \autoref{alg:AKL-alg} on these
  preferences (or on a modified form of these preferences where
  institutions' preference lists determine which top tier rotations propose
  to bottom tier rotations). Some subset of top-tier institutions will
  propose to applicant $i$. To continue on with a run of
  \autoref{alg:AKL-alg}, it suffices to undo \emph{exactly one} of
  these proposals. So, if two or more top-tier rotations trigger a
  bottom-tier rotation, then we can be certain that the bottom-tier
  rotation will be rotated, and we only have to remember which bottom-tier
  rotations are triggered by {exactly one} top-tier rotation (which takes
  $\widetilde O(n)$ bits).
}
Throughout this section, let $P|_{d_i : L}$ denote altering preferences $P$
by having $d_i$ submit list $L$.

Unlike our applicant-proposing menu description of DA from
\autoref{sec:DPDA-Menu}, our institution-proposing menu-in-outcome
description cannot be ``reduced to'' another algorithm
such as \autoref{alg:AKL-alg}. However, the algorithm is indeed a modified version of
\autoref{alg:AKL-alg} that ``embeds'' our simple institution-proposing
menu algorithm \autoref{alg:Phpm} (i.e., IPDA where an applicant $d_*$
submits an empty preference list) as the ``first
phase.'' The key difficulty the algorithm must overcome is being able to ``undo
one of the rejections'' made in the embedded run of \autoref{alg:Phpm}.
Namely, the algorithm must match $d_*$ to her top choice from her menu,
and ``undo'' all the rejections caused by $d_*$ rejecting her 
choice.\footnote{
  \autoref{alg:AKL-alg} is independent of the order in which
  proposals are made. Moreover, one can even show that $d_*$ receives 
  proposals from all $h$ on her menu in \autoref{alg:AKL-alg}.
  However, this does not suffice to construct our menu-in-outcome description
  simply by changing the order of \autoref{alg:AKL-alg}. 
  The main reason is this:
  in \autoref{alg:AKL-alg}, the preferences of $d_*$ are already known, so
  $d_*$ can reject low-ranked proposals without remembering the effect that
  accepting their proposal might have on the matching.
  While the ``unrolling'' approach of \autoref{alg:hpda-mtr} is inspired
  by the way \autoref{alg:AKL-alg} effectively ``unrolls rejection chains''
  (by storing rejections in a list $V$ and only writing these rejections to
  $\mu$ when it is sure they will not be ``unrolled''), 
  the bookkeeping of \autoref{alg:hpda-mtr} is far more complicated
  (in particular, the description maintains a DAG $\RejDag$ instead of a list
  $V$).
}
To facilitate this, the description has $d_*$ reject institutions that propose
to $d_*$``as slowly as possible,'' and maintains a delicate 
$\widetilde O(n)$-bit data structure that allows it to undo one of $d_*$'s
rejections.\footnote{
  Interestingly, this ``rolled back state'' is \emph{not} the result of
  institution-proposing DA on preferences $(P, {d_i : \{h_j\}})$, where $h_j$
  is $d_i$'s favorite institution on her menu.
  Instead, it is a ``partial state'' of \autoref{alg:AKL-alg} (when run on
  these preferences),
  which (informally) may perform additional ``applicant-improving
  rotations'' on top of the result, 
  and thus we can continue running 
  \autoref{alg:AKL-alg} until we find the applicant-optimal outcome.
}
The way this data structure works is involved, 
but one simple feature that illustrates how and why it works is the following:
\emph{exactly one} rejection from $d_*$ will be undone, so if some event is
caused by \emph{more than one} (independent) rejection from $d_*$, then
this event will be caused regardless of what $d_*$ picks from the menu.

We present our algorithm in \autoref{alg:hpda-mtr}.
For notational convenience, we define a related set of preferences
$\Pnorej$ as follows:
For each $h_i \in \H$, add a ``copy of $d_*$''
called $d^{\text{hold}}_i$ to $\Pnorej$. The only acceptable institution
for $d^{\text{hold}}_i$ is $h_i$, and if $d_*$ is on
$h_i$'s list, replace $d_*$ with $d^{\text{hold}}_i$ on $h_i$'s list.
Given what we know from \autoref{sec:parallen-hpda}, the proof that this
algorithm calculates the menu is actually fairly simple:

\newcommand{\muLb}{\mu}

\begin{algorithm}
  \caption{An institution-proposing menu-in-outcome description of
     DA}\label{alg:hpda-mtr}
     {\footnotesize
  \begin{flushleft}
    \textbf{Phase 1 input:}
    An applicant $d_*$ and preferences of applicants $\D \setminus \{d_*\}$ and
    institutions $\H$
 
    \textbf{Phase 1 output:}
    The menu $\H_{\mathrm{menu}}$ presented to $d_*$ in (applicant-proposing) DA

    \textbf{Phase 2 input:}
    The preference list of applicant $d_*$

    \textbf{Phase 2 output:}
    The result of (applicant-proposing) DA
    \end{flushleft}
\begin{algorithmic}[1]
  \LComment{\textbf{Phase 1:} }
  \State Simulate a run of $ IPDA(\Pnorej) $ and call the result $\mu'$
  \State Let $\Hrej$ be all those institutions $h_i \in \H$
    matched to $\dhold_i$ in $\muLb'$
    \Comment{These institutions ``currently sit at $d_*$''}
  \State Let $\mu$ be $\mu'$, ignoring all matches of the form $(\dhold_i, h)$
  \State Let $\Hmenu$ be a copy of $\Hrej$
    \Comment{We will grow $\Hmenu$}
  \State Let $\RejDag$ be an empty graph
  \Comment{The ``unroll DAG''. After Phase 1, we'll ``unroll a chain of
    rejections''}

  \While{ $\Hrej \ne \emptyset$ }
    \State Pick some $h \in \Hrej$ and remove $h$ from $\Hrej$
    \label{line:HrejPick}
    \State Add $(d_*, h)$ to $\RejDag$ as a source node
    \State Set $P = \{ (d_*, h) \}$ \Comment{This set stores the ``predecessors
      of the next rejection''}
    \While { $h \ne \emptyset$ }
      \State Let $d \leftarrow \Call{NextInterestedApplicant}{\muLb,\RejDag, h}$

      \Call{AdjustUnrollDag}{$\muLb$, $\RejDag$, $P$, $d$, $h$}
        \Comment{Updates each of these values}

    \EndWhile
  \EndWhile

  \State \textbf{Return} $\Hmenu$
  \LComment{ \textbf{Phase 2:} We now additionally have access to $d_*$'s preferences }
  \State Permanently match $d_*$ to their top pick $h_{\mathrm{pick}}$ from $\Hmenu$
  \State $(\mu, \Dterm) \leftarrow$ 
    \Call{UnrollOneChain}{$\mu, \RejDag, h_{\mathrm{pick}}$}
  \State \textbf{Continue} running the \autoref{alg:AKL-alg}
  until its end, using this $\mu$ and $\Dterm$, starting
  from \autoref{line:outermostPickDMystery}
  \State \textbf{Return} the matching resulting from \autoref{alg:AKL-alg}
\\
\vspace{0.15in}
\Function{NextInterestedApplicant}{$\mu$, $\RejDag$, $h$}
  \Repeat
    \State \textbf{Query} $h$'s preference list to get their next choice $d$
  \Until{ $d \in \{ \emptyset, d_* \}$ OR ($d$ is in $\RejDag$, paired with $h'$ in
    $\RejDag$, and $h \succ_d h'$) OR ($d$ is not in $\RejDag$ and $h
    \succ_d \mu(d)$) }
  \State \textbf{Return} d 
\EndFunction
\vspace{0.15in}
\Procedure{UnrollOneChain}{$\mu$, $\RejDag$, $h_{\text{pick}}$}
  \State Let $(d_0, h_0), (d_1, h_1), \ldots, (d_k, h_k)$ be the
    (unique) longest chain in $\RejDag$ starting from 
    $(d_0, h_0) = (d^*, h_{\mathrm{pick}})$
  \State Set $\muLb(d_i) = h_i$ for $i=0,\ldots,k$
  \State Set $\Dterm = \{d_*, d_1, \ldots, d_k\}$
  \State \Return $(\mu, \Dterm)$
\EndProcedure
\end{algorithmic}
}
\end{algorithm}

\begin{algorithm}
{\footnotesize
\begin{algorithmic}[1]
\Procedure{AdjustUnrollDag}{$\muLb$, $\RejDag$, P, $d$, $h$}
    \If {$d = \emptyset$} 
       \State Set $h = \emptyset$
        \Comment{Continue and pick a new $h$}
    \ElsIf { $d = d^*$ } \Comment{$h$ proposes to $d_*$, so we've
      found a new $h$ in the menu}
    \State Add $h$ to $\Hmenu$
    \State Add $(d_*, h)$ to $\RejDag$
    \State Add $(d_*, h)$ to the set $P$ \Comment{$h$ still proposes;
      the next rejection will have multiple predecessors}
  \ElsIf {$d$ does not already appear in $\RejDag$}
    \Comment {Here $h \succ_d \mu(d)$}
    \label{line:RejDagNoCollision}
    \State Add $(d,\muLb(d))$ to $\RejDag$
      \Comment{Record this in the rejection DAG}
    \State Add an edge from each $p\in P$ to $(d, \muLb(d))$ in
      $\RejDag$, and set $P = \{ (d, \muLb(d)) \} $
    \State Set $h' \leftarrow \muLb(d)$, then $\muLb(d) \leftarrow h$, 
      then $h \leftarrow h'$
      \label{line:ProposerSwitch}
    \LComment{The next proposing institution will be the ``old match'' of $d$.}
  \ElsIf {$d$ appears in $\RejDag$}
    \State \Call{AdjustUnrollDagCollision}{$\muLb$, $\RejDag$, $P$, $d$, $h$}
      \Comment{Updates each of these values}
      \label{line:EndOfUpdates}
  \EndIf
\EndProcedure
\vspace{0.15in}
\Procedure{AdjustUnrollDagCollision}{$\muLb$, $\RejDag$, P, $d$, $h$}
  \State Let $p_1 = (d_1, h_1)$ be the pair where $d = d_1$ appears in $\RejDag$
    \Comment{We know $h \succ_{d_1} h_1$}
  \State Let $P_1$ be the set of all predecessors of $p_1$ in $\RejDag$

  \item[]
  \LComment{First, we drop all rejections from $\RejDag$ which we are now
  sure we won't have to unroll}
    \State Let $(d_1,h_1),\ldots,(d_k,h_k)$ be the (unique) longest
      possible chain in $\RejDag$ starting from $(d_1, h_1)$ \\
      \phantom{.}\qquad such that \emph{each node $(d_j,h_j)$ for $j>1$ has exactly one predecessor}
      \label{line:removeNodes}
  \State Remove each $(d_i, h_i)$ from $\RejDag$, for $i=1,\ldots,k$, and
    remove all edges pointing to these nodes

  \item[]
  \LComment{Now, we adjust the nodes to correctly handle $d_1$ 
  (which might have to ``unroll to $h_{\text{min}}$'')}
  \State Let $h_{\text{min}}$ be the institution among $\{\mu(d_1), h\}$ which
    $d_1$ prefers least
    \label{line:newCollisionNodes}
  \State Let $p_{\text{new}} = (d_1, h_{\text{min}})$; add $p_{\text{new}}$
    to $\RejDag$
  \If{  $h_{\text{min}} = h$ } \Comment{We replace $p_1$ with $p_{\text{new}}$}
    \State Add an edge from each $p \in P_1$ to $p_{\text{new}}$
    \label{line:AddEdgePredException}
    \State Add $p_{\text{new}}$ to $P$ 
      \Comment{$h$ is still going to propose next}
  \Else \Comment{  Here $h_{\text{min}} = \mu(d_1)$; we add
    $p_{\text{new}}$ below the predecessors $P$ }
    \State Add an edge from every $p \in P$ to $p_{\text{new}}$
    \State Set $P = P_1 \cup \{p_{\text{new}}\}$
    \State Set $h' \leftarrow \muLb(d_1)$, then $\muLb(d) \leftarrow h$, 
      then $h \leftarrow h'$
      \Comment{$d_1$'s old match will propose next}
      \label{line:ProposerSwitchCyclic}
  \EndIf

\EndProcedure
\end{algorithmic}
}
\end{algorithm}

\begin{lemma}
  The set $\Hmenu$ output by \autoref{alg:hpda-mtr} is the menu of $d_*$ in
  (applicant-proposing) DA.
\end{lemma}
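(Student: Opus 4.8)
The plan is to reduce everything to the characterization of the menu established in \autoref{sec:parallen-hpda}. By the proof of \autoref{thrm:phpm-correctness}, an institution $h$ lies on $d_*$'s menu in applicant-proposing DA if and only if $h$ proposes to $d_*$ during $\mathit{IPDA}(P|_{d_*:\emptyset})$, the institution-proposing run in which $d_*$ reports the empty list (and hence rejects every proposal). So it suffices to prove that Phase~1 faithfully simulates one valid run of $\mathit{IPDA}(P|_{d_*:\emptyset})$ --- in a particular proposal order --- and that the returned set $\Hmenu$ collects precisely the institutions that propose to $d_*$ along this run. Since $\Hmenu$ is a set, it is harmless if some institution is logged more than once.

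First I would analyze Stage~1, the call $\mathit{IPDA}(\Pnorej)$. The key observation is that each auxiliary applicant $\dhold_i$ finds only $h_i$ acceptable and appears only on $h_i$'s list, in the slot formerly occupied by $d_*$; hence $\dhold_i$ never competes for any other institution, and $h_i$ is matched to $\dhold_i$ in $\mu'$ exactly when every applicant that $h_i$ prefers to $d_*$ has rejected it. Consequently, the restriction $\mu$ of $\mu'$ to the original agents is exactly the partial tentative matching that $\mathit{IPDA}(P|_{d_*:\emptyset})$ reaches if one defers (pauses) every proposal directed at $d_*$: a ``caught'' institution in $\Pnorej$ and a ``paused'' institution in the deferred run are both simply removed from further play, while all other proposals proceed identically. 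Under this correspondence, $\Hrej$ is precisely the set of institutions whose next proposal would be to $d_*$ in this partial run, so initializing $\Hmenu$ to $\Hrej$ records exactly the institutions that propose to $d_*$ during this first portion of the run (but not yet the institutions that only reach $d_*$ via a $d_*$-rejection chain).

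Next I would show that Stage~2 (the outer \texttt{while} loop) resumes each deferred proposal and thereby completes the run. For each $h \in \Hrej$, the loop lets $h$ continue proposing from the $d_*$-position onward: \textsc{NextInterestedApplicant} scans $h$'s list for the next applicant $d$ that would tentatively accept $h$ (namely $d=d_*$ or $d=\emptyset$, or an applicant matched worse than $h$ as read off from $\mu$ and $\RejDag$), and \textsc{AdjustUnrollDag} performs the corresponding DA update, displacing $d$'s former match so that it proposes onward. Whenever \textsc{NextInterestedApplicant} returns $d_*$, the institution $h$ has just proposed to $d_*$; the algorithm adds $h$ to $\Hmenu$ and, since $d_*$ rejects it, lets $h$ keep proposing. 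Following every chain to termination, Stage~2 issues exactly the proposals that Stage~1 deferred, together with all proposals they trigger. Because the set of proposals made in deferred acceptance is independent of the order in which proposals are processed --- the same fact invoked in the proof of \autoref{thrm:phpm-correctness} --- the two stages together realize the same set of proposals, and in particular the same set of proposals to $d_*$, as $\mathit{IPDA}(P|_{d_*:\emptyset})$. Hence $\Hmenu$ equals the set of institutions that propose to $d_*$ there, which by \autoref{thrm:phpm-correctness} is the menu of $d_*$.

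The main obstacle I anticipate is justifying the two simulation claims at the level of the tentative matching rather than merely the final matching: namely (a) that inserting the $\dhold_i$ sinks in Stage~1 leaves the matching among the original agents identical to a partial run of $\mathit{IPDA}(P|_{d_*:\emptyset})$ with $d_*$-proposals deferred, and (b) that the acceptance test inside \textsc{NextInterestedApplicant} --- which consults $\RejDag$ for applicants already on an active rejection chain and consults $\mu$ otherwise --- correctly reproduces ``would $d$ accept $h$?'' at each point of the resumed run. For the present lemma, however, the elaborate $\RejDag$ bookkeeping (needed only so that Phase~2 can later undo a single rejection) can be abstracted away: all that matters is that each chain is continued as a genuine institution-proposing DA proposal sequence and that every arrival at $d_*$ is logged into $\Hmenu$. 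I would therefore keep the argument at the level of ``which proposals occur,'' invoking order-independence to avoid tracking the precise interleaving.
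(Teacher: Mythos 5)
Your proposal is correct and follows essentially the same route as the paper: it identifies Phase~1 (ignoring the $\RejDag$ bookkeeping) as a valid run of $\mathit{IPDA}(P|_{d_*:\emptyset})$ in a particular proposal order, invokes the order-invariance of DA, and concludes via the characterization from \autoref{thrm:phpm-correctness} that the institutions proposing to $d_*$ are exactly the menu. The paper's own proof is a terser version of the same argument; your additional care about the $\dhold_i$ construction and the acceptance test in \textsc{NextInterestedApplicant} fills in details the paper leaves implicit, but does not change the approach.
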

\begin{proof}
  Ignoring all bookkeeping, Phase 1 of this algorithm corresponds to a run of
  $IPDA(P|_{d_* : \emptyset})$.
  The only thing changed
  is the order in which $d_*$ performs rejections, but
  DA is invariant under the order in which rejections are performed.
  Moreover, $\Hmenu$ consists of exactly all
  institutions who propose to $d$ during this process,
  i.e. $d_*$'s menu (according to \autoref{sec:parallen-hpda}).
\end{proof}

The correctness of the matching, on the other hand, requires an
involved proof. The main difficulty surrounds the ``unroll DAG'' $\RejDag$,
which must be able to ``undo some of the rejections'' caused by $d_*$
rejecting different $h$.
We start by giving some invariants of the state maintained by the algorithm
(namely, the values of $\RejDag$, $\mu$, $P$, and $h$):

\begin{lemma}
  \label{lem:structureOfRejDag}
  At any point outside of the execution of {\sc AdjustUnrollDAG}:
  \begin{enumerate}[(1)]
  \itemsep0em
  \item $P$ contains all nodes in $\RejDag$ of the form $(d,h)$ (where $h$
    is the ``currently proposing'' $h\in \H$).
  \item All of the nodes in $P$ have out-degree $0$.
  \item The out-degree of every node in $\RejDag$ is at most $1$.
  \item Every source node in $\RejDag$ is of the form $(d_*, h_i)$ for some
    $h_i \in \Hmenu$.
  \item For every edge $(d_0, h_0)$ to $(d_1, h_1)$ in $\RejDag$,
    we have $\mu(d_1) = h_0$.
  \item For each $d \in \D\setminus \{d_*\}$, there is at most one node in $\RejDag$ of the form $(d, h_i)$ for some $h_i$.
  \end{enumerate}
\end{lemma}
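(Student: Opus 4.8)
The plan is to prove all six invariants simultaneously by induction on the evolution of the state $(\RejDag, \mu, P, h)$ during Phase~1 of \autoref{alg:hpda-mtr}. The relevant ``points'' are exactly those moments that lie outside a call to {\sc AdjustUnrollDag} (and hence also outside {\sc AdjustUnrollDagCollision}, which is only invoked from within it): namely, immediately before and immediately after each such call, together with the top and bottom of each iteration of the two nested while-loops. Between two consecutive such points the state is altered only by the outer-loop bookkeeping on \autoref{line:HrejPick} and the two lines following it (picking a fresh $h \in \Hrej$, adding the source node $(d_*,h)$, and resetting $P = \{(d_*,h)\}$), or by a single call to {\sc AdjustUnrollDag}. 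The base case is the first check, just before the first call: there $\RejDag$ holds the single source $(d_*,h)$, $P = \{(d_*,h)\}$, and every invariant holds trivially since there are no edges and $h \in \Hmenu$ (as $\Hmenu$ is initialized as a copy of $\Hrej \ni h$).

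For the inductive step I would first dispatch the outer-loop bookkeeping, which re-establishes all of (1)--(6): the fresh node $(d_*,h)$ with $h \in \Hrej \subseteq \Hmenu$ is a source of the required form with out-degree $0$, it becomes the unique element of $P$, and no new edges or duplicate applicants are introduced. I would then handle the three easy branches of {\sc AdjustUnrollDag}. The branch $d = \emptyset$ sets $h = \emptyset$ and leaves $\RejDag$ untouched, making (1) vacuous while preserving the rest. The branch $d = d_*$ adds $h$ to $\Hmenu$ and the source $(d_*,h)$ to both $\RejDag$ and $P$, which matches the conditions in (1), (2), and (4). The no-collision branch on \autoref{line:RejDagNoCollision} appends the new node $(d,\mu(d))$ below every element of $P$ and makes it the sole element of $P$; using (2), the old $P$-nodes go from out-degree $0$ to out-degree exactly $1$, preserving (3), while the matching swap on \autoref{line:ProposerSwitch} combined with invariant~(1) (every node of $P$ has second coordinate equal to the old proposing $h$) restores (5) along each new edge, since after the swap $\mu(d) = h$. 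Invariant~(6) holds because $d$ was absent from $\RejDag$.

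The main obstacle, deserving the most care, is the collision branch handled by {\sc AdjustUnrollDagCollision}, since it both removes and rewires nodes and is the only branch that rewrites $\mu$ in a way coupled to the DAG. Here a node $(d_1,h_1)$ already present is hit; the procedure removes the maximal chain $(d_1,h_1),\ldots,(d_k,h_k)$ in which every node past the first has a unique predecessor (\autoref{line:removeNodes}), deletes all edges into it, and inserts a single replacement $p_{\text{new}} = (d_1, h_{\text{min}})$, wired either in place of $p_1$ (the $h_{\text{min}} = h$ subcase) or below the current predecessor set (the $h_{\text{min}} = \mu(d_1)$ subcase, with the swap on \autoref{line:ProposerSwitchCyclic}). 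Invariant~(6) I would argue directly: among $d_1,\ldots,d_k$ only $d_1$ retains a node, and it gains exactly one, while $d_2,\ldots,d_k$ are deleted; since $d \neq d_*$ in this branch (the $d=d_*$ case is intercepted earlier), no $d_*$-source is touched, giving (4). For (3), the key point is that deleting the chain strictly lowers out-degrees, and the freshly added edges (from $P_1$ in the first subcase, from $P$ in the second) land on nodes whose out-degree was $0$ by (2) — here I would check, using the uniqueness of the removed chain and the disjointness of $P_1$ and $P$, that no node acquires two outgoing edges.

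The hard part will be invariant~(5), the one invariant that ties $\RejDag$ to the mutable matching $\mu$. I would verify that after the rewrite the relation $\mu(d_1') = h_0'$ holds along every surviving edge and every newly created edge into $p_{\text{new}}$. The subtle subcase is $h_{\text{min}} = \mu(d_1)$, where the swap on \autoref{line:ProposerSwitchCyclic} changes $\mu(d)$: I would show that removing the chain exactly undoes the proposals it had previously recorded (so (5) is restored on the surviving predecessors via the old values of $\mu$), while the choice of $h_{\text{min}}$ as $d_1$'s less-preferred of $\{\mu(d_1),h\}$ guarantees that $p_{\text{new}}$ encodes the correct ``old match'' for the next proposing institution. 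Invariants~(1) and~(2) then follow by tracking exactly how $P$ is updated in each subcase (addition of $p_{\text{new}}$ to $P$ versus the reset $P = P_1 \cup \{p_{\text{new}}\}$) and confirming these new $P$-elements have out-degree $0$ and comprise precisely the nodes whose second coordinate is the new proposing $h$. Establishing (5) through this matching rewrite is where the argument is most intricate, and I expect most of the writing effort to concentrate there.
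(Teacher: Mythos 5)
Your proposal takes essentially the same approach as the paper, which simply asserts that each property holds trivially at initialization and is straightforwardly preserved by every branch of {\sc AdjustUnrollDag}; your plan is a (correct and considerably more detailed) unpacking of exactly that induction over the execution, including the right observations about how invariants (1), (2), and (5) interact with the $\mu$-swaps in the no-collision and collision branches. No gap in the approach; the remaining work is only the routine branch-by-branch verification you already outline.
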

Each of these properties holds trivially at the beginning of the algorithm, and it is straightforward to verify that each structural property is maintained each time {\sc AdjustUnrollDag} runs.

We now begin to model the properties that $\RejDag$ needs to maintain as
the algorithm runs.
\begin{definition}
  At some point during the run of any institution-proposing algorithm with
  preferences $Q$,
  define the \emph{truncated revealed preferences} $\overline Q$ as exactly those
  institution preferences which have been queried so far, and assuming that
  all further queries to all institutions will return $\emptyset$ (that is,
  assume that all institution preference lists end right after those
  preferences learned so far).

  For some set of preferences $Q$
  we say the revealed truncated preferences $\overline Q$ and the pair
  $(\mu', \Dterm')$ 
  is a \emph{partial AKL state} for
  preferences $Q$ if there exists some execution order of
  \autoref{alg:AKL-alg} and a point along that execution path
  such that the truncated revealed preferences are $\overline Q$, and
  $\mu$ and $\Dterm$ in \autoref{alg:AKL-alg}
  take the values $\mu'$ and $\Dterm'$ 

  Let $Q$ be a set of preferences which does not include preference of
  $d_*$, and let $\overline Q$ a truncated revealed
  preferences of $Q$.
  Call a pair $(\mu, \RejDag)$ \emph{unroll-correct for $Q$ at 
  $\overline Q$} if
  1) $\mu$ is the result of $IPDA(\overline Q)$, and moreover, for every $h\in \Hmenu$,
  the revealed preferences $\overline Q$ and pair {\sc UnrollOneChain}$(\mu, \RejDag, h)$ 
  is a valid partial AKL state of
  preferences $(\overline Q, d_*: \{h\})$.
\end{definition}

The following is the main technical lemma we need, which inducts on the
total number of proposals made in the algorithm, and shows that
$(\mu, \RejDag)$ remain correct every time the algorithm changes their
value:
\begin{lemma}
  \label{lem:hpda-mtr-main-technical}
  Consider any moment where we query some institution's preferences list
  withing {\sc NextInterestedApplicant} in \autoref{alg:hpda-mtr}.
  Let $h$ be the just-queried institution, let $d$ be the returned applicant,
  and suppose that the truncated revealed preferences before that query are
  $\overline Q$, and fix the current values of $\mu$ and $\RejDag$.
  Suppose that $(\mu, \RejDag)$ are unroll-correct for $Q$ at $\overline Q$. 

  Now let $\overline Q'$ be the revealed preferences after adding $d$
  to $h$'s list, and let $\mu'$ and $\RejDag'$ be the updated version
  of these values after \autoref{alg:hpda-mtr} processes this proposal
  (formally, if {\sc NextInterestedApplicant} returns $d$, fix $\mu'$ and
  $\RejDag'$ to the values of $\mu$ and $\RejDag$ after the algorithm
  finishes running {\sc AdjustUnrollDag}; if {\sc NextInterestedApplicant}
  does not return $d$, set $\mu' = \mu$ and $\RejDag' = \RejDag$).
  Then $(\mu', \RejDag')$ are unroll-correct for $Q$ at $\overline Q'$. 
\end{lemma}
\begin{proof}
  First, observe that if $h$'s next choice is $\emptyset$, then the claim
  is trivially true, because $\overline Q = Q$ (and {\sc AdjustUnrollDAG}
  does not change $\mu$ or $\RejDag$).
  Now suppose $h$'s next choice is $d\ne\emptyset$, but
  is not returned by {\sc NextInterestedApplicant}. 
  This means that: 1) $d \ne d_*$, 2) $\mu(d) \succ_d h$, and 3) either $d$
  does not appear in $\RejDag$, or $d$ does appear in $\RejDag$, in which case $d$
  matched to some $h'$ such that $h' \succ_d h$. 
  Because $(\mu, \RejDag)$ are unroll-correct for $Q$ at $\overline Q$,
  and because \autoref{lem:structureOfRejDag} says that $d$ can appear at
  most once in $\RejDag$,
  the only possible match which $d$ could be unrolled to at truncated
  revealed preferences $\overline Q$ is $h'$
  (formally, if the true complete preferences were $\overline Q$, 
  then for all $h_* \in \Hmenu$, the partial AKL state 
  under preferences $(\overline Q, d : \{h_*\})$ to which we we would unroll
  would match $d$ to either $\mu(d)$ or $h'$).
  But $d$ would not reject $\mu(d)$ in favor of $h$, nor would she reject
  $h'$ in favor of $h$.
  Thus, (for all choices of $h_* \in \Hmenu$) 
  we know $h$ will always be rejected by $d$,
  and $(\mu, \RejDag)$ are already unroll-correct for $Q$ at 
  $\overline Q'$.

  Now, consider a case where 
  $h$'s next proposal $d \ne \emptyset$ is returned by 
  {\sc NextInterestedApplicant}. 
  There are a number of ways in which {\sc AdjustUnrollDAG} may change
  $\RejDag$. We go through these cases.

  First, suppose $d = d_*$. In this case, the menu of $d_*$ in $\overline
  Q'$ contains exactly one more institution than the menu in $\overline Q$,
  namely, institution $h$. Moreover, for any $h_* \in \Hmenu\setminus \{h\}$, 
  the same partial AKL state is valid under both preferences
  $(\overline Q, d : \{h_*\})$ and $(\overline Q', d : \{h_*\})$ (the only
  difference in $(\overline Q', d : \{h_*\})$ is a single additional
  proposal from $h$ to $d_*$, which is rejected; the correct value of
  $\Dterm$ is unchanged). For $h_* = h$, 
  the current matching $\mu$, modified to match $h$ to $d_*$, is a valid
  partial AKL state for $(\overline Q', d : \{h\})$, and this is exactly
  the result of {\sc UnrollOneChain} (with $\Dterm = \{d_*\}$, which is
  correct for preferences $(\overline Q', d : \{h\})$). 
  Thus, (using also the fact from \autoref{lem:structureOfRejDag}
  that $P$ contains all nodes in $\RejDag$ involving $h$),
  each possible result of {\sc UnrollOneChain} is a correct partial
  AKL state for each $(\overline Q', d : \{h_*\})$, so $(\mu',\RejDag')$ is
  unroll-correct for $Q$ at $\overline Q'$.

  Now suppose $d \notin \{\emptyset, d_*\}$ is  returned from 
  {\sc AdjustUnrollDAG}, and $d$ does not already appear in $\RejDag$.
  In this case, $h \succ_d \mu(d)$, and
  for every $h_* \in \Hmenu$, the unrolled state when
  preferences $(\overline Q, d : \{h_*\})$ will pair $d$ to $\mu(d)$.
  Under preferences $(\overline Q', d : \emptyset)$, a single additional proposal will be
  made on top of the proposals of $(\overline Q, d : \emptyset)$, namely,
  $h$ will propose to $d$ and $d$ will reject $\mu(d)$.
  However, \emph{if $h_*$ is such that $h$ is ``unrolled''} 
  (formally, if $h_*$ is such that
  {\sc UnrollOneChain}$(\mu, \RejDag, h_*)$ changes the partner of $h$)
  then $h$ cannot propose to $d$ in $(\overline Q, d : \emptyset)$ (because
  all pairs in $\RejDag$ can only ``unroll'' $h$ to partners before
  $\mu(h)$ on $h$'s list), nor in 
  $(\overline Q', d : \emptyset)$ (because $\overline Q'$ only adds a
  partner to $h$'s list after $\mu(h)$).
  Thus, for all $h_*$ such that $h$ is unrolled, the pair $(d, \mu(d))$
  should be unrolled as well. 
  On the other hand, for all $h_*$ such that $h$ is not unrolled,
  $h$ will propose to $d$ (matched to $d'$), so $d$ will match to $h$ in
  the unrolled-to state.
  This is exactly how $\mu'$ and $\RejDag'$ specify unrolling should go, as
  needed.

  \textbf{(Hardest case: {\sc AdjustUnrollDagCollision}.)}
  We now proceed to the hardest case, where 
  $d \notin \{\emptyset, d_*\}$ is  returned from 
  {\sc AdjustUnrollDag}, and $d$ already appears in $\RejDag$.
  In this case, {\sc AdjustUnrollDagCollision} modifies $\RejDag$.
  Define $p_1$, $P_1$, and $h_{\text{min}}$,
  following the notation of {\sc AdjustUnrollDagCollision}.
  Now consider any $h_* \in \Hmenu$ under preferences $\overline Q$.
  There are several cases of how $h_*$ may interact with the nodes changed
  {\sc AdjustUnrollDagCollision}, so we look at these cases and prove
  correctness. There are two important considerations which we must prove
  correct: first, we consider
  the way that {\sc AdjustUnrollDagCollision} removes nodes from
  $\RejDag$ (starting on \autoref{line:removeNodes}), and second, we
  consider the way that it creates a new node to handle $d$ 
  (starting on \autoref{line:newCollisionNodes}).

  \textbf{(First part of {\sc AdjustUnrollDagCollision}.)}
  We first consider the way {\sc AdjustUnrollDagCollision} removes nodes from
  $\RejDag$. There are several subcases based on $h_*$.
  First, suppose {\sc UnrollOneChain}$(\mu, \RejDag, h_*)$ does not contain
  $p_1$. Then, because {\sc AdjustUnrollDagCollision} only drops $p_1$ and
  nodes only descended through $p_1$, the chain unrolled by 
  {\sc UnrollOneChain}$(\mu', \RejDag', h_*)$ is unchanged until $h$.
  (We will prove below that the behavior when this chain reaches $h$ is
  correct.) Thus, the initial part of this unrolled chain remains correct
  for $Q$ at $\overline Q'$.

  On the other hand, suppose that {\sc UnrollOneChain}$(\mu, \RejDag, h_*)$
  contains $p_1$. There are two sub-cases based on $\RejDag$. First,
  suppose that there exists a pair $p \in P$ in $\RejDag$ such that $p$ is
  a descendent of $p_1$ (i.e. there exists a $p = (d_x, h) \in P$ and a
  path from $p_1$ to $p$ in $\RejDag$).
  In this case, under preferences $\overline Q$, 
  {\sc UnrollOneChain}$(\mu, \RejDag, h_*)$ would unroll to each pair in
  the path starting at $h_*$, which includes $p_1$ and all nodes on the
  path from $p_1$ to $p$.
  Under $\RejDag'$, however, \emph{none of the nodes from $p_1$ to $p$}
  will be unrolled in this case. The reason is this:
  in \autoref{alg:AKL-alg}, the path from $p_1$ to $p$, including the
  proposal of $h$ to $d_1$, form an ``improvement rotation'' when the true
  preferences are $\overline Q'$. Formally, under preferences 
  $(\overline Q', d_* : \{h_*\})$, if $d_1$ rejected $h_1$, the 
  rejections would follow exactly as in the 
  path in $\RejDag$ between $p_1$ and $p$, and finally $h$ would propose to
  $d_1$.
  \autoref{alg:AKL-alg} would then call {\sc WriteRotation}, and
  the value of $\mu$ would be updated for each $d$ on this path.
  So deleting these nodes is correct in this subcase.\footnote{
    This is the core reason why \autoref{alg:hpda-mtr} cannot ``unroll''
    to $IPDA(Q, d: \{h_i\})$---instead, it unrolls to a ``partial state of
    AKL''.
  }

  For the second subcase, suppose that there is \emph{no} path between
  $p_1$ and any $p \in P$ in $\RejDag$.
  In this case, there must be some source $(d_*, \overline h)$ in $\RejDag$
  which is an ancestor of some $p \in P$,
  and such that the path from $(d_*, \overline h)$ to $p$ does not contain
  any descendent of $p_1$. (This follows because each $p \in P$ must have
  at least one source as an ancestor, and no ancestor of any $p\in P$ can
  be descendent of $p_1$.) 
  To complete the proof in this subcase, it suffices to show that 
  at preferences $(\overline Q', d_* : \{h_*\})$, we ``do not need to
  unroll'' the path in $\RejDag$ starting at $h_*$ after $p_1$
  (formally, we want to show that if you unroll from $\mu'$ the path in
  $\RejDag$ from $h_*$ to just before $p_1$ (including the new node added
  by the lines starting on \autoref{line:newCollisionNodes}),
  then this is a partial AKL state of $Q$ at $\overline Q'$).
  The key observation is this: in contrast to preferences 
  $(\overline Q, d_* : \{h_*\})$, where pair $p_1$ is ``unrolled'',
  under preferences $(\overline Q', d_* : \{h_*\})$, we know
  \emph{$h$ will propose to $d_1$ anyway}, because $d_*$ will certainly
  reject $\overline h$ (and trigger a rejection chain leading from $(d_*,
  \overline h)$ to $h$ proposing to $d_1$).

  \textbf{(Second part of {\sc AdjustUnrollDagCollision}.)}
  We now consider the second major task of {\sc AdjustUnrollDagCollision},
  namely, creating a new node to handle $d$.
  The analysis will follow in the same way regardless of how the first part of {\sc AdjustUnrollDagCollision} executed (i.e., regardless of whether there exists a path between $p_1$ and $P$).
  The analysis has several cases.
  First, suppose $(d_*, h_*)$ is not an ancestor of any node in 
  $P_1 \cup P$ in $\RejDag$. 
  This will hold in $\RejDag'$ as well, so neither 
  {\sc UnrollOneChain}$(\mu, \RejDag, h_*)$ nor will
  {\sc UnrollOneChain}$(\mu', \RejDag', h_*)$ will not change the match of
  $d$. Instead, the match of $d$ under
  {\sc UnrollOneChain}$(\mu', \RejDag', h_*)$ will be $\mu'(d)$, which is a
  correct partial AKL state under 
  $(\overline Q', d_* : \{h_*\})$, as desired.

  Second, suppose $h_*$ is such that $(d_*, h_*)$ is an ancestor
  of some node in $P_1$ in $\RejDag$. 
  There are two subcases.
  If $h_{\text{min}} = h$, then we have $\mu(d_1) = \mu'(d_1)$, but
  when {\sc UnrollOneChain}$(\mu', \RejDag', h_*)$ is run,
  we unroll $d_1$ to $h$.
  Correspondingly, in IPDA with preferences $(\overline Q', d_*: \{h_*\})$,
  we know $d_1$ will not receive a proposal from $\mu(d_1)$ (as this match
  is unrolled in $\overline Q$) but $d_1$ will receive a proposal from $h$
  (as this additional proposal happens in $\overline Q'$ but not in
  $\overline Q$,
  regardless of whether
  this happens due to a ``rejection rotation'' of AKL, or simply due to two
  rejection chains causing this proposal, as discussed above),
  which $d_1$ prefers to the unrolled-to match under preferences 
  $\overline Q$. Thus, 
  under preferences $(\overline Q', d_*: \{h_*\})$,
  we know $d_1$ will match to
  $h_{\text{min}} = h$ in a valid partial AKL-state.
  So $(\mu',\RejDag')$ is correct for $\overline Q'$ in this subcase.
  If, on the other hand, $h_{\text{min}} = \mu(d_1)$, then in $\RejDag'$,
  {\sc UnrollOneChain}$(\mu', \RejDag', h_*)$ will
  not contain the new node $p_{\text{new}}$. However, $\mu'(d_1)=h$, and
  we know $d$ would receive a proposal from $h$
  $(\overline Q', d_* : \{h_*\})$, and would accept this proposal. 
  So $(\mu',\RejDag')$ is correct for $\overline Q'$ in this subcase.

  Third and finally, suppose $h_*$ is such that $(d_*, h_*)$ is an ancestor
  of some node in $P$ in $\RejDag$. The logic is similar to the previous
  paragraph, simply reversed.
  Specifically, there are two subcases.
  If $h_{\text{min}} = h$, then when
  preferences are $(\overline Q', d_* : \{h_*\})$,
  then $d_1$ will no longer receive a proposal from $h$, 
  but will still receive a proposal from $\mu(d_1)$.
  So $d_1$ should remain matched to $\mu(d_1)$ during 
  {\sc UnrollOneChain}$(\mu', \RejDag', h_*)$, and
  $(\mu',\RejDag')$ is correct for $\overline Q'$ in this subcase.
  If $h_{\text{min}} = \mu(d_1)$, 
  then $\mu'(d_1) = h$, and
  in $\RejDag'$, {\sc UnrollOneChain}$(\mu', \RejDag', h_*)$ 
  will contain the new node $p_{\text{new}}$, which unrolls $d_1$ to their
  old match $\mu(d_1)$.
  This is correct, because in $\overline Q$,
  according to $\RejDag$, we know $h$ will be unrolled to some previous
  match, and correspondingly, in 
  preferences $(\overline Q', d_*: \{h_*\})$, we know $d_1$ will never
  receive a proposal from $h$.
  So $(\mu',\RejDag')$ is correct for $\overline Q'$ in this subcase.

  Thus, for all cases, $(\mu', \RejDag')$ are unroll-correct for $Q$ at
  $\overline Q'$, as required.
\end{proof}

To begin to wrap up, we bound
the computational resources of the algorithm:
\begin{lemma}
  \autoref{alg:hpda-mtr} is institution-proposing and uses memory
  $\widetilde O(n)$.
\end{lemma}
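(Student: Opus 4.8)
The plan is to establish the two claims of the lemma---that \autoref{alg:hpda-mtr} is institution-proposing, and that it uses $\widetilde O(n)$ memory---separately, mirroring the proof of the analogous lemma for \autoref{alg:AKL-alg}. I expect the institution-proposing claim to be essentially syntactic, whereas the memory bound is the substantive part and will rest entirely on the structural invariants already collected in \autoref{lem:structureOfRejDag}.

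For institution-proposing, I would first observe that every query to a priority list in \autoref{alg:hpda-mtr} occurs inside {\sc NextInterestedApplicant} (during Phase~1) or inside {\sc NextAcceptingApplicant} (during Phase~2, inherited from the continued run of \autoref{alg:AKL-alg}), and that in both cases the query merely advances a single forward-only pointer down the priority list of the currently proposing institution $h$. No line of the pseudocode ever resets such a pointer: the unrolling performed by {\sc UnrollOneChain} and the manipulations in {\sc AdjustUnrollDag} and {\sc AdjustUnrollDagCollision} only rewrite the matching $\mu$ and the graph $\RejDag$, never the reading position within any institution's list. Consequently each institution's priorities are consulted in favorite-to-least-favorite order, and at most once, which is exactly the institution-proposing condition. (The single query to $d_*$'s own preferences happens once, in Phase~2, when $d_*$ is matched to her top pick from $\Hmenu$; this is the individualized-dictatorship selection step and places no constraint on how institution priorities are read.)

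For the memory bound, I would enumerate the objects the algorithm stores: the matching $\mu$, the sets $\Hrej$ and $\Hmenu$ (subsets of $\H$) and $\Dterm$ (a subset of $\D$), the predecessor set $P$, and the unroll DAG $\RejDag$. Since the auxiliary profile $\Pnorej$ enlarges the market by only $O(n)$ agents (one $\dhold_i$ per institution), $\mu$ remains a matching on $O(n)$ agents and occupies $\widetilde O(n)$ bits, while $\Hrej$, $\Hmenu$, and $\Dterm$ are subsets of $O(n)$-sized ground sets and occupy $O(n)$ bits. The crux is bounding $\RejDag$ (and $P$, whose elements are by invariant nodes of $\RejDag$). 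Here I would invoke \autoref{lem:structureOfRejDag}: since there is at most one node of the form $(d,\cdot)$ for each applicant $d\neq d_*$, and since nodes whose first coordinate is $d_*$ number at most $|\H|$ (at most one per institution), $\RejDag$ has $O(n)$ nodes; and since every node has out-degree at most $1$, it has $O(n)$ edges as well. Storing each node and edge takes $O(\log n)$ bits, so $\RejDag$ fits in $\widetilde O(n)$ bits, and summing over all stored objects gives total memory $\widetilde O(n)$.

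The main obstacle is precisely this $\RejDag$ bound: a priori the collection of recorded rejections could grow super-linearly, and it is only the invariants maintained by {\sc AdjustUnrollDag}---in particular that the DAG never records two nodes for the same non-$d_*$ applicant, and that every node has out-degree at most one---that pin its size to $O(n)$ nodes and edges. Thus essentially all the work for this lemma is discharged by \autoref{lem:structureOfRejDag}, and the remainder is the routine accounting sketched above.
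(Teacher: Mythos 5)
Your proof is correct and takes essentially the same approach as the paper's: the institution-proposing property is observed to hold by construction of the query pattern, and the memory bound reduces to bounding $\RejDag$ via the invariants of \autoref{lem:structureOfRejDag} (at most one node $(d_*,h)$ per institution and at most one node per other applicant). Your version merely spells out the routine accounting for the other stored objects in more detail than the paper does.
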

\begin{proof}
  The institution-proposing property holds by construction. To bound the
  memory, the only thing that we need to consider on top of AKL is the
  ``unroll DAG'' $\RejDag$. 
  This memory requirement is small, because there are at most $O(n)$ nodes
  of the form $(d_*, h)$ for different $h \in \H$, and by 
  \autoref{lem:structureOfRejDag}, a given applicant $d\in
  \D\setminus\{d_*\}$ can appear \emph{at most once} in
  $\RejDag$. So the memory requirement is 
  $\widetilde O(n)$.
\end{proof}

We can now prove our main result:
\begin{theorem}
  \autoref{alg:hpda-mtr} is an institution-proposing,
  $\widetilde O(n)$ memory menu-in-outcome description for DA.
\end{theorem}
\begin{proof}
  We know \autoref{alg:hpda-mtr} correctly computes the menu, and that it
  is institution-proposing and $\widetilde O(n)$ memory.
  So we just need to show that it correctly computes the final matching.
  To do this, it suffices to show that at the end of Phase 1 of
  \autoref{alg:hpda-mtr}, $(\mu, \RejDag)$ is unroll-correct
  for $Q$ at the truncated revealed preferences $\overline Q$ (for then, by
  definition, running \autoref{alg:AKL-alg} after {\sc UnrollOneChain} will
  correctly compute the final matching).

  To see this, first note that an empty graph is unroll-correct for the
  truncated revealed preference after running $IPDA(P_{\text{hold}})$, as
  no further proposals beyond $d_*$ can be made in these truncated
  preferences. Second, each time we pick an $h \in \H_*$
  on \autoref{line:HrejPick}, a single $(d_*, h)$ added to $\RejDag$ (with
  no edges) is unroll-correct for $Q$ at $\overline Q'$, by construction.
  Finally, by \autoref{lem:hpda-mtr-main-technical}, every other query to any
  institution's preference list keeps $(\mu, \RejDag)$ unroll-correct after
  the new query. So by induction, $(\mu, \RejDag)$ is unroll-correct at the
  end of Phase 1, as desired.
\end{proof}

\subsection{Supplemental Impossibility Result for DA}
\label{sec:non-local-appendix}

In this appendix, we give a supplemental impossibility result for descriptions of DA.
We prove that institution-proposing outcome descriptions---and hence institution-proposing menu-in-outcome descriptions as a special case---cannot satisfy the \emph{pick-an-object} simplicity condition of \cite{BoH20}.

\citet{BoH20} introduce the pick-an-object condition in the context of interactive mechanisms, where (informally speaking) agents are iteratively asked to pick their favorite objects from some set, and whenever the mechanism terminates, every agent is matched to their most recently picked object.
For example, in a dynamic mechanism implementing DA, applicants can be iteratively asked to pick their favorite institution from the set of all institutions they have not yet proposed to.
We consider the pick-an-object condition within the context of one-side-proposing outcome descriptions.
In this context, the condition requires that when the description terminates, every agent on the proposing side must be matched to whichever agent they proposed to most recently.

Like the linear-memory condition we use in \autoref{sec:mtr-matching}, the pick-an-object condition captures one feature of matching mechanism descriptions used to explain these mechanisms in practice.
Indeed, %
the description in \autoref{fig:applicant-linear-graphics} on \autopageref{fig:applicant-linear-graphics} that is used by the NRMP is pick-an-object, since the yellow highlighting in that figure tracks the most recent proposal of each applicant and, at the end of the description, relays the outcome matching.
However, where linear-memory is a fairly permissive desideratum concerning the amount of bookkeeping used, pick-an-object is a more restrictive desideratum concerning the manner in which the bookkeeping is updated and used.
Thus, we do not interpret our pick-an-object impossibility result as strongly as our linear-memory impossibility result, e.g., we do not argue that all small tweaks of the traditional description of DA should be pick-an-object.
Nevertheless, our pick-an-object impossibility result is quite useful: It shows a potentially-desirable class of descriptions cannot satisfy an established and intuitive simplicity condition,
and gives a specific barrier that hypothetical more-practical alternatives to our unintuitive and convoluted description in  \autoref{sec:delicate-DA-algs} would have to circumvent.

We now formally define pick-an-object, adapting the definition from \citet{BoH20} to focus on institution-proposing outcome descriptions.

\begin{definition}[Pick-an-Object]
    \label{def:pick-an-object}
   An institution-proposing outcome description is \emph{pick-an-object} if, whenever the description terminates and calculates some outcome matching $\mu$, it satisfies the following. 
   For every institution $h$, let $d_h$ be the most recently queried applicant from $h$'s preference list, i.e., if the description made $j$ queries to $h$, then $d$ is the $j$\textsuperscript{th} applicant on $h$'s preference list.
   Then, $\mu(h)=d_h$ for every institution $h$.
\end{definition}

Observe that the traditional descriptions of SD, TTC, and DA are applicant-proposing outcome descriptions that are pick-an-object (according to a definition perfectly analogous to \autoref{def:pick-an-object}, but interchanging the roles of the applicants and institutions).
DA (the applicant-optimal stable matching mechanism) has a nontrivial institution-proposing outcome description as well (\autoref{sec:DPDA-AKL-Alg}).
However, as we now show, such a description cannot be pick-an-object, giving a sense in which they cannot be simple.
Formally:

\begin{proposition}
\label{prop:pick-an-object}
No institution-proposing outcome description of DA is pick-an-object.
\end{proposition}
\begin{proof}
  Assume for contradiction that $D$ is an institution-proposing
  outcome description of DA which is pick-an-object. 
  Consider a market with institutions $h_1, h_2$ and applicants $d_1,d_2,d_3$.
  We first define preferences of three applicants as follows:
  \begin{align*}
    d_1 & : h_2 \succ h_1 \\
    d_2 & : h_1 \succ h_2 \\
    d_3 & : \text{(any complete preference list)}
  \end{align*}
  Next, we consider two possible preference lists for each of $h_1, h_2$:
  \begin{align*}
    \succ_1 & : d_1 \succ d_2 \succ d_3 &
    \succ_2 & : d_2 \succ d_1 \succ d_3 \\
    \succ_1' & :d_1 \succ d_3 \succ d_2 &
    \succ_2' & : d_2 \succ d_3 \succ d_1
  \end{align*}
  
  One can check that DA (the applicant-optimal stable matching) produces outcome matching $\mu_1$ that assigns $d_1$ to $h_2$ and $d_2$ to $h_1$ when the priorities are $(\succ_1,\succ_2)$; on any other profile of priorities among those defined above, DA has as outcome the matching $\mu_2$ that assigns $d_1$ to $h_1$ and $d_2$ to $h_2$.
  Thus, our description $D$ can know the outcome on these inputs only when it has read the second-highest-priority spot of \emph{both} $h_1$ and $h_2$.
  However, intuitively, this means that our institution-proposing description $D$ of DA cannot be pick-an-object, because the highest-priority applicant for both $h_1$ and $h_2$ must be read before we can know whether these institutions are assigned to these applicants.

  Formally, consider the execution of $D$ when institutions have priorities $(\succ_1, \succ_2)$.
  Consider the final time during this execution when $D$ learns the difference between $\succ_j$ and $\succ_j'$ for some $j \in \{1,2\}$; i.e., %
  the latest possible state $s$ during the execution of the description with priority profile $Q = (\succ_1, \succ_2)$ where the execution 
  diverges from that of some priority profile in $\bigl\{ (\succ_1', \succ_2), (\succ_1, \succ_2')\bigr\}$.
  (Note that the description must learn this difference in order to calculate DA.)
  By the symmetry in the defined preferences, it is without loss of generality to suppose that in state $s$, the description queries the preferences of applicant $1$, and thus has one successor state consistent with $Q$ and another consistent with $Q' = (\succ_1', \succ_2)$.
  However, since $D$ is institution-proposing, this means that in state $s$, the description has already read $d_1$ off the priority list of $h_1$ (and is proceeding to read either $d_2$ or $d_3$ next).
  Since $D$ is pick-an-object, this means that $h_1$ cannot match to $d_1$ in any the final outcome matching of any execution of $D$ consistent with $s$.
  But this is a contradiction, since $h_1$ must match to $d_1$ in DA when the priority profile is $(\succ_1', \succ_2)$. This finishes the proof.
\end{proof}

\autoref{prop:pick-an-object} directly implies that DA has no institution-proposing menu-in-outcome description satisfying the pick-an-object condition (since such a description is, in particular, an outcome description).
Combined with our robust main impossibility result (\autoref{sec:mtr-matching}), this establishes precise impossibilities for simple one-side-proposing menu-in-outcome descriptions of DA:
Such applicant-proposing descriptions cannot be linear-memory, and such institution-proposing descriptions cannot be pick-an-object.

\end{document}